\documentclass[12pt]{article}
\usepackage[affil-it]{authblk}
\usepackage{multirow}
\usepackage{comment}
\usepackage[normalem]{ulem}
\usepackage{titling}
\usepackage{subcaption}
\usepackage{epigraph}
\usepackage{graphicx}
\usepackage{color}
\usepackage{enumitem}
\usepackage{algorithm}
\usepackage[noend]{algpseudocode}
\usepackage{mdframed}
\usepackage{xr-hyper}
\usepackage[round]{natbib}
\usepackage{amsmath}
\usepackage{amssymb}
\usepackage{tikz}
\usetikzlibrary{arrows.meta,positioning}
\usepackage{amsthm}
\usepackage[centering,marginparwidth=0.9cm]{geometry}
\usepackage{etoolbox}
\usepackage{lipsum}
\usepackage{fullpage}
\usepackage[colorlinks,citecolor=northwestern-purple,urlcolor=chicago-maroon,linkcolor=chicago-maroon,backref=page]{hyperref}
\usepackage[nameinlink]{cleveref}
\usepackage{setspace}

\setlist[enumerate]{leftmargin=1.5cm,rightmargin=0.5cm,noitemsep, topsep=2pt}

\makeatletter
\newcommand*{\addFileDependency}[1]{
  \typeout{(#1)}
  \@addtofilelist{#1}
  \IfFileExists{#1}{}{\typeout{No file #1.}}
}
\makeatother

\definecolor{clemson-orange}{RGB}{234,106,32}
\definecolor{chicago-maroon}{RGB}{128,0,0}
\definecolor{northwestern-purple}{RGB}{82,0,99}
\definecolor{cornell-red}{RGB}{179,27,27}
\definecolor{sauder-green}{RGB}{171,180,0}
\definecolor{gray}{RGB}{192,192,192}
\definecolor{lawngreen}{RGB}{0,250,154}

\makeatletter
\def\BState{\State\hskip-\ALG@thistlm}
\makeatother
\allowdisplaybreaks

\newcommand{\E}{\bb E}

\theoremstyle{definition}
\newtheorem{fact}{Fact}
\newtheorem{theorem}{Theorem}
\newtheorem{lemma}{Lemma}
\newtheorem{corollary}{Corollary}
\newtheorem{proposition}{Proposition}

\newtheorem{definition}{Definition}

\newtheorem*{remark*}{Remark}

\theoremstyle{definition}

\makeatletter
\patchcmd{\@addmarginpar}{\ifodd\c@page}{\ifodd\c@page\@tempcnta\m@ne}{}{}
\makeatother
\crefname{assumption}{Assumption}{Assumptions}
\crefname{lemma}{Lemma}{Lemmas}
\crefname{theorem}{Theorem}{Theorems}
\crefname{corollary}{Corollary}{Corollaries}
\crefname{proposition}{Proposition}{Propositions}
\crefname{claim}{Claim}{Claims}
\crefname{procedure}{Procedure}{Procedures}
\crefname{algorithm}{Algorithm}{Algorithms}
\crefname{figure}{Figure}{Figures}
\crefname{remark}{Remark}{Remarks}
\crefname{section}{Section}{Sections}
\crefname{procedure}{Procedure}{Procedures}
\crefname{example}{Example}{Examples}
\crefname{definition}{Definition}{Definitions}
\crefname{table}{Table}{Tables}
\crefname{equation}{}{}
\crefname{enumi}{}{}
\crefname{conjecture}{Conjecture}{Conjectures}
\crefname{step}{Step}{Steps}

\def \b{\beta}

\def \T{\Theta}

\def \E{\mathbb{E}}

\def \ll{\lower1.6truept\hbox{${{\scriptstyle =\atop \scriptstyle <}}$}}
\def \gl{\lower1.6truept\hbox{${{\scriptstyle >\atop \scriptstyle
=}\atop{\scriptstyle <}}$}}
\def \lg{\lower1.6truept\hbox{${{\scriptstyle <\atop \scriptstyle =}\atop
{\scriptstyle >}}$}}

\thanksmarkseries{arabic}

\input{tcilatex}
\begin{document}

\title{\textbf{Top Trading Cycles in Large Markets: The Asymptotic Irrelevance of Priorities}%
\thanks{Earlier versions of this paper circulated under the title ``An Analysis of Top Trading Cycles in Two-sided Matching Markets.'' The present paper supersedes those earlier versions.}}
\author{
Yeon-Koo Che\thanks{%
Department of Economics, Columbia University, USA. Email: \href{mailto: yeonkooche@gmail.com}%
{\texttt{yeonkooche@gmail.com}}.} \, \and \, Olivier Tercieux\thanks{%
Department of Economics, Paris School of Economics, France. \ Email: \href{mailto: tercieux@pse.ens.fr}%
{\texttt{tercieux@pse.ens.fr}}.}
}
\date{\today \endgraf }
\maketitle

\begin{abstract}
Top Trading Cycles (TTC) is Pareto efficient and strategy-proof and explicitly uses agents' priorities. Although TTC favors higher-priority agents in each round, we show that this priority advantage vanishes as the market grows large under a canonical random model of preferences and priorities. In the limit, TTC produces assignments with virtually the same incidence of justified envy as Random Serial Dictatorship (RSD)---a mechanism entirely blind to priorities. This stark asymptotic equivalence implies that TTC effectively fails to satisfy standard fairness criteria in large markets, casting significant doubt on its practical appeal for balancing efficiency and fairness.

\vskip0.2cm \noindent \textbf{JEL Classification Numbers}: C70, D47, D61,
D63.\newline
\textbf{Keywords:} Prioritized matching, Markov property, irrelevance of
priorities in TTC
\end{abstract}

\section{Introduction}

In resource allocation problems with indivisible objects, agents are assigned objects based on both their preferences and their priorities. This framework captures many critical real-world scenarios, including the assignment of students to public schools, tenants to public housing, and human organs to transplant patients.  Agents' priorities—typically represented by object-specific rankings—reflect merit or other policy goals, such as test scores and walk zones in school choice, or waiting time and medical urgency in deceased-donor organ allocation. While achieving a Pareto efficient allocation is a clear desideratum in any market, respecting these priorities is an equally important policy objective. Equivalently, mechanisms strive to eliminate justified envy—a situation in which an agent envies another despite having a higher priority for the assigned object (\cite{balinski/sonmez:99} and \cite%
{abdulkadiroglu/sonmez:03}).\footnote{%
Respecting priorities is also part of the stability requirement of \cite%
{gale/shapley:62}.}

When efficiency is the primary objective,\footnote{Efficiency is fundamentally incompatible with strictly respecting priorities (\cite{roth:82}). If respecting priorities is instead the primary objective, the agent-proposing Deferred Acceptance algorithm of Gale and Shapley achieves this goal with the minimal sacrifice to efficiency: it produces a non-wasteful assignment that respects priorities and Pareto dominates all other non-wasteful, priority-respecting assignments.} economists typically recommend the (prioritized) Top Trading Cycles (TTC) algorithm for assigning agents to objects. Following \cite{abdulkadiroglu/sonmez:03}, who formally introduced this mechanism to the school choice literature, TTC was either adopted or considered as a serious candidate in  school assignment reforms in districts such as  New Orleans and Boston.\footnote{TTC was recommended by a Student Assignment Task Force in Boston in 2005. It was also used until recently in the New Orleans school system to assign students to public high schools. A version of TTC is currently utilized in France for assigning teachers to primary schools (see \cite{cdttu:22}). Furthermore, generalizations of TTC have been applied to kidney exchange among donor-patient pairs with incompatible donors (see \citet{Sonmez-Unver:11}).}


The TTC algorithm operates in successive rounds. In each round, agents point to their most preferred remaining objects, while objects point to the highest-priority remaining agents. Whenever a cycle forms, the agents associated with the cycle are assigned the objects they pointed to, and both are subsequently removed from the market.\footnote{A formal description is provided in   \Cref{sec:model}. The prioritized TTC mechanism studied here differs from the classical Shapley--Scarf TTC, whose primitives rely on initial ownership rights rather than institutional priorities (see \cite{shapley/scarf:74}).} While TTC is both Pareto efficient and strategy-proof, it is not the only mechanism satisfying these properties; Random Serial Dictatorship (RSD) is arguably its most prominent alternative. RSD randomly orders the agents and allows them to sequentially select their top choices among the remaining objects. Although both mechanisms guarantee efficiency, they diverge fundamentally in their treatment of priorities: TTC explicitly embeds them into the allocation process, whereas RSD ignores them entirely. By construction, TTC favors agents with high priorities. If an agent holds the highest priority for her most preferred remaining object at any stage of the procedure, TTC guarantees that she will receive it. This effectively eliminates justified envy for that object, as no lower-priority agent can subsequently claim it.\footnote{Any agent who envies her must still be present at that stage of the TTC algorithm, and therefore necessarily has a strictly lower priority for that object.} Exploiting this exact property, \cite{acprt:17} demonstrate that, in one-to-one matching environments, TTC minimizes justified envy among all strictly Pareto-efficient and strategy-proof mechanisms.\footnote{\cite{Dogan/Ehlers:22} extend the \cite{acprt:17} minimality result to a broader class of stability comparisons.}

However, this theoretical minimality leaves open a crucial quantitative question: to what extent does TTC \emph{actually} respect priorities? How substantial is TTC's practical advantage over priority-blind mechanisms---like RSD---in reducing justified envy? Although concerns that TTC might sacrifice priorities too heavily have historically limited its appeal in policy circles (e.g., \cite{pathak;16}), the literature has largely overlooked the magnitude of this fairness loss. A central contribution of this paper is to rigorously quantify the fairness cost associated with adopting this widely recommended efficient mechanism. Surprisingly, we show that in a canonical large-market environment, TTC's advantage in respecting priorities effectively disappears.


We consider a one-to-one matching market with $n$ agents and $n$ objects, where agents' preferences and priorities for objects are drawn uniformly at random. We show that as the market grows large ($n\to \infty$), the outcome of TTC, in terms of the joint distribution of agents' preference ranks \emph{and} their realized priority ranks, becomes asymptotically equivalent to the corresponding priority-blind benchmark.  This implies that, in large markets, TTC is virtually no better than RSD at respecting priorities. More precisely, the proportion of agents with justified envy (or those whose priorities are respected) under TTC becomes indistinguishable from that under RSD, both on average and in a probabilistic sense, as the market grows large.  

The reason for this striking result is the way TTC uses priorities to allocate objects. If an object is assigned via a \emph{short cycle} (a cycle of length two, where an agent points to an object and the object points back to her), it is impossible for an agent's priority for that object to be violated. However, if an object is assigned via a \emph{long cycle} (a cycle of length more than two), the acquiring agent has no inherent reason to have a high priority for that object. She acquires the object because she has a high priority for another object that she traded off. In fact, any envy towards an agent assigned via a long cycle is likely ``justified'' with a probability of no more than one-half, just as with RSD. Our irrelevance result stems from the fact that as the market grows ($n\to\infty$), the proportion of agents assigned via short cycles vanishes and their priority ranks are virtually uniform random just as under the RSD.

These results carry significant policy implications. Although economists frequently advocate adopting TTC in school choice settings, policymakers have historically hesitated to do so. This reluctance stems primarily from two concerns, vividly illustrated by the Boston Public Schools case (see \cite{abdulkadiroglu/pathak/roth/sonmez:06} and \cite{sonmez:23}). First, Boston officials viewed TTC's potential for widespread instances of justified envy as a major drawback. They were uncomfortable with students effectively ``trading their priorities," perceiving it as a source of substantial unfairness. Second, while TTC is theoretically strategy-proof, its mechanics may not be obvious or straightforward for participants to navigate.\footnote{%
\cite{li2017obviously} defines the notion of an extensive-form implementation of a strategy-proof mechanism that makes the strategy-proofness obvious to participants. It is well known that the RSD has an obviously strategy-proof (OSP) implementation (i.e., the sequential version of RSD), whereas TTC has no OSP implementation.} Although recent progress has made TTC more transparent,\footnote{%
\cite{leshno/lo:17}, \cite{gonczarowski2023strategyproofness}, and \cite{Katuscak:24} have shown how TTC can be presented in ways---either through appropriate cutoffs or through a ``budget'' set of choices---that participants find more straightforward to understand and more transparent to play using truthful-reporting strategies.} the severity of
the first concern—namely, the extent of the fairness and stability losses relative to RSD as
a worst-case benchmark—has remained unclear. This paper directly addresses that gap by quantifying TTC's fairness loss in a canonical random environment.  

While the asymptotic irrelevance of priorities may seem intuitive in hindsight, formally proving it requires a precise characterization of how TTC allocates objects within our random model. A crucial step in our analysis involves establishing a novel Markov property:   the
number of  objects assigned in each round of TTC follows a simple Markov chain, depending
solely on the number of agents and objects present at the beginning of that round.    This Markov characterization enables us to prove that TTC terminates in 
a number of rounds that is sublinear in market size. Furthermore, we establish that the expected number of objects assigned via short cycles never exceeds two per round. Together, these findings indicate that the proportion of objects assigned via short cycles approaches zero asymptotically, ultimately rendering priorities irrelevant in large markets.

We view this Markov characterization of TTC as our second main contribution—one that is independent of our central irrelevance result and potentially valuable for broader market design applications. This characterization is theoretically significant because it provides a precise understanding of the aggregate TTC process.  As we detail in our discussion of the related
literature, our Markov characterization strengthens the celebrated equivalence result between
RSD and TTC \citep{knuth:96, abdulkadiroglu/sonmez:98}. While this latter literature establishes equivalence for the agent-side assignment/rank distribution for any finite economy, our current irrelevance result shows that the equivalence extends to the \emph{joint} distribution of preference ranks enjoyed by the agents and their realized priorities, in a large economy.\footnote{%
For instance, the distribution of priority ranks (i.e., for an object, the rank of the assigned agent in that object’s priority ordering) has not been characterized before. We characterize this distribution which leads to the irrelevance result described above as well as to the asymptotic instability of TTC in the large-market environment of \cite{che/tercieux:15}.}

 The current paper relates to several strands of literature. First, it contributes to the literature studying the tradeoff between efficiency and the elimination of justified envy, particularly in the school choice context. This tradeoff was first recognized by \cite{roth:82}, and subsequently confirmed by \cite{abdulkadiroglu/sonmez:03} in the school choice context, and by \cite{abdulkadiroglu/pathak/roth:09} in the context of weak priorities. \cite{che/tercieux:15} argue that this tradeoff remains significant under standard mechanisms (such as DA or TTC) even in a large market if there is sufficient correlation in agents’ preferences. Furthermore, \cite{acprt:17} demonstrate the sense in which TTC minimizes justified envy (or maximizes priority-respecting) within the class of Pareto-efficient and strategy-proof mechanisms in one-to-one matching settings. Viewed alongside the irrelevance result of the current paper, their finding suggests that our asymptotic irrelevance might be driven fundamentally by the strict requirements of Pareto efficiency and strategy-proofness, rather than by a specific feature of TTC itself.\footnote{%
The envy-minimality result of \cite{acprt:17} does not rule out the possibility of there being a Pareto-efficient and strategy-proof mechanism which may or may not be envy-minimal but reduces the incidences of justified envy significantly relative to RSD. Whether such a mechanism exists remains an open question.}

Second, as noted earlier, our irrelevance result is closely related to the celebrated equivalence among well-known random allocations recognized by several authors (\cite{knuth:96}, \cite{abdulkadiroglu/sonmez:98}, \cite{pathak/sethuraman:11}, \cite{carroll:14}, \cite{bade:14}). These authors show that for any finite economy, a class of random allocations—including TTC and RSD—implement the same agent-side random assignment, and hence the same distribution of agents' preference ranks. However, these results apply only to the {\it marginal} distribution of the preference ranks enjoyed
by agents. They are silent on the {\it joint} distribution of agents’ preference ranks and their
realized priorities under alternative mechanisms. As illustrated in our example, this joint distribution is exactly what dictates the extent to which agents’ priorities are respected and the extent to which they justifiably envy others. In a finite market, TTC and RSD are distinctly not equivalent in this regard. Nevertheless, we show that this equivalence is remarkably restored as the market grows large. Our result can therefore be viewed as a strengthening of the classic equivalence theorem. We demonstrate that this equivalence extends to the \emph{joint} distribution of priority and preference ranks in large economies where, in addition to priorities being drawn uniformly at random as in the classical setting, preferences are also drawn uniformly at random.

Third, our Markov characterization of TTC is closely related to a similar Markov characterization of the Shapley-Scarf TTC derived by \cite{frieze/pittel;95}.\footnote{%
Another related work is \cite{leshno/lo:17}, which studies TTC in a large market but with a very different asymptotics where the number of object types is finite while there are a continuum of copies/seats for each object type and a continuum of agents with finite preference types. This distinction makes the analysis largely unrelated. Nevertheless, we discuss how our irrelevance result extends to this alternative asymptotics (see  \Cref{sec:Discussion}).} In the Shapley-Scarf economy, agents hold \emph{property rights} over objects—exactly one object for each agent—and are allowed to trade their rights along cycles in successive rounds. Despite the close resemblance, the two mechanisms are distinct. The associated (``pointing'') map from objects to agents is always bijective (one-to-one) in the Shapley-Scarf TTC but not in the prioritized TTC, where distinct objects may point to the same agent. This structural difference leads to entirely different probabilistic dynamics in the associated composite map within our random economy, necessitating a novel approach and different arguments. The concepts of random spanning forests and random composite maps, which to our knowledge have never been applied in economics, prove crucial in our analysis and offer a methodological tool potentially useful for other economic applications.

The remainder of the paper is organized as follows.  \Cref{sec:example} uses a simple example to illustrate the main irrelevance result.  \Cref{sec:model} introduces the formal model and preliminary analytical tools. \Cref{sec:Markov Property} provides the Markov characterization.  \Cref{sec:irrelevance} presents the asymptotic irrelevance result and its implications.   \Cref{sec:Discussion} discusses the robustness and limitations of our findings.  In particular, we suggest that the irrelevance result holds much
more generally beyond the uniform iid draws of preferences and priorities.  However, we also note that irrelevance does not extend to environments where the number of copies for each object type grows fast; in such cases, TTC performs significantly better than RSD in reducing justified envy, consistent with \cite{acprt:17}.   

\section{Example}
\label{sec:example}

To see precisely why TTC's ability to respect priorities vanishes in large markets, we must first understand the micro-mechanics of how it assigns objects. Specifically, this example illustrates how TTC successfully eliminates justified envy when objects are assigned via \emph{short cycles}, but fails to do so when assignments occur via \emph{long cycles}. 

Consider a simple (deterministic) market with three agents ($1, 2, 3$) and three objects ($a, b, c$). We represent a round of TTC using a bipartite directed graph: agents point to their most preferred objects, and objects point to their highest-priority agents.

Consider first a configuration that generates short cycles:

\begin{center}
\begin{minipage}{0.72\textwidth}
\centering


\begin{tikzpicture}[>=Stealth, node distance=1.6cm]
\node[font=\small\bfseries] at (0,1.3) {Agents};
\node[font=\small\bfseries] at (4.6,1.3) {Objects};

\node (1) at (0,0.5) {$1$};
\node (2) at (0,-0.8) {$2$};
\node (3) at (0,-2.1) {$3$};

\node (a) at (4.6,0.5) {$a$};
\node (b) at (4.6,-0.8) {$b$};
\node (c) at (4.6,-2.1) {$c$};

\draw[->, bend left=18] (1) to (a);
\draw[->, bend left=18] (a) to (1);

\draw[->, bend left=18] (2) to (b);
\draw[->, bend left=18] (b) to (2);

\draw[->] (3) -- (a);
\draw[->] (c) -- (3);
\end{tikzpicture}
\end{minipage}
\end{center}

In this configuration, agents $1$ and $2$ point to their top choices ($a$ and $b$, respectively), and those objects point directly back to them. This mutual pointing forms the short cycles $(1,a)$ and $(2,b)$, which clear immediately. The critical observation here is that an object assigned via a short cycle strictly goes to the agent to whom it points---its highest-priority remaining claimant. Consequently, it is impossible for the assignment to violate priorities. No other remaining agent can have a stronger claim, completely eliminating the possibility of justified envy for that object.

Now consider a configuration that generates a long cycle:

\begin{center}
\begin{minipage}{0.72\textwidth}
\centering


\begin{tikzpicture}[>=Stealth, node distance=1.6cm]
\node[font=\small\bfseries] at (0,1.3) {Agents};
\node[font=\small\bfseries] at (4.6,1.3) {Objects};

\node (1) at (0,0.5) {$1$};
\node (2) at (0,-0.8) {$2$};
\node (3) at (0,-2.1) {$3$};

\node (a) at (4.6,0.5) {$a$};
\node (b) at (4.6,-0.8) {$b$};
\node (c) at (4.6,-2.1) {$c$};

\draw[->] (1) -- (b);
\draw[->] (b) -- (2);
\draw[->] (2) -- (a);
\draw[->] (a) -- (1);

\draw[->] (3) -- (a);
\draw[->] (c) -- (3);
\end{tikzpicture}
\end{minipage}
\end{center}

Here, TTC clears the long cycle $(1,b,2,a)$. Notice the mechanics of this trade: agent $1$ receives object $b$ not because she holds a high priority for it, but because she leverages her priority at $a$ to execute a trade. Similarly, agent $2$ acquires $a$ simply by participating in the cycle, bypassing the agent to whom $a$ actually points (agent $1$). Crucially, agent $2$'s priority for $a$ is completely irrelevant to her receiving it. Her priority could very well be lower than that of agent $3$, who also prefers $a$; in such a case, the assignment leaves agent $3$ justifiably envying agent $2$. In short, when objects are assigned via long cycles, priorities are effectively ignored or rendered irrelevant in a manner strikingly similar to RSD.

This contrast between short and long cycles underlies the observation by \cite{acprt:17} that TTC performs better than RSD at eliminating justified envy. TTC's theoretical advantage lies precisely in the assignment of objects via short cycles. The critical question, therefore, is how many agents and objects are actually assigned through short versus long cycles.

This is where the large-market asymptotics become central. Our main result demonstrates that as the market grows large, the fraction of objects assigned via short cycles completely vanishes. Because virtually all assignments are eventually made through long cycles, the mechanism's ability to respect priorities diminishes correspondingly.

To make this intuition more precise, consider a market with $n\geq 2$ agents and $n$ objects, where agents' preferences and priorities are drawn uniformly at random. In this environment, both RSD and TTC are Pareto efficient and induce the same probability distribution over assigned objects for each agent (\cite{pathak/sethuraman:11}; \cite{carroll:14}). However, when $n$ is small, TTC noticeably outperforms RSD in reducing justified envy, reflecting the significant role of short cycles in finite markets.

 \Cref{fig} illustrates how this difference evaporates as the market grows. Panel (a) shows the incidence ratio of justified envy over envy, or simply {\it the incidence ratio}.\footnote{The incidence ratio of justified envy over envy is calculated as the total number of individuals an agent justifiably envies, summed across all agents, and divided by the total number of envies (justified or not) experienced by all agents. This measure reflects the intensity of justified envy rather than simply the proportion of agents experiencing it. Alternative measures, such as the fraction of agents with justified envy or the fraction of blocking pairs, could be used, but they are less informative in this context.  For example, the fraction of blocking pairs converges to zero even for RSD when the number of objects equals the number of agents, while the agent-level fraction does not record how many justified-envy incidences a given agent has. Our chosen measure, therefore, captures the intensity of justified envy more directly; see \Cref{sec:Discussion}.  }  On the incidence-ratio metric in Panel (a), TTC's simulated performance converges to that of RSD.  As illustrated in  \Cref{fig}, TTC initially generates lower levels of justified envy and better realized priority ranks (i.e., for a given object, the assigned agent’s position in that object’s priority ordering). However, this advantage shrinks dramatically as the market grows, precisely because the short cycles through which TTC enforces priorities become increasingly rare.

Formally proving that short cycles vanish  requires considerable care. The difficulty is that the agents and objects remaining in later rounds of TTC are not a random sample of the original population. Rather, they are selected endogenously through the earlier realizations of the algorithm, meaning their preferences and priorities are conditioned on the history of play. This conditioning breaks the simple independence structure present at the outset and makes standard probabilistic arguments inapplicable. Overcoming this difficulty requires a more precise characterization of the stochastic process induced by TTC, which we develop in the next section via a Markov representation.

\begin{figure}[h]
\begin{tabular}{cc}
\includegraphics[width=8cm,height=8cm]{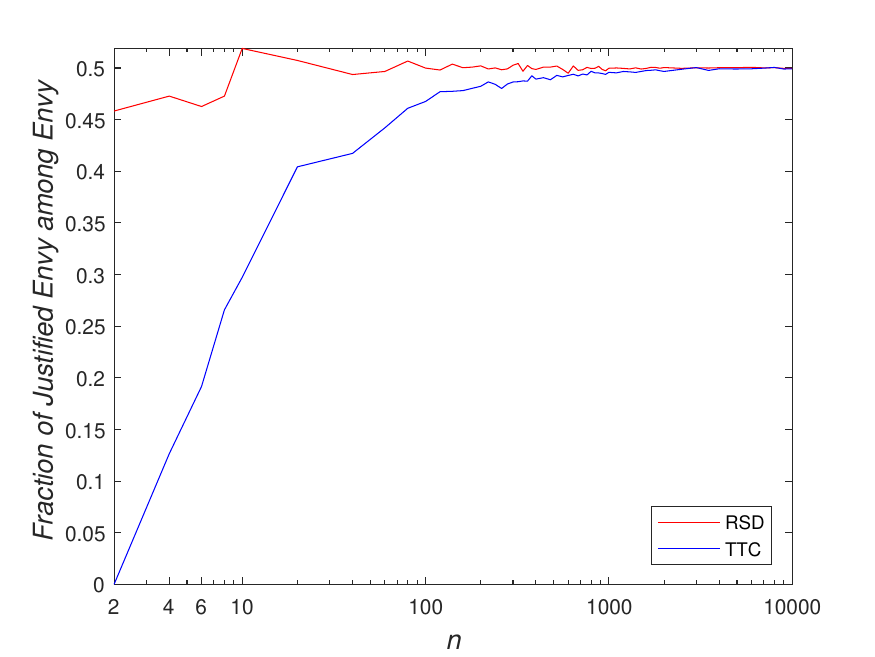} &
\includegraphics[width=8cm,height=8cm]{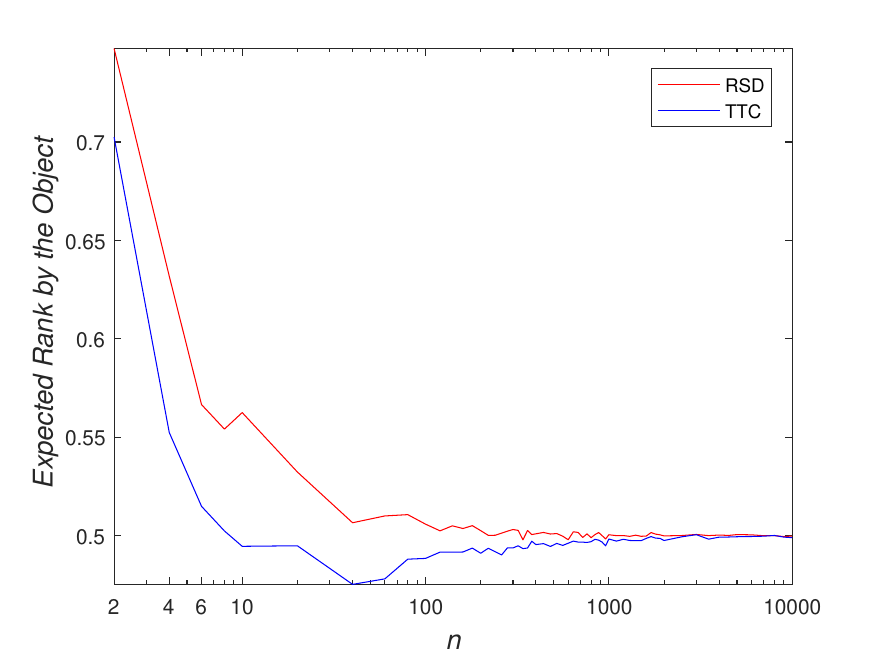} \tabularnewline
{\footnotesize (a) Incidences of justified envy among incidences of envy} &
{\footnotesize (b) Expected value of realized priority ranks}
\end{tabular}

\smallskip

{\footnotesize Note: The horizontal axis indicates the size of the market, i.e., the number of agents/objects. The figures plot averages over 100 (uniform iid) draws of agents' preferences and objects' priorities.}

\caption{Difference between TTC and RSD.}
\label{fig}
\end{figure}

\section{Model} \label{sec:model}

We consider a market with a set of agents ($I$) and a set of objects ($O$). The market may be unbalanced, meaning the number of agents and objects need not be equal. A \textbf{matching} is a map $\mu: I\to O\cup \{\o \}$ such that $i\ne i^{\prime }$ implies that either $\mu(i)\ne \mu(i^{\prime })$ or $\mu(i)=\mu(i^{\prime })={\o }$, where $\o $ denotes being unmatched. In  \Cref{sec:irrelevance}, we will focus on large markets, assuming an equal number of agents and objects ($|I|=|O|=n$), and analyze the outcomes as $n$ approaches infinity.

Each agent $i\in I$ has a strict preference ranking $P_i$ over the objects, and each object $o\in O$ has a strict priority ranking $\succ_o$ over the agents. We assume that all agents find all objects acceptable, and vice versa.\footnote{This is for convenience. Our result holds more generally if the number of acceptable partners grows linearly in $n$.} Let $(P,\succ):=(P_i, \succ_o)_{i\in I, o\in O}$ be a profile of agents' preferences and objects' priorities.

Throughout the paper, we consider a \textbf{random market} $(I,O,\tilde P, \tilde \succ)$ in which the profile $(\tilde P, \tilde \succ)$ is drawn \textit{iid} uniformly. Analyzing the limit properties of matching mechanisms under arbitrary preferences and priorities is generally intractable due to the curse of dimensionality; the number of possible preference and priority profiles increases exponentially as the market grows large. Consequently, it is standard practice in the large-market literature to analyze finite economies as empirical samples where types are drawn at random (see, e.g., \cite{wilson:72}, \cite{knuth/motwani/pittel:90}, \cite{pittel:92}, and \cite{ashlagi/kanoria/leshno:13}).\footnote{This uniform random type assumption enables the use of probabilistic methods that would otherwise be inapplicable.}

There are at least two ways to interpret this randomness in a real-world context such as school choice. The most natural interpretation is that a cohort of applicants each year draws their types (preferences and priorities) from this uniform iid distribution. Importantly, we do not view an actual cohort's preferences and priorities as being literally random; for any given cohort, they are fixed at a particular realization. Rather, the randomness serves as an analytical device to study the \textit{average} performance of the economy over many possible realizations. In particular, we focus on properties that hold with high probability in a sufficiently large market. Alternatively, the randomness can be viewed simply as a tool for evaluating the average performance of mechanisms—such as the average incidence of justified envy—across all possible profiles of student preferences and priorities. While the uniform iid assumption serves as an important canonical benchmark, we explain in  \Cref{sec:Discussion} how our results extend beyond this baseline.

In the case of RSD, there is additional randomness due to the mechanism's use of a random lottery. To accommodate this, we define a \textbf{state} $\omega$ as consisting of a profile $(P,\succ)$ of preferences and priorities, as well as a random serial order $\theta$ (a permutation of $I$). A \textbf{mechanism} is a mapping from each state to a matching. Given our random economy, randomness may arise from the types $(P, \succ)$ or the serial order $\theta$. Fix a realized state $\omega=(P, \succ, \theta)$. RSD and TTC are defined as follows:


\medskip  $\square$ \textbf{RSD:} For the realized serial order $\theta$ in the state $\omega$, at Step 1, the agent $\theta(1)$ selects her most preferred object according to $P_{\theta(1)}$ and exits the market. At each subsequent Step $t\geq 2$, agent $\theta(t)$ selects her most preferred object according to $P_{\theta(t)}$ from among those that remain. The procedure terminates once all individuals or all objects have been assigned. The RSD mechanism applies this procedure for every preference profile $(P,\succ)$ and every realization of the random order $\theta$. Note that RSD completely ignores the priority profile $\succ$.

\medskip
$\square$ \textbf{TTC:} In Round $t=1$, each individual $i \in I$ points to her most preferred object according to $P_i$. Each object $o \in O$ points to the individual who has the highest priority for that object according to $\succ_o$. Since the sets of individuals and objects are finite, the resulting directed graph contains at least one cycle. Every individual who belongs to a cycle is assigned the object she points to. All assigned individuals and objects are then removed. In Round $t\ge 2$, the same procedure is applied to the agents and objects remaining at the beginning of that round. The algorithm proceeds to Round $t+1$ unless all individuals or all objects have been assigned. This mechanism terminates in finite rounds because there is a finite set of agents, and at least one individual is removed at the end of each round. The TTC mechanism selects a matching via this algorithm for all possible profiles of agents' preferences and priorities. Note that TTC uses both $P$ and $\succ$ (and ignores $\theta$).

As noted, both the RSD and TTC mechanisms are \textbf{Pareto efficient}; that is, for each profile, the matching produced cannot be improved upon by an alternative matching that makes all individuals weakly better off and some strictly better off. Note that Pareto efficiency is defined solely with respect to agents' welfare. Furthermore, both mechanisms are \textbf{strategy-proof}, meaning it is a dominant strategy for each agent to report her preferences truthfully.\footnote{By reporting truthfully, each agent obtains a lottery over assignments that stochastically dominates any lottery she could obtain by misreporting her preferences.}

In fact, when priorities are drawn randomly, the two mechanisms are identical from the agents' perspective. For each preference profile $P$, the random allocations induced by the two mechanisms yield an identical lottery for each agent (\cite{pathak/sethuraman:11}; \cite{carroll:14}). However, this equivalence does not extend to the allocation from the perspective of the objects. To precisely compare the mechanisms along this dimension, we must understand the probabilistic structure of the allocation process in TTC.

\section{Markov Chain Property of TTC}
\label{sec:Markov Property}

Our first main result is a Markov chain characterization of TTC: the numbers of   agents and objects  remaining at the end of each round follow a simple Markov chain that depends only on
the numbers of agents and objects at the beginning of that round, according to well-defined
transition probabilities. The theorem is stated in a general setting, with the possibility of an imbalance between the number of agents and objects.

\begin{theorem}\label{Markov} Denoting $n_{t}$ and $o_{t}$ as the number of individuals and objects remaining in the market at any round $t$, the random sequence $\{(n_{1},o_{1}), (n_{2},o_{2}), ...\}$ is a Markov chain. If any round of TTC begins with $n$ agents and $o$ objects remaining in the market, the probability that there are $m\leq \min \{o,n\}$ agents assigned at the end of that round is\begin{equation*}p_{n,o;m}=\left( \frac{m}{(on)^{m+1}}\right) \left( \frac{n!}{(n-m)!}\right)\left( \frac{o!}{(o-m)!}\right) (o+n-m).\end{equation*}\end{theorem}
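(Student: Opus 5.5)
The plan has two parts. First, a principle of deferred decisions shows that at the start of every round the residual market is, conditional on the history, again uniformly random. Second, a combinatorial count of pointing maps whose cycles cover exactly $m$ agents gives the transition probabilities.

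\emph{Step 1 (conditional uniformity).} Fix the history of TTC up to the start of round $t$: the sets of assigned agents and objects in each earlier round, and the pointing graphs of earlier rounds. The plan is to show that, conditional on this history, the restriction of each remaining agent's preference to the remaining objects is uniform. The same should hold for each remaining object's priority restricted to the remaining agents, and all these restrictions should be mutually independent.

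The argument is the usual deferred-decisions one. For a remaining agent, the earlier rounds reveal only that her top choices in previous rounds were objects that have since been removed. This is an event about the relative position of removed objects versus the whole remaining set; it says nothing about the internal order among remaining objects. By symmetry of the uniform distribution, that internal order stays uniform and independent of everything else. The same reasoning applies to objects and their priorities.

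Consequently, in round $t$ each remaining agent points to a uniformly random remaining object, each remaining object points to a uniformly random remaining agent, and all these pointers are independent. The round-$t$ pointing graph therefore depends on the past only through $(n_t,o_t)$. This gives the Markov property.

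\emph{Step 2 (counting).} With $n$ agents and $o$ objects, the pointing structure is a uniform pair of maps $f:[n]\to[o]$ and $g:[o]\to[n]$, with $(on)^{\,n+o}$ equally likely outcomes. The composite $g\circ f:[n]\to[n]$ is a functional graph. The agents assigned in the round are exactly the cyclic points of $g\circ f$, and the number of objects assigned equals the number of agents assigned.

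I will count the pairs $(f,g)$ in which exactly $m$ agents lie on cycles, in three stages.
\begin{itemize}
\item Choose the ordered cyclic structure. Pick $m$ agents and $m$ objects, then a bijection pairing them by $f$ together with a permutation structure closing them into cycles via $g$. On the chosen cyclic agents and objects, $f$ and $g$ are bijections whose composite is a permutation. This gives $\binom{n}{m}\binom{o}{m}\, m!\, m!$ configurations.
\item Count the non-cyclic part. The remaining $n-m$ agents and $o-m$ objects, under $f$ and $g$, must form a bipartite forest of trees hanging onto the $2m$ cyclic vertices, with agents pointing to objects and objects to agents.
\item Apply a bipartite rooted-forest formula. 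The number of bipartite functional forests on parts of sizes $(n-m,\,o-m)$ rooted in $m$ agent roots and $m$ object roots should be
\[
\frac{m(o+n-m)}{on}\, o^{\,n-m}\, n^{\,o-m}\cdot(\text{normalisation}),
\]
which is a bipartite analogue of Cayley's $k n^{n-k-1}$ formula.
\end{itemize}

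Dividing the count by $(on)^{n+o}$ and simplifying should produce $p_{n,o;m}$. The factor $n!/(n-m)!\cdot o!/(o-m)!$ comes from the ordered choices, and the factor $m(o+n-m)/(on)^{m+1}$ comes from the forest count.

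\emph{Main obstacle.} The key difficulty is the forest count: the number of bipartite forests in which agents attach to objects and objects attach to agents, with prescribed root sets on each side. My first attempt will be a bipartite Prüfer-type encoding, or equivalently the bipartite Abel/Cayley identity. For a complete bipartite graph $K_{a,b}$, the number of spanning forests rooted at $r$ vertices on one side and $s$ on the other is $b^{\,a-r-1}a^{\,b-s-1}(rb+sa-rs)$. I will take $r=s=m$ and substitute the remaining non-root vertex counts.

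If this cannot be cited cleanly, my fallback is induction on the number of non-root vertices using the leaf-removal recursion. As a sanity check, I will verify that $\sum_m p_{n,o;m}=1$ in small cases such as $n=o=1$ and $n=o=2$.
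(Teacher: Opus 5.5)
Your Step 1 is false, and it is exactly the difficulty the theorem has to overcome. You condition on the earlier pointing graphs. So a remaining agent whose round-$(t-1)$ target is still present has that target as her top remaining object with certainty, and her restricted preference is not uniform. Only the \emph{roots} (agents and objects whose previous target has been removed) repoint uniformly and independently. The round-$t$ graph is therefore the forest of surviving pointers plus uniform pointers from the roots. It is not a uniform random bipartite map, even if you coarsen the conditioning to the sizes $(n_s,o_s)_{s\le t}$.

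For instance, start from $n=o=3$ and condition on exactly one agent being assigned in round~1. The survivors' surviving pointers then form a uniformly random rooted spanning forest of $K_{2,2}$ (there are $45$). Round~2 then consists of two short cycles with probability $2\cdot\frac{1}{45}\cdot\frac{25}{16}=\frac{5}{72}$, not the $\frac18$ of a uniform map, and each single $4$-cycle has probability $\frac{13}{144}$ rather than $\frac1{16}$. The law of the number of cleared agents nevertheless agrees: $\Pr\{m=2\}=\frac14$ in both cases. That coincidence is the actual content of the theorem, and your argument gives no reason for it. Hence neither the Markov property nor the transition law for rounds $t\ge 2$ follows from what you wrote. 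The paper's discussion of the selection effect warns against exactly this deferred-decisions shortcut.

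The missing idea is to carry the forest in the state. The paper shows by induction that, given $(N_j,k_j)_{j\le t}$ (vertex sets and root counts), the forest $F_t$ is uniform on $\mathcal F_{N_t,k_t}$. This rests on showing that the number $\beta$ of pairs (old forest, root pointers) generating a given new forest depends only on cardinalities. The paper computes $\beta$ via the fact that every cycle of a random permutation meets a fixed set $Y$ with probability $|Y|/|X|$, lifted to bipartite bijections. That makes $(n_t,o_t,k^I_t,k^O_t)$ a Markov chain. An explicit computation with the bipartite forest-counting formula then shows its transition law does not depend on $(k^I,k^O)$, so the projection to $(n_t,o_t)$ is Markov with the stated $p_{n,o;m}$.

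Your Step 2 is a correct computation of the round-$1$ law only, and it has two slips. First, the number of pairs $(f,g)$ is $o^n n^o$, not $(on)^{n+o}$. Second, the forest formula must be applied to all of $K_{n,o}$ with $r=s=m$ prescribed roots, giving $m\,o^{n-m-1}n^{o-m-1}(n+o-m)$, not to the non-root vertex counts.
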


\begin{proof} See  \Cref{sec:Markov}.\end{proof}

The Markov property stated in  \Cref{Markov} is neither obvious nor anticipated. Because TTC operates in rounds, the algorithm’s progression in any given round depends highly sensitively on the specific circumstances of the history of all preceding rounds. In particular, although the agents' preferences and objects' priorities are uniformly random in the first round, the types of the agents remaining in subsequent rounds are definitively not. They remain in the market for endogenous reasons---namely, because they failed to form cycles.

This ``selection'' effect systematically skews the conditional distributions of the remaining preferences and priorities, breaking the initial independence structure.\footnote{%
To see this more precisely, suppose there are three individuals ($1, 2, 3$) and three objects ($o_1, o_2, o_3$). Per our assumption, the initial joint distribution of preferences and priorities is uniform iid. Now consider a possible history where agent 1 is assigned object $o_1$ in round 1. At the beginning of round 2, the joint distribution of the preferences and priorities of the \textit{remaining} agents ($2, 3$) with regard to the \textit{remaining} objects ($o_2, o_3$) is no longer uniform. Specifically, the probability that agent 2 prefers $o_2$ to $o_3$ is strictly less than $1/2$ \textit{conditional} on her having a higher priority than agent 3 at $o_2$; the simple intuition is that the mismatched pointing between $2$ and $o_2$ in round 1 is weighted heavily in the conditioning as a possible reason for their remaining in the market. The precise calculation is available from the authors.} 
This history dependence creates a conditioning issue that prevents us from invoking standard probabilistic arguments, such as the \emph{principle of deferred decisions}, in which one views each agent as \emph{drawing} preferences for the remaining objects at random in each round, rather than having drawn them at the beginning.

While tracking the precise conditional probabilities of any individual agent pointing to an object at the microscopic level is intractably complex,   \Cref{Markov} demonstrates that once we aggregate over all possible histories leading to a market with $n$ agents and $o$ objects, this microscopic complexity entirely washes out. The macroscopic state of the market---defined by the pair $(n,o)$ of remaining agents and objects---follows a memoryless Markov chain. The distribution of the number of agents assigned in any given round depends strictly on this current state, entirely independent of whether the market reached that state in Round 1 or Round 10.


\begin{figure}[t]\centering\includegraphics[width=0.95\textwidth]{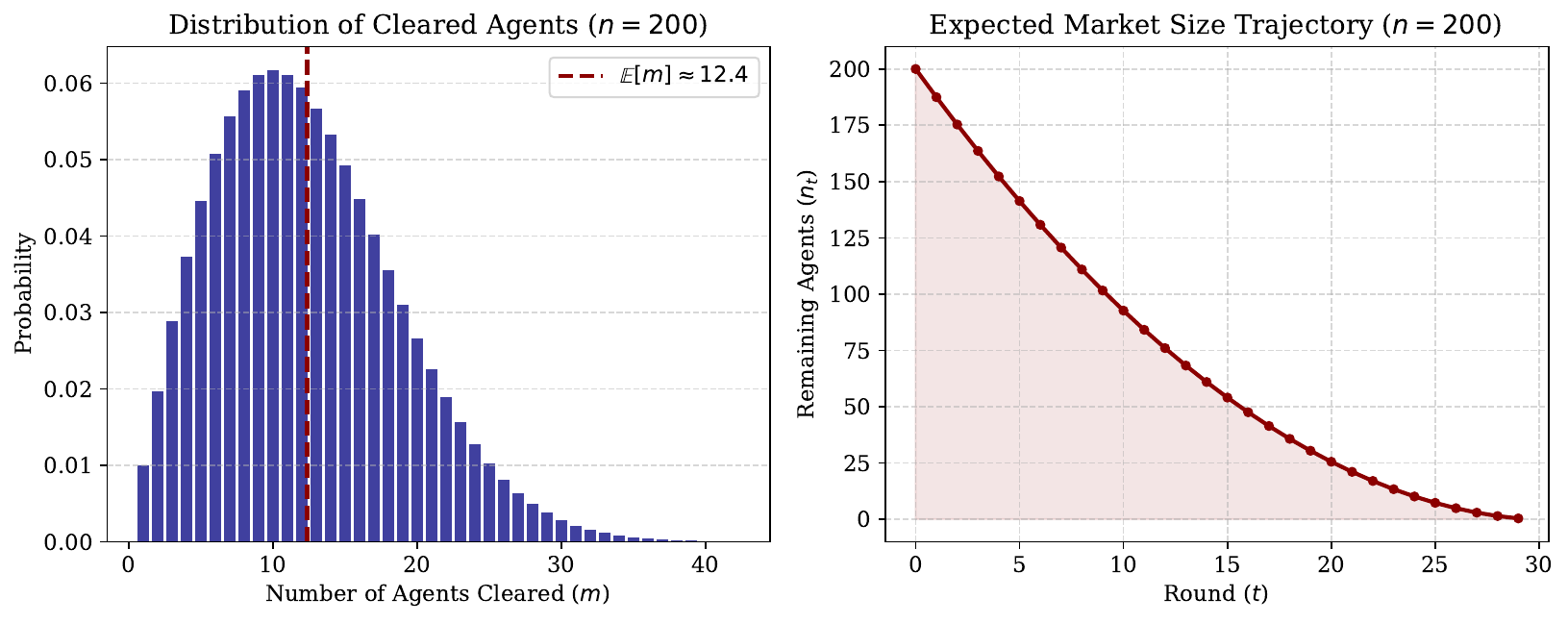}\caption{Markov dynamics of TTC in a balanced market ($n=o=200$).  The left panel shows the probability mass function for the number of agents cleared ($m$) in the first round; the distribution has a mean of order $\sqrt{n}$ with a proportional spread, as its standard deviation is also on the order of $\sqrt{n}$. The right panel illustrates the resulting expected trajectory of the market size over time, demonstrating a parabolic ``burn rate'' that clears the market in sublinear rounds.}\label{fig:markov_dynamics}\end{figure}

This aggregate predictability enables us to precisely visualize the algorithm's dynamics. As illustrated in   \Cref{fig:markov_dynamics}, if a round begins with $n$ agents and $n$ objects, the expected number of agents
cleared  is of order $\sqrt{n}$ (specifically, $\approx \sqrt{\pi n / 4}$ with a proportional standard deviation).  Because the ``burn rate'' of the market is proportional to the square root of its current size, the market collapses along a distinct parabolic trajectory.

To formally establish this Markov property, the crucial step in our analysis is to study the behavior of the so-called \emph{random spanning forests} induced by TTC. At the beginning of each round, when the directed edges to the agents and objects assigned in the previous round have been removed (but before the remaining participants have pointed to new targets), the remaining vertices and directed edges form a spanning rooted forest. The roots are the remaining agents and objects who had pointed in the previous round to parties that are now assigned; their outgoing edges have been removed, forcing them to randomly ``repoint'' to new partners. The other agents and objects continue pointing to the partners that still remain.  It is easy to see that the resulting sequence of random spanning forests is a Markov chain. The bulk of the proof consists of showing that, when this Markov chain is projected onto the numbers of remaining agents $(n)$ and objects $(o)$ at each round, it induces the simpler Markov chain in $(n,o)$ with the transition probabilities stated in  \Cref{Markov}.

Given this closed-form solution, we can derive the exact expectation and variance of the number of agents matched at each round (see Online \Cref{sec:number match at each stage}), enabling us to precisely visualize the algorithm's dynamics. As illustrated in  \Cref{fig:markov_dynamics}, the transition distribution has a nondegenerate $\sqrt n$-scale spread. If a round begins with $n$ agents and $n$ objects, the expected number of agents cleared is of order $\sqrt{n}$ (specifically, $\approx \sqrt{\pi n / 4}$). Because the ``burn rate'' of the market is proportional to the square root of its current size, the market collapses along a distinct parabolic trajectory. This characterization proves exceptionally useful for studying the \emph{completion time} of TTC---the number of rounds it takes for the algorithm to terminate. Driven by this $\sqrt{n}$ clearance rate, we can formally show that the completion time for TTC is strictly sublinear in $n$.\footnote{%
The argument is roughly as follows. The fact that an order $\sqrt{n}$ of agents are cleared on average starting from $n$ agents and objects means that with strictly positive probability (bounded away from zero), a very small number of objects (less than $\delta n$ for an arbitrarily small $\delta>0$) remain after an order of $\sqrt{n}$ rounds, as $n\to \infty$. This result further implies that with probability arbitrarily close to one, the completion time is less than $\alpha n$, for any arbitrarily small $\alpha>0$, as $n\to \infty.$}$^{,}$\footnote{%
It is known that the Shapley-Scarf TTC takes about $\sqrt{n}$ rounds (more precisely, $\sqrt{\frac{8}{\pi}n}- \frac{3}{\pi}\log(n)+O(1)$) to complete (see Theorem 1 of \citet{frieze/pittel;95}). In general, one would expect our prioritized TTC to take longer to complete than the Shapley-Scarf TTC. The reason is that, unlike our TTC, distinct objects always point to distinct agents in each round of the Shapley-Scarf TTC (due to the bijective ownership structure), leading to more cycles being formed in each round. Indeed, Online  \Cref{sec:Rounds} shows that the completion
time of prioritized TTC first-order stochastically dominates the completion
time of the Shapley--Scarf TTC. Nevertheless,  \Cref{prop:stopping} shows that, while slower, the completion time of prioritized TTC remains sublinear.}

\begin{proposition}\label{prop:stopping} Let $T$ denote the number of rounds required for TTC to conclude. Then, $\frac{T}{n}\overset{p}{\longrightarrow }0$.\end{proposition}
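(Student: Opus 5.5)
Since the market is balanced, $n_t=o_t=:k_t$ in every round: each round assigns $m$ agents to $m$ distinct objects. By \Cref{Markov}, $(k_t)$ is then a Markov chain on $\{0,1,\dots,n\}$, with $k_1=n$ and transition $k\mapsto k-m$ occurring with probability
\begin{equation*}
p_{k,k;m}=\frac{m(2k-m)}{k^{2}}\prod_{j=0}^{m-1}\Bigl(1-\frac{j}{k}\Bigr)^{2}.
\end{equation*}
The plan is to split the run of TTC into two phases, using a threshold $\delta n$ for a fixed small $\delta>0$. In the first phase $k_t\geq \delta n$; in the second $k_t<\delta n$. Every round assigns at least one agent, since a cycle always exists. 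Hence the second phase lasts at most $\delta n$ rounds deterministically. It therefore suffices to show that the number of rounds in the first phase is $O_p(\sqrt n)$. This gives $T\le O_p(\sqrt n)+\delta n$, so $\Pr(T/n>2\delta)\to 0$, and $\delta$ is arbitrary.

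The key step, which I expect to be the main technical obstacle, is a uniform lower bound on the per-round clearance. I will show there exist constants $a>0$, $q>0$ and $k_0$ such that for all $k\ge k_0$,
\begin{equation*}
\Pr(m\ge a\sqrt k \mid \text{state } k)\ge q.
\end{equation*}
To prove this, restrict attention to $m\in[a\sqrt k,\,b\sqrt k]$ with $0<a<b$ fixed. In this range $(2k-m)/k\ge 1$. Using $\log(1-x)\ge -x-x^2$ for small $x$, we get
\begin{equation*}
\prod_{j<m}\Bigl(1-\frac jk\Bigr)^2\ge \exp\Bigl(-\frac{m^2}{k}-O\Bigl(\frac{m^3}{k^2}\Bigr)\Bigr)=e^{-m^2/k}(1-o(1)).
\end{equation*}
Hence $p_{k,k;m}\ge (1-o(1))\frac{m}{k}e^{-m^2/k}$. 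Summing this over the range is a Riemann sum, which gives
\begin{equation*}
\Pr\bigl(m\in[a\sqrt k,b\sqrt k]\bigr)\ge (1-o(1))\int_a^b x e^{-x^2}\,dx=\tfrac12\bigl(e^{-a^2}-e^{-b^2}\bigr)(1-o(1)).
\end{equation*}
This is bounded away from zero for large $k$. In fact the exact mass function converges to the Rayleigh-type density $2xe^{-x^2}$ on the $\sqrt k$ scale, but only the lower bound is needed.

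With this bound in hand, I control the first phase. Call a round \emph{successful} if it clears at least $a\sqrt{\delta n}$ agents. Suppose a round in the first phase begins in state $k\ge\delta n$. Then, conditional on the entire past, the round is successful with probability at least $q$, because $a\sqrt k\ge a\sqrt{\delta n}$ and the Markov property lets us condition only on the current state. At most $N:=\lceil n/(a\sqrt{\delta n})\rceil=O(\sqrt n)$ successful rounds can occur before $k$ falls below $\delta n$. A standard coupling then dominates the length of the first phase by the time of the $N$-th success in i.i.d.\ Bernoulli$(q)$ trials, i.e.\ a negative binomial variable with mean $N/q=O(\sqrt n)$; a martingale or Azuma argument on the success indicators would work equally well. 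By Chebyshev, the first phase has length at most $2N/q=O(\sqrt n)$ with probability tending to one.

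Combining the two phases, $T\le 2N/q+\delta n$ with probability tending to one, so $\limsup_n \Pr(T>2\delta n)=0$ for every $\delta>0$. That is, $T/n\overset{p}{\to}0$.
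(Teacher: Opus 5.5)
Your proof is correct. It shares the paper's skeleton: first drive the number of remaining agents below $\delta n$ within $O(\sqrt n)$ rounds with high probability, then use the fact that every round clears at least one agent to bound the rest by $\delta n$ rounds. The paper's last step is exactly that $\{n_{kc\sqrt n}\le\delta n\}$ implies $T\le kc\sqrt n+\delta n$.

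Where you differ is in how you control the first phase.
\begin{itemize}
\item The paper imports Jaworski's asymptotic for the \emph{expected} number of cyclic agents, $\sqrt{\pi ab/(2(a+b))}$. It then argues by contradiction over a block of $c\sqrt n$ rounds. If the chain stayed above $\delta n$ with probability tending to one, the expected cumulative clearance would exhaust the market, and Markov's inequality gives a contradiction. This yields only some $\gamma>0$ per block, which the paper then iterates geometrically to get $1-(1-\gamma)^k$.
\item You instead read a uniform per-round anti-concentration bound, $\Pr(m\ge a\sqrt k\mid k)\ge q$ for all $k\ge k_0$, directly from the closed form in \Cref{Markov}. You then dominate the phase length by a negative binomial.
\end{itemize}

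Your route has several advantages. It is self-contained, using nothing beyond the transition formula. Uniformity over starting states is automatic from the Markov property; the paper asserts it in one line. You avoid the subsequence argument and the passage from expectations to probabilities. And you get the explicit conclusion that the first phase lasts $O_p(\sqrt n)$ rounds.

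The paper's route needs only first-moment information from an off-the-shelf result. That same result also supplies the sharp rate $\sqrt{\pi n/4}$ used in its discussion of the parabolic burn rate.

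One small point: the paper's appendix proof is written for $n\le o$, not only the balanced case. Your bound carries over verbatim. For $m\le n\le o$,
\[
p_{n,o;m}=\frac{m(o+n-m)}{on}\prod_{j<m}\Bigl(1-\frac jn\Bigr)\Bigl(1-\frac jo\Bigr)\ge \frac mn\prod_{j<m}\Bigl(1-\frac jn\Bigr)^2,
\]
so the same constants $a$, $q$, $k_0$ work uniformly over all $o_t\ge n_t$.
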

\begin{proof}See   \Cref{sec:stopping}.\end{proof}

The sublinear completion time will be critical for our purpose because, when combined with a bound on the expected number of short cycles per round, it will imply that short cycles become ``rare'' as the market grows large. 


Beyond this technical contribution, the Markov characterization serves several broader purposes. First, it enables a rigorous analysis of the distribution of ranks for individuals and objects under TTC, which is directly welfare-relevant.  Second, even though our baseline environment assumes uncorrelated preferences, \cite{che/tercieux:15} demonstrate that our
Markov result in \Cref{Markov} can be leveraged to make powerful predictions in richer environments where
agents’ preferences are positively correlated. 

\section{The Asymptotic Irrelevance of Priorities in TTC}
\label{sec:irrelevance}

We are now in a position to establish the asymptotic irrelevance of priorities under TTC. To this end, we consider a sequence of random markets with $n$ agents and $n$ objects. We demonstrate that, as $n \to \infty$, the \textbf{normalized priority ranks}---defined as the priority ranks divided by $n$---enjoyed by all objects under TTC converge to a uniform distribution over $[0,1]^n$. Crucially, this is the identical outcome produced by RSD, and this convergence occurs independently of the preference ranks enjoyed by the agents.

We begin with the observation that an object's assignment via a long cycle conveys no information about the priority of the acquiring agent. To be precise, fix an object $o$ and let $R_o$ denote the priority rank at $o$ of the agent \emph{who obtains $o$}, while $R_o^*$ denotes the priority rank of the agent \emph{to whom $o$ points} during the round it is assigned. We can show that $R_o$ is uniformly distributed over the set $\{R_o^*+1, \dots, n\}$, conditional on $R_o>R_o^*$ (i.e., conditional on $o$ being assigned via a long cycle).  This result stems from a symmetry argument: if an object $o$ is assigned via a long cycle, permuting the priority ranks at $o$ of all agents remaining in that round---excluding the agent $o$ points to, but including the agent who ultimately receives $o$---does not alter the outcome of TTC. Because every such permutation is equally likely, the assignment of object $o$ is uniformly distributed across the eligible remaining agents. Consequently, her priority rank is uniformly distributed over $\{R_o^*+1, \dots, n\}$. The details are provided in   \Cref{proof:corner stone}.

Given this observation, our irrelevance result naturally follows if we can establish two conditions: first, that TTC allocates virtually all objects via long cycles rather than short cycles; and second, that for virtually all objects $o$, the fraction $R_o^*/n \to 0$. Together, these ensure that the normalized priority rank $R_o/n$ converges to $U[0,1]$---exactly as in RSD---for almost all objects assigned via long cycles. 

Formally, let $\hat{O} := \{o \in O \mid R_o > R_o^*\}$ denote the set of all objects assigned via long cycles, and let 
\begin{equation*} 
\tilde{O} := \{o \in O \mid R_o^* \le \log^{1+\varepsilon}(n)\} 
\end{equation*} 
denote the set of objects that point to a top $\log^{1+\varepsilon}(n)$-ranked agent at the time of their assignment (where $\varepsilon > 0$ is an arbitrary strictly positive constant). We first prove that their intersection, $\bar{O} := \hat{O} \cap \tilde{O}$, eventually encompasses the entire population of objects as $n \to \infty$.

\begin{lemma}
\label{prop:short-cycles} $\frac{\left \vert \bar{O}\right \vert }{n}\overset{p}{ \longrightarrow }1$ as $n\to \infty$.
\end{lemma}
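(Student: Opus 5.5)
We show separately that $|O\setminus\hat O|/n \overset{p}{\longrightarrow} 0$ and $|O\setminus\tilde O|/n\overset{p}{\longrightarrow} 0$; a union bound then gives $|\bar O|/n \overset{p}{\longrightarrow} 1$.

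\emph{Short cycles.} An object is outside $\hat O$ exactly when it is assigned through a short cycle $(i,o)$, that is, when $i$ points to $o$ and $o$ points to $i$ in the round of assignment. We first bound the expected number of short cycles in a round that starts with $n$ agents and $o$ objects by a constant, aiming for $2$. We would use the random spanning forest representation behind \Cref{Markov}. Conditional on the forest at the start of the round, only the roots repoint, and they do so uniformly at random. Once a root repoints, every other vertex's outgoing edge stays within its tree. A short cycle can only close if one of its two endpoints is a root that repoints to the other. In addition, at most one cycle forms per tree-component of the resulting functional graph. Summing over the roots, each root closes a 2-cycle with probability at most $1/(\text{number of targets})$. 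Each $2$-cycle also requires both vertices to be roots, or one root and its tree child. Together with the fact that each component of the random composite map contains at most one cycle, this should give an $O(1)$ bound. If that route gets messy, we would instead compute directly from the exact round distribution: the cycle structure on the $m$ cleared agents is that of a uniform random permutation-like map. The expected number of fixed points of the composite agent-to-agent map, which are exactly the short cycles, is then at most $1$ per side.

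Given this per-round bound $C$, the total number $S$ of short cycles satisfies $\mathbb E[S\mid T]\lesssim C\,T$. More carefully, we use optional stopping on $S_t-Ct$ over the rounds, or truncate at $T\le \alpha n$. Then $\mathbb E[S\,\mathbf 1\{T\le\alpha n\}]\le C\alpha n$. By \Cref{prop:stopping}, $P(T>\alpha n)\to0$, and Markov's inequality gives $S/n\overset p\to0$, since $\alpha$ is arbitrary.

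\emph{Pointing ranks.} When an object is assigned, the agent it points to is its highest-priority agent still remaining. So $R_o^*>\log^{1+\varepsilon}n$ requires that more than $\log^{1+\varepsilon}n$ of $o$'s top-ranked agents were removed before $o$ was. Consider each round in which $o$ remains and the agent $o$ points to is removed. That agent's removal in a cycle is not via $o$, and afterwards $o$'s new pointer is the next remaining agent. The plan is to show that, at each repointing, $o$ is captured by the cycle formed with probability bounded below by roughly $c/\sqrt{\text{size}}$, or at least not decaying too fast. Then the number of repointings of a typical object is $O(\log n)$-like in distribution, or at least $o(\log^{1+\varepsilon}n)$ with high probability. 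A cleaner alternative uses symmetry: by the forest description, an object's pointing target after $k$ repointings has rank $k+1$ among the originals only in the worst case. We would instead bound the expected number of repointings summed over all objects. Each round removes $m$ agents, and the objects pointing to them repoint. By the uniform structure, the expected number of objects pointing to a removed agent is about $o/n$ times the tree size. Summing gives a total number of repointings of $O(n\log n)$, hence $o(n\log^{1+\varepsilon}n)$. Markov's inequality then shows that only $o(n)$ objects exceed $\log^{1+\varepsilon}n$ repointings.

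The main obstacle is this last step: controlling the total number of object repointings across rounds despite the history dependence. We would attack it by tracking, in the forest Markov chain, the number of object-roots generated per round and using the exact transition law of \Cref{Markov} to sum expectations, which is roughly $\sum_t m_t\cdot(\text{tree depth}) \approx n\log n$.
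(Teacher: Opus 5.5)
\paragraph{Short cycles.} Your handling of $O\setminus\hat O$ is the paper's argument: a root/non-root case split gives a per-round bound of $2$, and you then sum over rounds using $T/n\overset{p}{\longrightarrow}0$; truncating at $T\le\alpha n$ is an equivalent way to use that fact. Two points of phrasing need fixing.
\begin{itemize}
\item A single root does not close a $2$-cycle with probability at most $1/(\text{number of targets})$. A root agent with $c$ object-children does so with probability about $c/o_k$. The bound only appears after summing over roots, because each non-root vertex has exactly one parent.
\item Your fallback via a uniform permutation on the cleared set should be dropped. Conditional on the cleared set, the bipartite bijection is not uniform: each cycle must contain an old root.
\end{itemize}

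\paragraph{The gap: pointing ranks.} The genuine gap is in the $\tilde O$ half. You correctly note that $R_o^*>\log^{1+\varepsilon}n$ means all of $o$'s top $\log^{1+\varepsilon}n$ agents left before $o$ did. But you then control the number of times $o$ \emph{repoints}, which is a different quantity.
\begin{itemize}
\item $R_o^*-1$ counts every agent ranked above $o$'s final pointer. Most of these left through cycles that $o$ never pointed into.
\item So the repointing count is only a \emph{lower} bound on $R_o^*-1$, not an upper bound.
\item Heuristically, each repointing draws a uniform agent among the $n_t$ still present. An object still in the market when $n_t$ agents remain has repointed about $\log(n/n_t)$ times, while its pointer has rank about $n/n_t$. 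The rank is exponential in the repointing count.
\item Hence, even granting that only $o(n)$ objects repoint more than $\log^{1+\varepsilon}n$ times, nothing follows about $R_o^*$. Objects assigned in the final rounds repoint only $O(\log n)$ times, yet their $R_o^*$ can be of order $n$.
\item Relatedly, a capture probability of order $1/\sqrt{n_t}$ per repointing would give an $O(\sqrt n)$ bound on repointings, not $O(\log n)$.
\end{itemize}
The ``main obstacle'' you identify, controlling total repointings, is therefore not the right one.

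\paragraph{How the paper closes it.} The paper avoids this with a comparison device. The auxiliary mechanism TTC$^*$ clears exactly the same cycles as TTC but gives each object the agent it points to. Reading priorities as object preferences, TTC$^*$ is Pareto efficient for the object side, and $o$'s rank under TTC$^*$ is exactly $R_o^*$. Proposition 1 of \cite{che/tercieux:17} states that in a balanced iid uniform market every Pareto-efficient matching gives all but $o_p(n)$ members of a side a partner within their top $\log^{1+\varepsilon}n$. This yields $|\tilde O|/n\overset{p}{\longrightarrow}1$ at once.

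If you want a self-contained route instead, control the market size $n_t$ at the moment $o$ is assigned, not the repointing count.
\begin{itemize}
\item Objects assigned after $n_t$ falls below $n/\log^{1+\varepsilon/2}n$ are $o(n)$ in number.
\item For the remaining objects, you would need a deferred-decisions argument on $o$'s priority list. It must show that $o$'s top remaining agent has rank of order $n/n_t$, with a tail bound strong enough to stay below $\log^{1+\varepsilon}n$ for all but $o(n)$ objects.
\end{itemize}
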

\begin{proof} It suffices to prove that $\frac{\left \vert \hat{O}\right \vert }{n}\overset{p}{\longrightarrow } 1$ and $ \frac{\left \vert \tilde{O}\right \vert }{n} \overset{p}{\longrightarrow } 1$.  We first establish the former. As demonstrated in  \Cref{prop: conditional exp short cycle} (in the Appendix), the expected number of objects matched via short cycles in any given round of TTC is  bounded above by $2$, irrespective of the algorithm's history. Let $S_t$ be the number of objects assigned through short cycles in round $t$, with $S_t=0$ if the algorithm has already terminated, and let $\mathcal F_{t-1}$ denote the history at the beginning of round $t$. Since $\{T\ge t\}$ is $\mathcal F_{t-1}$-measurable, \Cref{prop: conditional exp short cycle} gives
\[
\mathbb E[S_t\mathbf 1_{\{T\ge t\}}]
=\mathbb E\!\left[\mathbf 1_{\{T\ge t\}}\mathbb E[S_t\mid \mathcal F_{t-1}]\right]
\le 2\Pr\{T\ge t\}.
\]
Therefore
\[
\mathbb E\!\left[\sum_{t=1}^{T}S_t\right]
=\sum_{t\ge1}\mathbb E[S_t\mathbf 1_{\{T\ge t\}}]
\le 2\sum_{t\ge1}\Pr\{T\ge t\}=2\mathbb E[T].
\]
Since $T/n\to0$ in probability and $T\le n$, we have $\mathbb E[T]/n\to0$. Markov's inequality then implies that the fraction of short-cycle assignments is $o_p(1)$, i.e., $|\hat O|/n\to1$ in probability.

To establish that $\frac{\left \vert \tilde{O}\right \vert }{n}\overset{p}{\longrightarrow }1$, we introduce an auxiliary mechanism, TTC$^*$. This mechanism operates exactly like standard TTC, except that when cycles are cleared, the objects within each cycle are assigned to the agents that \emph{the objects point to} (rather than the agents pointing to them). Clearly, TTC$^*$ generates the exact same cycles in each round as TTC, removing identical sets of agents and objects. We then invoke Proposition 1 of \cite{che/tercieux:17}: in a balanced one-to-one random market with iid uniform preferences on both sides, any Pareto efficient matching assigns all but $o_p(n)$ agents to partners ranked among their first $\log^{1+\varepsilon} n$ choices, for every $\varepsilon>0$. We apply this result to the object side, treating each object's priority order as its preference order over agents. TTC$^*$ is Pareto efficient for the objects. Since TTC$^*$ clears the same cycles as TTC, the rank assigned to object $o$ under TTC$^*$ is exactly $R_o^*$. Hence $R_o^*\le \log^{1+\varepsilon}n$ for all but $o_p(n)$ objects.
\end{proof}

With these preliminary results in hand, we are now positioned to formally state our main theorem. In what follows, for any random variable $W$ taking values in $\{1,\dots,n\}$, we define its normalized counterpart as $\bar{W}:=\frac{1}{n}W$. Let $O^n:=\{o_{1},\dots,o_{n}\}$ and $I^n:=\{i_{1},\dots,i_{n}\}$ denote the sets of objects and individuals in an $n$-economy, respectively. Consider a sequence of collections of $2n$ random variables $\{V_{o_{1}},\dots,V_{o_{n}},V_{i_{1}},\dots,V_{i_{n}}\}$, each supported on $[0,1]$. Because the dimensionality of the assignment vector grows with the market size, we must rely on a specific notion of distributional convergence for growing random vectors:

\begin{definition}
A random vector $\{ \bar{W}_{o_{1}},\dots,\bar{W}_{o_{n}},\bar{W}_{i_{1}},\dots,\bar{W}_{i_{n}}\}$ \textbf{converges in distribution to} $\{V_{o_{1}},\dots,V_{o_{n}},V_{i_{1}},\dots,V_{i_{n}}\}$ as $n\to \infty$ if, for any integer $K$, any $x\in \lbrack 0,1]^{K}$, and any sequence $\{y^{n}\}$ taking values in $[0,1]^{n}$, we have
\begin{equation*}
\lim_{n\rightarrow \infty }\left \vert F^{n}(x,y^{n})-G^{n}(x,y^{n})\right \vert = 0,
\end{equation*}%
where $F^{n}$ is the cumulative distribution function (CDF) of $\{ \bar{W}_{o_{1}},\dots,\bar{W}_{o_{K}},\bar{W}_{i_{1}},\dots,\bar{W}_{i_{n}}\}$ and $G^{n}$ is the CDF of $\{V_{o_{1}},\dots,V_{o_{K}},V_{i_{1}},\dots,V_{i_{n}}\}$.
\end{definition}

From now on, let $\{\bar{R}_{o_{1}}, \dots, \bar{R}_{o_{n}}, \bar{R}_{i_{1}}, \dots, \bar{R}_{i_{n}}\}$ denote the profile of normalized priority and preference ranks under TTC. Under RSD, the serial order and agents' preferences determine the assignment, while object priorities are completely ignored. Thus, conditional on the RSD assignment, the realized priority ranks of the assigned agents are iid uniform on $\{1,\ldots,n\}$ and independent of the RSD preference-rank vector. The TTC-RSD equivalence result by \cite{pathak/sethuraman:11} implies that the rank distribution for agents is the same between the two mechanisms; the theorem below establishes that the conditional object priority ranks under TTC converge to an iid uniform distribution---the same as in RSD as $n\to \infty$.
 

Recall our earlier observation: any object assigned via a long cycle is uniformly allocated among the individuals whose priority rank at that object is strictly worse than $R^*_o$ (i.e., numerically larger than $R^*_o$). This implies that the rank realized by each object $o \in \bar{O}$ is stochastically dominated by the uniform distribution over $\{\lceil \log^{1+\varepsilon}(n) \rceil + 1, \dots, n\}$. Simultaneously, this realized rank trivially dominates the uniform distribution over $\{1, \dots, n\}$. As established in  \Cref{corner stone} (Appendix), these bounds hold independently of both the preference ranks enjoyed by the agents and the priority ranks enjoyed by the other objects within $\bar{O}$. Because the upper bound's truncation term vanishes in the limit ($\log^{1+\varepsilon}(n)/n \to 0$ as $n \to \infty$), this proposition, combined with  \Cref{prop:short-cycles} ensures that the joint distribution of these normalized ranks  converges to the iid uniform distribution as the market grows large in the sense stated in \Cref{thm: irrelevance}.

The main theorem now follows directly. 

\begin{theorem}[Irrelevance]
\label{thm: irrelevance}
(a) The profile $\{ \bar{R}_{o_{1}},\dots,\bar{R}_{o_{n}},\bar{R}_{i_{1}},\dots,\bar{R}_{i_{n}}\}$ of normalized priority and preference ranks under TTC converges in distribution to $\{ \bar{U}_{o_{1}},\dots,\bar{U}_{o_{n}},\bar{R}_{i_{1}},\dots,\bar{R}_{i_{n}}\}$ as $n\to \infty$.
(b) For any $x\in [0,1]$, $\frac{1}{n}\sum_{o\in O}\mathbf{1}_{\{ \bar{R}_{o}\leq x\}} \overset{p}{\longrightarrow} x$.
 \end{theorem}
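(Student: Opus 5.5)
The plan is to combine \Cref{prop:short-cycles} with the sandwich bounds from \Cref{corner stone} (Appendix), and then pass to the limit through a coupling argument. Fix $K$, a point $x\in[0,1]^K$, and a sequence $y^n\in[0,1]^n$. Let $L_n:=\lceil \log^{1+\varepsilon}(n)\rceil$. The first step is to localize to the good event. By symmetry across objects (relabeling objects leaves the uniform iid law invariant), \Cref{prop:short-cycles} gives $\Pr\{o_k\in\bar O\}\to 1$ for each fixed $k$. A union bound then gives $\Pr\{o_1,\dots,o_K\in\bar O\}\to1$. So it suffices to compare $F^n(x,y^n)$ with $\Pr\{\bar R_{o_k}\le x_k\ \forall k\le K,\ \bar R_{i}\le y^n_i\ \forall i,\ o_1,\dots,o_K\in\bar O\}$, because the two differ by $o(1)$ uniformly in $y^n$.

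The second step is the sandwich. Conditional on the agents' preference-rank vector, on the event $\{o_1,\dots,o_K\in\bar O\}$, and on the remaining history, \Cref{corner stone} says the $K$ ranks $R_{o_k}$ are independent. Each is uniform on $\{R^*_{o_k}+1,\dots,n\}$ with $R^*_{o_k}\le L_n$. Hence each conditional CDF at $nx_k$ lies between $\Pr\{U\le x_k\}$ for $U$ uniform on $\{1,\dots,n\}/n$ and the CDF of the uniform on $\{L_n+1,\dots,n\}/n$. These two bounds differ by at most $L_n/(n-L_n)\to0$, uniformly in $x_k$. The product of $K$ such factors therefore lies within $K L_n/(n-L_n)$ of $\prod_k x_k$, up to the $1/n$ discretization error.

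The third step puts the pieces together. Integrating against the event $\{\bar R_i\le y^n_i\ \forall i\}\cap\{o_1,\dots,o_K\in\bar O\}$ gives
\[F^n(x,y^n)=\Big(\prod_{k}x_k\Big)\Pr\{\bar R_i\le y^n_i\ \forall i\}+o(1).\]
By Pathak--Sethuraman, the law of $(\bar R_{i})_i$ under TTC equals that under RSD. Under RSD the object ranks are iid uniform and independent of that vector. The right-hand side is therefore $G^n(x,y^n)+o(1)$, which proves (a).

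For (b), I would use a second-moment argument. Write $Z_n=\frac1n\sum_o \mathbf 1_{\{\bar R_o\le x\}}$. Part (a) with $K=1$ and $K=2$ (taking $y^n\equiv 1$), together with exchangeability of objects, gives $\mathbb E Z_n\to x$ and $\mathbb E Z_n^2=\frac1n\Pr\{\bar R_{o_1}\le x\}+\frac{n-1}{n}\Pr\{\bar R_{o_1}\le x,\bar R_{o_2}\le x\}\to x^2$. So $\mathrm{Var}(Z_n)\to0$, and Chebyshev's inequality gives the claim.

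The main obstacle is step two: making precise that the conditional uniformity in \Cref{corner stone} holds jointly for several objects and jointly with the entire agent preference-rank vector, with a conditioning event that is legitimate. Membership in $\bar O$ depends on $R^*_o$ and on whether the cycle is long, and both are determined by the history. I would handle this by conditioning on the full TTC cycle structure (who points to whom when each object clears). That structure fixes $\bar O$ and all agents' assignments. I would then apply the permutation-invariance argument separately to each object's priority list at the agents below its pointee. These permutations act on disjoint coordinates (different objects' priority orders), so they commute and preserve the outcome, which yields independence across $o_1,\dots,o_K$.
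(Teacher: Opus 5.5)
Your part (a) is essentially the paper's argument. You restrict to $A_n=\{o_1,\dots,o_K\in\bar O\}$ using \Cref{prop:short-cycles}, squeeze the object coordinates with \Cref{corner stone}, and note that the two product-uniform bounds differ by $O(KL_n/n)$ uniformly in $(x,y^n)$. You are more careful than the paper at one point. The paper simply asserts $\Pr(A_n)\to1$, whereas you correctly obtain it from $|\bar O|/n\overset{p}{\longrightarrow}1$ via exchangeability of objects ($\Pr\{o\in\bar O\}=\mathbb E|\bar O|/n\to1$) and a union bound.

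Two corrections. First, the appeal to Pathak--Sethuraman is not needed. The limit vector in the statement carries TTC's own $\bar R_i$'s, so it is handled directly by the independence of the $\bar U_o$'s.

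Second, and more substantively, in your last paragraph the pointing structure does \emph{not} fix $\bar O$. It identifies the pointee $a_o$ but not its rank $R^*_o$, because agents ranked above $a_o$ at $o$ may have left the market in rounds when $o$ pointed to someone else. The fix is to condition additionally on the initial segment of each $\succ_o$ up to and including $a_o$. An induction over rounds shows that any reordering of positions $R^*_o+1,\dots,n$ of each $\succ_o$ leaves the whole TTC path and these segments unchanged. The conditioning event is therefore a product set, and the lower segments are independent uniform orderings across objects. Both $\bar O$ and $(R_i)_i$ are then measurable. This gives exactly the conditional independence and uniformity you need, and it is a cleaner packaging of the paper's pairwise-swap lemma.

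For (b) you take a different route from the paper. The paper proceeds as follows:
\begin{enumerate}
\item it conditions on $\{\bar O=O''\}$ with $|O''|\ge(1-\delta)n$;
\item it sandwiches the whole family $\{\mathbf 1_{\{\bar R_o\le x\}}\}_{o\in\bar O}$ between iid Bernoulli families;
\item it applies a law of large numbers to the bounds uniformly in $O''$, and then lets $\delta\downarrow0$.
\end{enumerate}
This relies on the conditional independence across all of $\bar O$ established inside the proof of \Cref{corner stone}. You instead deduce (b) from (a). Taking $y^n\equiv 1$ and $K\in\{1,2\}$, and using exchangeability, you get $\mathbb E Z_n\to x$ and $\mathbb E Z_n^2\to x^2$, and Chebyshev finishes. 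This is correct and more economical: it needs only one- and two-object marginals and no uniform law of large numbers over a random index set. The paper's route, in exchange, does not depend on (a), and it controls the joint behavior of the entire collection of object ranks rather than just its first two moments.
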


\Cref{thm: irrelevance} is a rank-level statement.  Its fairness content is best seen by translating it into justified-envy statistics.  We use three normalizations.  The first is the incidence ratio plotted in \Cref{fig}: among all realized envy incidences, what fraction are justified?  The second counts agents rather than incidences: what fraction of agents experience at least one justified envy?  The third counts blocking pairs as a fraction of all nontrivial agent-object pairs.  These normalizations answer different questions, but they lead to the same conclusion: in the balanced iid one-to-one market, TTC and the priority-blind RSD become indistinguishable in both the incidence and the intensity of justified envy.  Notice that the agent-level statistic has a nondegenerate limit, whereas the blocking-pair fraction vanishes mechanically because a fixed agent-object pair is unlikely to be an envy pair at all.

\begin{proposition}[Justified-envy metrics]
\label{prop:je-metrics}
Consider the balanced iid one-to-one market.  The following statements hold.
\begin{enumerate}
\item The ratio of the expected number of justified-envy incidences to the expected number of envy incidences is exactly $1/2$ under RSD for every $n\ge2$, and converges to $1/2$ under TTC.
\item The expected fraction of agents with at least one justified envy converges to $2\log 2-1$ under both RSD and TTC.
\item Let $H_m:=\sum_{r=1}^m 1/r$.  The expected fraction of blocking pairs, normalized by $n(n-1)$, is
\[
        \frac{(n+1)(H_{n+1}-1)-n}{2n(n-1)}
        \sim \frac{\log n}{2n}
\]
under RSD, and is asymptotic to the same expression under TTC.  Hence, the blocking-pair fractions under both mechanisms vanish, but their ratio converges to one.
\end{enumerate}
\end{proposition}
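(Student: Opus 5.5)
I will prove the three parts by one common strategy. First I do exact computations under RSD, where priorities are independent of the assignment. Then I transfer each computation to TTC using \Cref{thm: irrelevance} together with \Cref{prop:short-cycles} and the conditional-uniformity bounds behind \Cref{corner stone}.

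\textbf{Setup and envy counts.} Fix the definitions. Agent $i$ envies $j$ if $\mu(j)\,P_i\,\mu(i)$. The envy is justified if $i\succ_{\mu(j)} j$. By Pathak--Sethuraman, the joint law of the matching and the preferences is the same under TTC and RSD. Hence the expected number of envy incidences is identical under both mechanisms. It equals $\sum_i \mathbb E[R_i-1]$, the expected number of objects an agent ranks above her assignment. All differences between the mechanisms therefore come from the justified-envy events, which involve priorities.

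\textbf{RSD computations.} Under RSD, each object's priority order is a uniform permutation independent of everything else. Fix object $o=\mu(j)$ with $R_o=r$, so exactly $r-1$ agents sit above $j$ at $o$. Conditionally, these $r-1$ agents form a uniform subset of $I\setminus\{j\}$. Each envier of $j$ is justified with probability $(r-1)/(n-1)$, and $\mathbb E[(r-1)/(n-1)]=1/2$. This gives part 1 exactly, for every $n$.

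For part 3, a blocking pair $(i,o)$ with $o=\mu(j)$ needs $i$ to envy $j$ and to rank above $j$ at $o$. I compute the expected number of envy incidences exactly: under RSD, the agent in serial position $k$ receives rank $R$ with $\mathbb E[R]=(n+1)/(n-k+2)$, which sums to harmonic numbers. Multiplying by $1/2$ and dividing by $n(n-1)$ should yield the stated $H_{n+1}$ formula.

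For part 2, agent $i$ has no justified envy iff, for each of the $R_i-1$ objects she prefers, she is below the holder. Conditionally on the assignment, these events are independent, each with probability $1/2$. So $P(\text{no JE})=\mathbb E[2^{-(R_i-1)}]$ averaged over agents. Using the RSD rank law for position $k$, with $R-1$ approximately geometric with success probability $(n-k+1)/n$, and integrating over $k/n=s$, the limit should be $\int_0^1 \frac{1-s}{1+s}\,ds=2\log2-1$. Its complement as written must be matched, so I will verify the sign and convention carefully.

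\textbf{Transfer to TTC.} Condition on the TTC matching and the agents' preferences. Split the objects into $\bar O$ and its complement, which has size $o_p(n)$ by \Cref{prop:short-cycles}.

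For $o\in\bar O$, the agent obtaining $o$ is uniform among those ranked below position $R_o^*\le \log^{1+\varepsilon}n$. Consequently, any fixed envier of $\mu^{-1}(o)$, other than the agent $o$ points to, outranks the holder with probability between $1/2-O(\log^{1+\varepsilon}n/n)$ and $1/2+O(\log^{1+\varepsilon}n/n)$. The conditional independence across objects in $\bar O$ from \Cref{corner stone} gives approximate product forms. This yields parts 1 and 2 up to errors controlled by the envy mass attached to objects outside $\bar O$.

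\textbf{Main obstacle.} The hard step is controlling that error mass. Envy incidences toward holders of the $o_p(n)$ bad objects could be heavy, since popular objects are cleared early. For part 3 the relevant scale is only $\log n/n$, so I need the error to be $o(n\log n)$ incidences.

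My first attempt bounds envy toward a given object by the number of agents ranking it above their assignment. I would then use exchangeability across objects, by the symmetry of the iid model, together with the fact that $\mathbb E[\#\text{short-cycle objects}]\le 2\mathbb E[T]=o(n)$. If exchangeability fails because short-cycle objects are selected, I would fall back on a uniform second-moment bound: the envy count toward any one object is $O(\log n)$ with high probability, by the agent rank tail $R_i\le\log^{1+\varepsilon}n$ for most agents. I would combine this with Cauchy--Schwarz.
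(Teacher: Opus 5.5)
Your architecture is the paper's. You do exact RSD computations: your envy count, the factor $1/2$, $\mathbb E[2^{-(R_i-1)}]$, and the integral $\int_0^1\frac{1-s}{1+s}\,ds=2\log2-1$ are correct, the last being the justified-envy probability once $s$ is read as the fraction of objects still available. You then transfer to TTC, where the only error is envy toward recipients of short-cycle objects, controlled by $\mathbb E|C_n|\le 2\mathbb E[T]=o(n)$ and Cauchy--Schwarz.

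The genuine gap is the second-moment input $\mathbb E\sum_o X_o^2=O(n\log^2 n)$, where $X_o$ counts agents envying the recipient of $o$. Your proposed route through the rank tail cannot supply it. Under the common rank law $\Pr\{R_i=r\}=\frac{n+1}{nr(r+1)}$, agents with $R_i\le L:=\log^{1+\varepsilon}n$ produce only about $n\log L=O(n\log\log n)$ envy incidences in total. The roughly $n/L$ agents with $R_i>L$ produce about $n\log(n/L)\sim n\log n$, which is essentially all of the envy mass, and the rank tail says nothing about which objects they envy. Even for low-rank agents, bounding their envy toward $o$ by the $\mathrm{Bin}(n,L/n)$ count gives a second moment of order $L^2$. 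Cauchy--Schwarz then yields only $o(n\log^{1+\varepsilon}n)$, because you have no rate for $\mathbb E[T]/n\to0$. In any case, a with-high-probability bound on $X_o$ does not control $\mathbb E X_o^2$. The exchangeability route fails for the reason you suspect: $C_n$ is endogenous, so $\mathbb E\sum_{o\in C_n}X_o$ need not equal $\mathbb E|C_n|\,\mathbb E X_o$.

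The missing idea is to compute the second moment under RSD. This is legitimate because $\sum_o X_o^2$ depends only on preferences and the matching, and the TTC--RSD equivalence holds for the matching distribution given preferences. If $o$ is chosen at serial position $\tau$, the agent at position $t>\tau$ envies its recipient iff $o$ is her favorite among $o$ and the $n-t+1$ objects still available. Given the history, this has conditional probability $p_t=1/(n-t+2)$. Hence $\mathbb E[Z_sZ_t\mid\tau]=p_sp_t$ and $\mathbb E[X_o^2\mid\tau]\le H_n+H_n^2$.

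Two further points. For part 2, an $o(n\log n)$ bound on short-cycle envy incidences does not suffice: it is compatible with a constant fraction of rank-$2$ agents envying a short-cycle object. You must fix $K$ and apply Cauchy--Schwarz to $N_o^K$, the number of agents ranking $o$ among their top $K$, with $\mathbb E(N_o^K)^2\le K+K^2$. Then handle $R_i>K$ via $\Pr\{R_i>K\}\to 1/(K+1)$.

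You also do not need $\bar O$, the $\pm O(\log^{1+\varepsilon}n/n)$ slack, or \Cref{corner stone}, which concerns holders' ranks rather than the envier-versus-holder comparison. For any long-cycle object, both $i$ and the recipient lie below the agent $o$ points to, so swapping them in $o$'s priority order leaves the entire TTC run unchanged. Hence exactly half of long-cycle envy incidences are justified. Performing such swaps on arbitrary subsets of $i$'s envied objects gives exactly $2^{-(r-1)}$ when all of them are long-cycle.
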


\begin{proof}
See  \Cref{sec: BP}.
\end{proof}

The proposition compares aggregate fairness statistics.  The following corollary records the corresponding fixed-pair probabilities.  Unlike a priority-rank comparison that asks only whether agent $i$ has a higher priority than the recipient of object $o$, the corollary focuses on the economically relevant event that $i$ actually justifiably envies that recipient.

\begin{corollary}[Fixed-pair envy and justified envy]
\label{cor: BP}
Fix an agent-object pair $(i,o)$.  Under both RSD and TTC, the probability that $i$ envies the recipient of $o$ is
\[
        \frac{(n+1)(H_{n+1}-1)-n}{n^2}
        \sim \frac{\log n}{n}.
\]
Under RSD, the probability that this fixed pair is a blocking pair is exactly one-half of this probability.  Under TTC, it is asymptotically one-half of this probability.  Equivalently, conditional on $i$ envying the recipient of $o$, the probability that this envy is justified converges to $1/2$ under TTC, exactly as under RSD.
\end{corollary}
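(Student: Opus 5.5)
The plan is to reduce the fixed-pair statement to a computation about one agent's preference rank and one independent uniform priority comparison. Under both mechanisms the recipient of $o$ is some agent $j\neq i$. Agent $i$ envies that recipient exactly when $i$ ranks $o$ strictly above her own assignment.

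\textbf{Step 1: the envy probability is common to RSD and TTC.} By the equivalence of \cite{pathak/sethuraman:11}, the marginal lottery of agent $i$ coincides under the two mechanisms. By symmetry of the iid uniform preferences, the object $o$'s position in $i$'s list is uniform and, under RSD, independent of the set of objects taken before $i$'s turn. I will compute the envy probability directly under RSD.

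If $i$ is drawn $k$-th ($k$ uniform on $\{1,\dots,n\}$), her assignment is the best of the $n-k+1$ remaining objects. Agent $i$ envies the holder of $o$ iff $o$ is already taken and $o$ ranks above $i$'s assigned object. Condition on the set of $k-1$ taken objects, which is independent of $P_i$ restricted to the exchangeable labels. The event is then that, in $i$'s uniformly random ordering, $o$ is among the taken objects and precedes every untaken one. Its probability is $\frac{k-1}{n}\cdot\frac{1}{n-k+2}$: pick $o$ as the taken object, then require it to come first among itself and the $n-k+1$ untaken objects.

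Averaging over $k$ and summing $\frac1{n}\sum_k \frac{k-1}{n(n-k+2)}$ via partial fractions gives $\frac{(n+1)(H_{n+1}-1)-n}{n^2}$, which is $\sim \log n/n$. Since the event "$i$ envies the recipient of $o$" depends only on $i$'s own assignment and $P_i$, and these have the same joint law under TTC by the Pathak--Sethuraman argument applied to $(P_i,\mu(i))$, the same formula holds for TTC.

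\textbf{Step 2: RSD blocking.} Under RSD the priority order $\succ_o$ is independent of the preferences, the serial order, and hence of the envy event and of the recipient's identity. Whether $i\succ_o$ recipient is therefore a fair coin independent of envy, giving exactly one-half.

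\textbf{Step 3: TTC blocking, the main obstacle.} Here priorities and assignments are correlated. I would split on whether $o\in\bar O$. On the event $o\notin\hat O$ (short cycle), $i$ can never justifiably envy. By exchangeability of objects, \Cref{prop:short-cycles} gives $\Pr(o\notin\bar O)\to0$.

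The concern is that $\log n/n$ is small, so a $o(1)$ error is not enough; I need the error to be $o(\log n/n)$. Instead I will use the cornerstone symmetry (\Cref{corner stone}) conditionally. Given that $o$ is assigned via a long cycle, and given the TTC history and all preferences (hence the envy event), the recipient's rank at $o$ is uniform over $\{R_o^*+1,\dots,n\}$. The set of agents other than the pointed-to agent remaining at that round is permuted uniformly, while $i$'s relative position is unaffected as long as $i$ remains.

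If $i$ envies the recipient, then $i$ prefers $o$ to her assignment, so $i$ was still present when $o$ was assigned; otherwise $i$ would have pointed to $o$ earlier. The symmetry then makes "$i$ above recipient" a coin flip among the two exchangeable remaining agents, except when $i$ is the agent $o$ points to. In that case $i$ is above the recipient, giving blocking for sure. So
\[
\Pr(\text{block})=\tfrac12\Pr(\text{envy})+\tfrac12\Pr(\text{envy},\ o \text{ points to } i).
\]
The remaining task is to show the last term is $o(\log n/n)$. I would bound it by exchangeability: summing over $i$ gives the expected number of agents pointed to by $o$ at assignment yet envying, which is at most $\Pr(o\in\hat O)\le1$. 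Dividing by $n$ gives $O(1/n)=o(\log n/n)$. This yields the asymptotic one-half and the conditional-probability restatement.
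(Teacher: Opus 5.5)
\textbf{There is a genuine gap in your Step 3.} Steps 1 and 2 are correct. Your Step 1 computes the fixed-pair probability directly from the serial position and the set of taken objects. The paper instead computes the expected RSD preference rank, uses $\sum_o E_{io}=R_i-1$, and divides the total by $n^2$ via exchangeability. The two routes give the same formula. Your long-cycle symmetry in Step 3 is also sound; it is the paper's swap argument.

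The problem is the displayed identity $\Pr(\text{block})=\tfrac12\Pr(\text{envy})+\tfrac12\Pr(\text{envy},\ o\text{ points to } i)$. The coin-flip argument applies only when $o$ is assigned through a long cycle. When $o$ is assigned through a short cycle, blocking has conditional probability $0$ given envy, not $\tfrac12$. Writing $E_{io},J_{io}$ for the envy and blocking indicators and $C_n$ for the set of short-cycle objects, the correct identity is
\[
\Pr\{J_{io}=1\}=\tfrac12\Pr\{E_{io}=1\}-\tfrac12\Pr\{E_{io}=1,\ o\in C_n\}.
\]
Your extra term is in fact identically zero. The agent $o$ points to lies on $o$'s cycle, so she receives her favorite remaining object, which she weakly prefers to $o$; hence she cannot envy the recipient of $o$.

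So everything hinges on showing $\Pr\{E_{io}=1,\ o\in C_n\}=o(\log n/n)$. This is exactly the concern you raised yourself and then set aside with ``instead.'' Conditioning on long cycles does not remove this term; it only isolates it. Knowing $\Pr\{o\in C_n\}\to0$ bounds the term only by $o(1)$, because envy toward a recipient could a priori concentrate on short-cycle objects.

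\textbf{The missing step (the paper's route, inside the proof of \Cref{prop:je-metrics}).} You need a second-moment bound. Let $X_o=\sum_j E_{jo}$. By exchangeability over agents and Cauchy--Schwarz,
\[
n\Pr\{E_{io}=1,\ o\in C_n\}=\mathbb E\left[X_o\mathbf 1\{o\in C_n\}\right]\le \Pr\{o\in C_n\}^{1/2}\left(\mathbb E X_o^2\right)^{1/2}.
\]
\begin{itemize}
\item The first factor is $o(1)$: $\Pr\{o\in C_n\}=\mathbb E|C_n|/n\le 2\mathbb E[T]/n\to 0$, by \Cref{prop: conditional exp short cycle} and \Cref{prop:stopping}.
\item The second factor is $O(\log n)$. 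Since $X_o$ depends only on preferences and the realized matching, you may compute it under RSD via the assignment-level TTC--RSD equivalence. There, each agent choosing at position $t$ after $o$ is taken envies its recipient with conditional probability $1/(n-t+2)$, which gives $\mathbb E X_o^2\le H_n+H_n^2$.
\end{itemize}
Hence the short-cycle term is $o(\log n/n)$, and together with your Step 1 this yields the asymptotic one-half. The paper then obtains the corollary from the aggregate ratio $J_n^{\mathrm{TTC}}/E_n^{\mathrm{TTC}}\to1/2$ by exchangeability over pairs, but the substance is this same bound.
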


\begin{proof}
See \Cref{sec: BP}.
\end{proof}

\section{Discussion}
\label{sec:Discussion}

 The preceding irrelevance result suggests that TTC’s ability to use priorities to eliminate justified envy becomes increasingly limited as the market grows large.  While asymptotic irrelevance was established in an environment where preferences and priorities are drawn independently and uniformly at random, we report simulations suggesting that the same force is robust along several dimensions. We also discuss environments in which the priorities do not become irrelevant even in the limit.


\subsection{Correlated Preferences}

If agents' preferences are perfectly correlated, TTC completely eliminates justified envy. In this extreme scenario, all objects are trivially assigned via short cycles, yielding an assignment that is both efficient and strictly stable. By contrast, RSD would continue to generate a substantial amount of justified envy, even in the limit. However, perfect correlation is an extreme boundary case unlikely to materialize in real-world settings. 

A more empirically relevant case arises when preferences are only partially correlated. Suppose, for instance, that an agent's preferences are represented by a cardinal utility function $u_i(o) = u_o + \xi_{io}$, where $u_o$ captures a common utility derived from object $o$ that is shared by all agents, and $\xi_{io}$ represents an idiosyncratic, agent-specific utility drawn iid for each agent $i$. Provided the idiosyncratic component is nondegenerate, and its support is bounded, simulations suggest that the same irrelevance force survives in this richer environment.\footnote{To see this analytically, consider a simplified case where the support of the common utility is finite and its values are sufficiently spaced out such that objects are effectively ``tiered'': all agents prefer top-tier objects (those with the highest $u_o$), followed uniformly by second-tier objects, and so on. Under these conditions, the TTC mechanism effectively partitions into multiple sequential stages. In stage 1, all agents point to tier-1 objects; once these are fully assigned, stage 2 commences with the remaining agents pointing to tier-2 objects, and so forth. Because agents' priorities for each object remain i.i.d., each stage can be analyzed as an independent TTC market to which our asymptotic irrelevance result directly applies.} 
\begin{figure}[h]
\caption{Incidences of justified envy among incidences of envy.}
\label{fig2}
\begin{center}
\includegraphics[width=11cm,height=9cm]{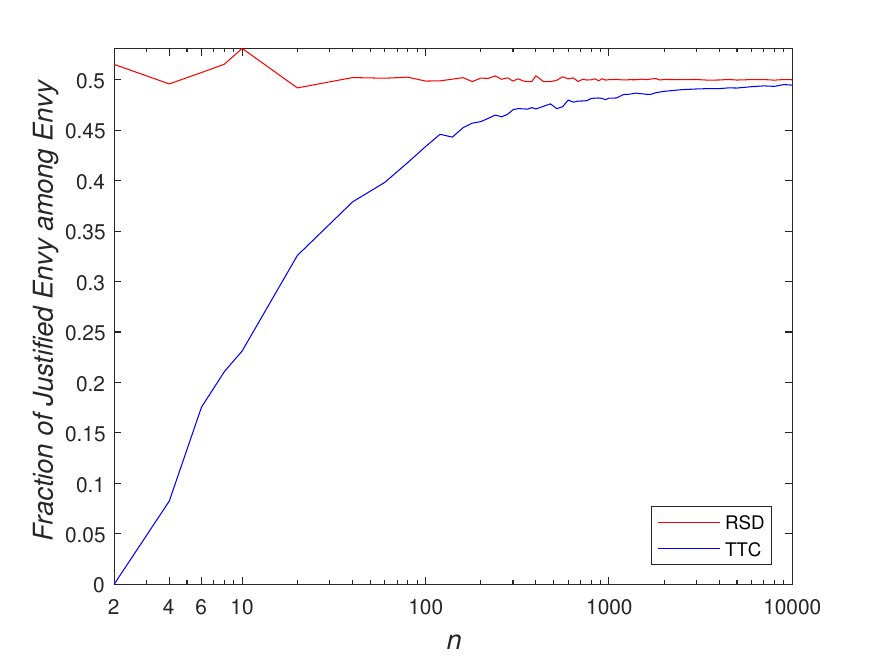}
\end{center}
\par
{\footnotesize {\ {Note: The horizontal axis indicates the number of
agents/objects. The figure plots the average proportions across 100 draws of
$(u_o, \xi_{io}, \eta_{io})$ from $U[0,1]^3$, where $\eta_{io}$ is the priority
score of agent $i$ at $o$.}} }
\end{figure}

As an illustration,   \Cref{fig2} shows that the simulation pattern extends to correlated preferences when $u_o$ and $\xi_{io}$ are each uniformly distributed over $[0,1]$.

\subsection{Correlated Priorities}

Suppose instead that agents' priorities are correlated. Such correlation naturally arises in settings like school choice, where students are often prioritized based on standardized test scores or overall academic performance. As in the case of correlated preferences, perfect correlation reduces TTC to a Serial Dictatorship governed by a suitably defined serial order.

To explore a more realistic, intermediate case, suppose that students are partitioned into a finite number of vertically ordered priority tiers—for example, 
broad test score bands. All objects strictly rank students in a higher tier above those in a lower tier, but within any given tier, students' relative priorities are drawn iid uniformly for each object.

In this tiered environment, the global asymptotic equivalence between TTC and RSD regarding justified envy breaks down. Because RSD relies on a single, uniform serial order drawn from the entire population, it frequently allows lower-tier students to select objects ahead of higher-tier students, generating widespread cross-tier justified envy. TTC, by contrast, naturally segregates the market by tiers through its pointing dynamics. Because all objects strictly prioritize higher-tier agents, they will exclusively point to top-tier agents as long as any remain. Consequently, these top-tier agents will almost surely be assigned via cycles consisting entirely of other agents within their same tier. This localized cycle formation cascades sequentially down the hierarchy, ensuring that justified envy is strictly limited to peers within the same tier and entirely eliminating cross-tier priority violations.

However, a striking form of our asymptotic irrelevance result persists. While TTC successfully protects the broad priority classes through its tier-by-tier clearing, the marginal distribution of an object's realized priority rank still converges to a uniform distribution over the relevant tier, exactly as it does under RSD. This within-tier equivalence occurs because the fraction of objects assigned via short cycles continues to approach zero as the market grows. In other words, while TTC respects the macro priority tiers, from the perspective of the objects, it effectively neutralizes the micro, within-tier priorities just as completely as a priority-blind mechanism.


\subsection{Many-to-One Matching}
\label{disc:many-to-one}

Our analysis thus far has focused on a one-to-one matching environment. While this serves as a canonical baseline, many real-world applications---such as school choice or public housing allocation---involve many-to-one matching, where multiple identical seats or units of an object type are available for assignment. 

Our simulations suggest that the irrelevance pattern extends to many-to-one matching environments  when the number of copies (seats) per object type grows sufficiently slowly relative to the number of object types.\footnote{This asymptotic framework---often referred to as the ``small school'' model---has been widely adopted in the literature (e.g., \cite{kojima/pathak:06}; \cite{ashlagi/kanoria/leshno:13}; \cite{che/tercieux:17}; \cite{che/tercieux:15}). It aligns well with empirical settings like the medical residency match (where roughly 20,000 doctors apply to 3,000-4,000 hospital programs) and the New York City public high school match (where the number of programs, roughly 800, far exceeds the average program capacity of around 100 students).} This extension is illustrated in  \Cref{fig3}, which plots the incidence of justified envy when the number of object types increases linearly with $n$, while the capacity of each object type is fixed at $20$.\footnote{The simulation
protocol for this exercise is described in Online Appendix~\ref{oa:simulation-details}.}   (Preferences and priorities remain drawn iid uniformly.)

\begin{figure}[h]
\caption{Incidences of justified envy among incidences of envy.}
\label{fig3}
\begin{center}
\includegraphics[width=11cm,height=9cm]{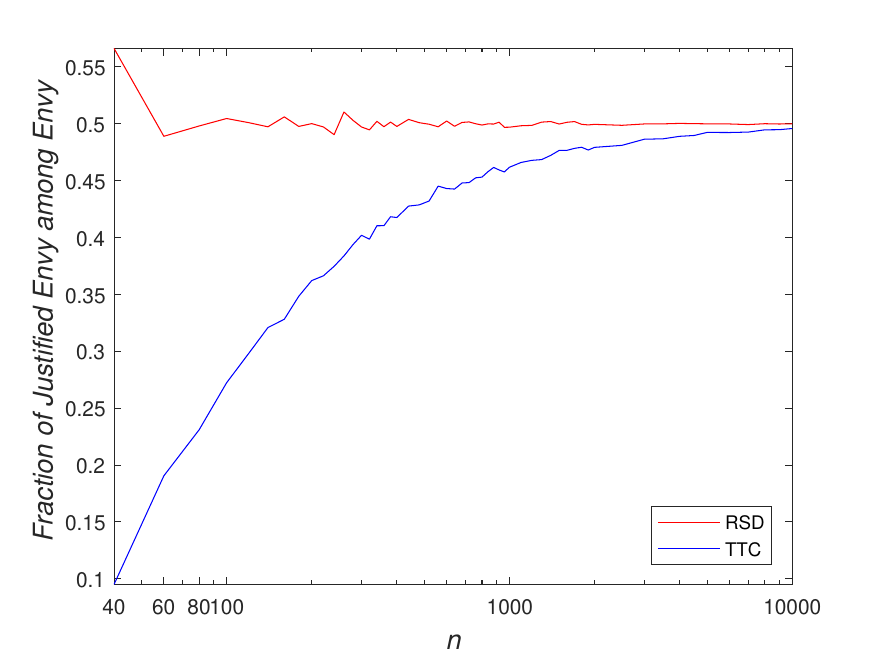}
\end{center}
\par
{\footnotesize {\ {Note: The horizontal axis indicates the number $n$ of
agents. The number of object types is equal to $n/20$. The figure plots averages
over 100 (uniform iid) draws of agents' preferences and objects' priorities.}%
} }
\end{figure}

However, our irrelevance result does not extend universally to the alternative asymptotic framework, where the capacity per object type grows rapidly relative to the number of object types.\footnote{This ``large school'' model is utilized by authors such as \cite{acy:15}, \cite{azevedo/leshno:16}, \cite{che/kim/kojima:13}, and \cite{leshno/lo:17}. It naturally captures school-choice environments in certain cities, where a handful of large schools each admit hundreds of students.} To see why, suppose the number of object types remains fixed at a finite number while the total number of agents and the capacity per object type grow large.

\begin{figure}[h!]
\begin{tabular}{cc}
\includegraphics[width=8cm,height=8cm]{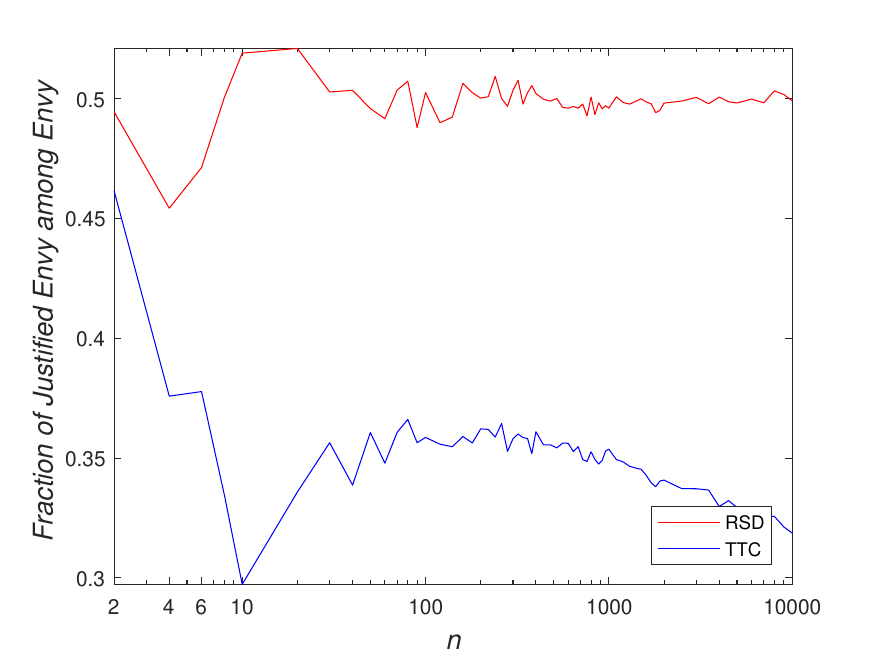} & %
\includegraphics[width=8cm,height=8cm]{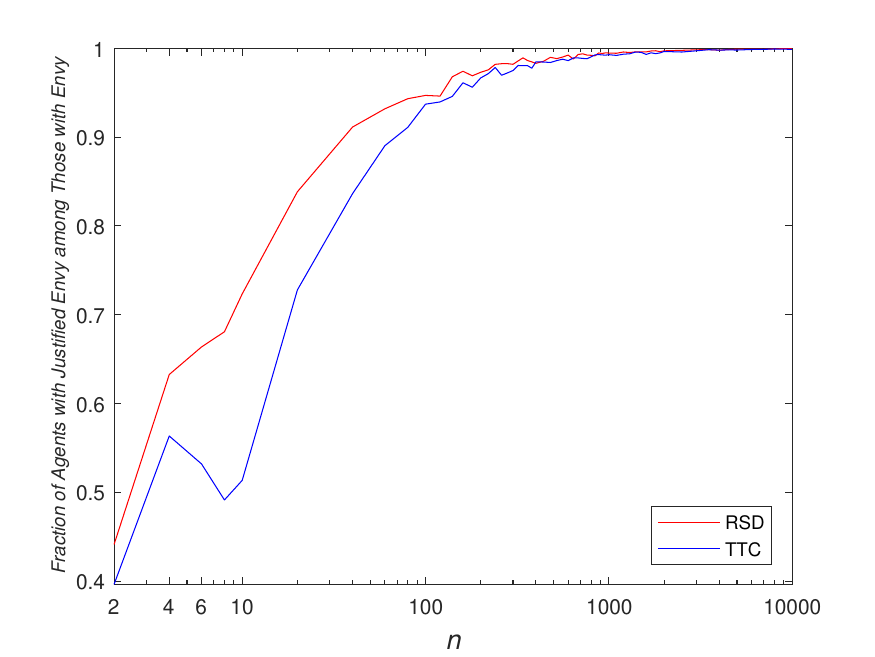}%
\tabularnewline {\footnotesize{}{}{}(a) Incidences of justified envy among
incidences of envy. } & {\footnotesize {}{}{}(b) Agents with justified envy
among those with envy. }%
\smallskip
\end{tabular}
{\footnotesize {Note: The horizontal axis indicates the size of the market,
i.e., the number of agents. The number of objects is fixed to 10. The
figures plot averages over 100 (uniform iid) draws of agents' preferences
and objects' priorities.}
}
\caption{Difference between TTC and RSD}
\label{fig_bis}
\end{figure}

In this scenario, the proportion of agents assigned via short cycles under TTC no longer vanishes in probability. Consequently, as demonstrated in Panel (a) of   \Cref{fig_bis}, the performance gap between TTC and RSD in eliminating total incidences of justified envy remains persistent, even as the market grows large.

Interestingly, convergence is restored if we evaluate performance using an alternative
metric: the ratio of agents who experience justified envy out of all agents who experience any form of envy.\footnote{Specifically, we count the number of agents experiencing justified envy and divide it by the total number of agents experiencing envy (justified or otherwise). In short, the first measure (Panel a) captures the \emph{intensity} of justified envy by counting all individual incidences, whereas this alternative measure (Panel b) strictly counts the \emph{number of agents} holding justified envy, regardless of how many peers they justifiably envy.} Panel (b) of   \Cref{fig_bis} confirms that under this alternative measure, the difference between the two mechanisms vanishes asymptotically.\footnote{We also observe asymptotic convergence when measuring the fraction of blocking pairs.}


\section{Conclusion}
\label{sec:Conclusion}

 In this paper, we have shown that the priority-enforcing power of the Top Trading Cycles mechanism vanishes in a canonical large market. As the number of agents and objects grows large, the assignment produced by TTC generates virtually the same incidence of justified envy  (e.g., the total count of blocking pairs across the entire market)  as Random Serial Dictatorship (RSD)—a mechanism that is entirely blind to priorities.   By establishing that TTC’s explicit use of priorities becomes asymptotically irrelevant, our results challenge the conventional rationale for recommending TTC as the leading Pareto-efficient and strategy-proof mechanism in environments where respecting institutional priorities or mitigating justified envy is an important policy objective.

 However, this asymptotic irrelevance should not necessarily be interpreted as a design flaw of TTC. Recall from \cite{acprt:17} that TTC is justified-envy minimal among Pareto-efficient and strategy-proof mechanisms. In light of this result, the vanishing role of priorities may instead reflect the fundamental tension created by simultaneously requiring Pareto efficiency and strategy-proofness. In large markets, these two requirements may simply leave too little scope for priorities to play a meaningful role.\footnote{Indeed, in our environment, one can find a Pareto-efficient but non-strategy-proof mechanism—such as the Boston mechanism—that generates significantly less justified envy than RSD, even in a large market.} 

Our findings suggest two natural avenues for future research. First, it remains an important theoretical and empirical task to systematically delineate the priority structures and market characteristics---such as the ``large school'' environments featuring high capacity-to-program ratios---under which TTC retains a significant advantage over RSD in eliminating justified envy. Second, our results underscore the severe tension inherent in demanding strict Pareto efficiency, exact strategy-proofness, and the elimination of justified envy. Existing mechanisms, such as TTC and Gale-Shapley's Deferred Acceptance, strictly satisfy two of these desiderata but can perform poorly on the third. To the extent that a significant loss of fairness is the inevitable cost of strictly enforcing efficiency and strategy-proofness, relaxing these requirements may yield more acceptable outcomes. Indeed, \cite{che/tercieux:15} demonstrate that it is possible to design mechanisms that achieve efficiency, no justified envy, and truth-telling \emph{approximately} as the economy grows large. For policymakers and market participants alike, embracing such asymptotic approximations may offer a highly promising alternative to the rigid trade-offs imposed by traditional mechanisms.



\section*{Appendix}
\appendix

\numberwithin{theorem}{section}
\numberwithin{lemma}{section}
\numberwithin{proposition}{section}
\numberwithin{corollary}{section}
\numberwithin{remark}{section}

\section{Proof of  \Cref{Markov}}

\label{sec:Markov}

Given the nature of conditioning mentioned earlier, it is crucial for our
purpose to keep track of the agents and objects that \emph{can} draw their
partners at random and those who \emph{cannot} in each round of TTC. This
requires us to investigate the probabilistic structure known as random
rooted forests.

To begin, consider any two finite sets $I$ and $O$, with cardinalities $%
|I|=n, |O|=o$. A \textbf{bipartite digraph} $G=(I\cup O, E)$ consists of
vertices $I$ and $O$ on two separate sides and directed edges $E\subset
(I\times O)\cup (O\times I)$, comprising ordered pairs of the form $(i,o)$
or $(o,i)$ (corresponding to edge originating from $i$ and pointing to $o$
and an edge from $o$ to $i$, respectively). A \textbf{rooted tree} is a {%
bipartite digraph that consists of a ``root'' vertex $u$ and a collection of
vertices such that there is exactly one directed path from each of the
latter vertices to the root.}\footnote{%
Sometimes, a tree is defined as an acyclic undirected connected graph. In
such a case, a tree is rooted when we name one of its vertices a ``root.''
Starting from such a rooted tree, if all edges now have a direction leading
toward the root, then the out-degree of any vertex (except the root) is 1.
So the two definitions are actually equivalent.} A \textbf{rooted forest} is
a bipartite graph which consists of a collection of disjoint rooted trees. A
\textbf{spanning rooted forest over }$I\cup O$ is a forest comprising
vertices $I\cup O$.
From now on, a spanning forest will be understood as being over $I\cup O$.

\subsection{Markov Properties of Spanning Rooted Forests Induced by TTC}

We first record the deferred-ranking observation that connects the iid primitive rankings to the forest process used below. At the beginning of round $t$, let $I_t$ and $O_t$ be the agents and objects still present, and let $\mathcal H_t$ contain the sets removed before $t$ and all comparisons already used to identify the outgoing edges that remain in the current forest. Conditional on $\mathcal H_t$ and on the current forest, if a remaining agent $i$ is a root, then her most preferred object in $O_t$ is uniform over $O_t$; these choices are independent across root agents. Symmetrically, if a remaining object $o$ is a root, then its highest-priority agent in $I_t$ is uniform over $I_t$, independently across root objects and independently of the root agents' choices. This is the standard deferred-decision property of independent uniform permutations: the history has revealed only comparisons needed for earlier outgoing edges or alternatives no longer available, so the unqueried relative order over currently available alternatives remains uniform.

The preceding observation justifies representing TTC by a random sequence of spanning rooted
forests. {More precisely, consider the beginning of each round, when the
directed edges to the agents and objects that are assigned in the previous
round have been removed, but the agents/objects have not yet pointed to
their preferred remaining agents/objects. The remaining vertices and
directed edges form a spanning rooted forest: the roots are the remaining
agents/objects who had pointed in the previous round to objects/agents that
have been assigned by the end of the previous round. The edges emanating from
these vertices have been removed, so they become roots. The other
agents/objects point to the objects/agents that still remain, so the edges
emanating from them remain intact. Clearly, the resulting graph constitutes
a spanning rooted forest. Specifically, at the beginning of round 1 of TTC,
we have the trivial forest consisting of $|I|+|O|$ roots, or isolated
vertices without edges. In the example of  \Cref{sec:Markov Property},
there are 6 separate trees: $\{1\},\{2\},\{3\},\{a\},\{b\},\{c\}$. During
round 1, each vertex in $I$ randomly points to a vertex in $O$ and each
vertex in $O$ randomly points to a vertex in $I$. Suppose agents 1 and 3
point to $c$, and agent 2 points to $a$, and all objects point to 3. Once we
delete the realized cycles ($3-c$ in the example) and the edges pointing to
them, we again get a spanning rooted forest, which consists of 3 rooted
trees: $\{1\}, \{2\to a \}, \{b\}$.} Again, the objects that are roots
randomly point to a remaining individual and the individuals that are roots
randomly point to a remaining object. Once cycles are cleared we again
obtain a rooted forest and the process repeats in the same manner.

Formally, the random sequence of forests, $F_{1},F_{2},....$, is defined as
follows. First, we let $F_{1}$ be a trivial unique forest consisting of $%
|I|+|O|$ trees with isolated vertices, forming their own roots. For any $%
i=2,...$, we first create a random directed edge from each root of $F_{i-1}$
to a vertex on the other side, and then delete the resulting cycles (these
are the agents and objects assigned in round $i-1$) and $F_{i}$ is defined
to be the resulting rooted forest. Note that this random sequence of
spanning rooted forests is a Markov chain.

For any rooted forest $F_i$, let $N_i=I_i\cup O_i$ be its vertex set and $%
k_i=(k_i^I, k_i^O)$ be the vector denoting the numbers of roots on both
sides, and use $(N_i, k_i)$ to summarize this information. And let $\mathcal{%
F}_{N_{i},k_{i}}$ denote the set of all rooted forests having $N_{i}$ as the
vertex set and $k_{i}$ as the vector of its root numbers.

Given $N_{i},k_{i}$ and $N_{i+1},k_{i+1}$, for each forest $F\in \mathcal{F}%
_{N_{i+1},k_{i+1}}$, one can compute the number of possible pairs $%
(F^{\prime },\phi )$ that could have given rise to $F$, where $F^{\prime
}\in \mathcal{F}_{N_{i},k_{i}}$ and $\phi $ maps the roots of $F^{\prime }$
in $I_{i}$ to its vertices in $O_{i}$ as well as the roots of $F^{\prime }$
in $O_{i}$ to its vertices in $I_{i}$. In words, such a pair $(F^{\prime
},\phi)$ corresponds to a set $N_i$ of agents and objects remaining at the
beginning of round $i$ of TTC, of which $k_i^I$ agents of $I_i$ and $k_i^O$
objects have lost their favorite parties (and thus they must repoint to new
partners in $N_i$ under TTC in round $i$), and the way in which they repoint to vertices available at the start of round $i$ on the opposite side, some of which may be eliminated in that round, causes the new forest $F$ to
emerge at the beginning of round $i+1$ of TTC (after cycles are cleared).
There are typically multiple such pairs $(F^{\prime },\phi )$ that could
give rise to $F$. Let $\beta (I_i,O_i,k_i^I, k_i^O
;I_{i+1},O_{i+1},k_{i+1}^I, k_{i+1}^O )$ denote the number of pairs $%
(F^{\prime },\phi )$, $F^{\prime }\in \mathcal{F}_{N_{i},k_{i}}$, causing $F$
to arise.

The first important observation we make is that $\beta$ does not depend on
the particular $F\in \mathcal{F}_{N_{i+1},k_{i+1}}$. That is, its dependence
on sets $(I_i,O_i, ;I_{i+1},O_{i+1})$ is only through their
cardinalities---specifically, $|I_{i}|,\ |O_{i}|$, ${k}_{i}$, $k_{i+1}$, $%
|I_{i+1}|$, and $|O_{i+1}|$, or equivalently $|I_{i}|,\ |O_{i}|$, ${k}_{i}$,
$k_{i+1}$, and $|N_{i}|-|N_{i+1}|$.

In order to show that this number does not depend on the particular $F\in
\mathcal{F}_{N_{i+1},k_{i+1}}$, let us, for any given $F\in \mathcal{F}%
_{N_{i+1},k_{i+1}}$, construct all such pairs. To build any such pair $%
(F^{\prime },\phi )$, one can simply choose a quadruplet $(a,b,c,d)$ of four
non-negative integers with $a+c=k_i^I$ and $b+d=k_i^O$ and proceed by,


\begin{enumerate}
\item[(i)] choosing $c$ old roots from $I_{i+1}$, and similarly, $d$ old
roots from $O_{i+1}$,

\item[(ii)] choosing $a$ old roots from $I_{i}\backslash I_{i+1}$ and
similarly, $b $ old roots from $O_{i}\backslash O_{i+1}$,

\item[(iii)] choosing a partition into cycles of $N_{i}\backslash N_{i+1}$,
each cycle of which contains at least one old root from (ii),\footnote{%
Within round $i$ of TTC, one cannot have a cycle creating only with vertices
that are not roots in the forest obtained at the beginning of round $i$.
This is due to the simple fact that a forest is an acyclic graph. Thus, each
cycle creating must contain at least one old root. Given that, by
definition, these roots are eliminated from the set of available nodes in
round $i+1$, these old roots that each cycle must contain must be from (ii).}

\item[(iv)] choosing a mapping of the $k_{i+1}^I+k_{i+1}^O$ new roots to $%
N_{i}\backslash N_{i+1}$.\footnote{%
Since, by definition, any root in $F\in \mathcal{F}_{N_{i+1},k_{i+1}}$ does
not point, this means that, in the previous round, this node was pointing to
another node which was eliminated at the end of that round.}
\end{enumerate}

Clearly, the number of pairs $(F^{\prime },\phi )$, $F^{\prime }\in \mathcal{%
F}_{N_{i},k_{i}}$, satisfying the above restrictions depends only on $%
|I_{i}|,\ |O_{i}|$, ${k}_{i}$, $k_{i+1}$, and $|N_{i}|-|N_{i+1}|$.\footnote{%
Recall that by definition of TTC, whenever a cycle creates, the same number
of individuals and objects must be eliminated in this cycle. Hence, $%
|O_{i}|-|O_{i+1}|=|I_{i}|-|I_{i+1}|$ and $%
|N_{i}|-|N_{i+1}|=2(|I_{i}|-|I_{i+1}|)$.} Accordingly, we now write this
number as $\beta (|I_{i}|,|O_{i}|, {k}_{i};|N_{i}|-|N_{i+1}|,k_{i+1})$. (We
derive $\beta$ explicitly in \Cref{sec:lem:beta} of Online Appendix.)

With this observation in hand, we can prove the following result.

\begin{lemma}
\label{lem:uniform} \label{uniform} Given $(N_j, k_j), j=1, ..., i$, every
(rooted) forest of $\mathcal{F}_{N_i, k_i}$ is equally likely.
\end{lemma}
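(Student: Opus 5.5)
Argue by induction on $i$, using the Markov property of the forest chain $F_1,F_2,\dots$ and the observation just established: for any target forest $F\in\mathcal F_{N_{i+1},k_{i+1}}$, the number of predecessor pairs $(F',\phi)$ with $F'\in\mathcal F_{N_i,k_i}$ that generate $F$ equals $\beta(|I_i|,|O_i|,k_i;|N_i|-|N_{i+1}|,k_{i+1})$, independent of $F$. The base case $i=1$ is trivial, since $\mathcal F_{N_1,k_1}$ consists only of the forest with $|I|+|O|$ isolated roots.

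For the inductive step, assume that conditional on $(N_j,k_j)_{j\le i}$ the forest $F_i$ is uniform on $\mathcal F_{N_i,k_i}$. Given $F_i=F'$, the repointing map $\phi$ sends each of the $k_i^I$ agent roots to a uniformly random object in $O_i$ and each of the $k_i^O$ object roots to a uniformly random agent in $I_i$, independently. This uses the deferred-decision observation recorded at the start of this subsection: conditional on the history and the current forest, root choices are uniform over current vertices on the opposite side and independent. Hence each map $\phi$ has probability $|O_i|^{-k_i^I}|I_i|^{-k_i^O}$, which does not depend on $F'$ or $\phi$. Consequently, for any $F\in\mathcal F_{N_{i+1},k_{i+1}}$,
\[
\Pr\{F_{i+1}=F\mid (N_j,k_j)_{j\le i}\}=\frac{\beta(|I_i|,|O_i|,k_i;|N_i|-|N_{i+1}|,k_{i+1})}{|\mathcal F_{N_i,k_i}|\,|O_i|^{k_i^I}|I_i|^{k_i^O}},
\]
because the pairs $(F',\phi)$ map deterministically to forests and the events are disjoint across pairs. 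The right-hand side is the same for every $F$ with vertex set $N_{i+1}$ and root vector $k_{i+1}$. The event $\{(N_{i+1},k_{i+1})\text{ takes the given value}\}$ is the disjoint union of $\{F_{i+1}=F\}$ over $F\in\mathcal F_{N_{i+1},k_{i+1}}$. Conditioning on it therefore leaves $F_{i+1}$ uniform on $\mathcal F_{N_{i+1},k_{i+1}}$, which completes the induction.

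The main obstacle is justifying that the probability of $(F',\phi)$ factorizes as (uniform mass of $F'$) times (uniform product mass of $\phi$) even after conditioning on the whole past summary $(N_j,k_j)_{j\le i}$, not just on $F_i$. Two facts handle this. First, the forest sequence is Markov, so given $F_i$ the repointing is independent of earlier history; this is where the deferred-decision observation is invoked. Second, $(N_j,k_j)_{j\le i}$ is a function of $F_1,\dots,F_i$. I would make this explicit by conditioning on the full path $F_1,\dots,F_i$, applying the counting argument, and then averaging over paths consistent with $(N_j,k_j)_{j\le i}$. Because the resulting conditional law of $F_{i+1}$ is uniform on every such path, the average is uniform as well. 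A minor point remains: the roots of $F_{i+1}$ must be exactly the surviving vertices whose target was removed. This is built into steps (i)–(iv) defining $\beta$, so no additional work is needed there.
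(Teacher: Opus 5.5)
Your first two paragraphs follow the paper's proof step for step. Both argue by induction from the trivial forest $F_1$. Both use the fact that $\beta$ does not depend on $F$. Both take $\phi_i$ to be uniform given $F_i=F'$, by the Markov/deferred-decision property, and derive the expression $\beta/\bigl(|\mathcal F_{N_i,k_i}|\,|O_i|^{k_i^I}|I_i|^{k_i^O}\bigr)$. Both then normalize over $\mathcal F_{N_{i+1},k_{i+1}}$.

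One sentence in your last paragraph is false, however, and the ``explicit'' route it describes would not work. You claim that, conditional on a full path $F_1,\dots,F_i$, the law of $F_{i+1}$ is uniform on $\mathcal F_{N_{i+1},k_{i+1}}$. Fixing the full path fixes $F_i=F'$, and then every non-root edge of $F'$ between surviving vertices persists in $F_{i+1}$. So at most $|O_i|^{k_i^I}|I_i|^{k_i^O}$ forests are reachable from a given $F'$, and typically most of $\mathcal F_{N_{i+1},k_{i+1}}$ is unreachable.

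What is uniform and path-independent given the full path is the law of $\phi_i$. Averaging over paths consistent with $(N_j,k_j)_{j\le i}$ shows that $\phi_i$ is also uniform given $\{F_i=F'\}$ together with $(N_j,k_j)_{j\le i}$. Uniformity of $F_{i+1}$ then appears only after summing over $F'$ with the uniform weights from the induction hypothesis. That is exactly what your second paragraph does, and what the paper's third equality does. Drop that sentence; the proof stands without it.
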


\begin{proof}
We prove this result by induction on $i$. Since for $i=1$, by construction,
	the trivial forest is the unique forest which can occur, this is trivially
	true for $i=1$. Fix $i\geq 1$, and assume our statement is true for $i$. Let us show that it holds for $i+1$.

	Fix $N_{i}=I_{i}\cup O_{i}\supset N_{i+1}=I_{i+1}\cup O_{i+1},$
	and $k_i$ and $k_{i+1}$.   For each forest  $F\in
	\mathcal{F}_{N_{i+1},k_{i+1}}$, we consider a possible pair $%
	(F^{\prime },\phi )$ that could have given rise to $F$,  where $F^{\prime }\in \mathcal{F}_{N_{i},k_{i}}$ and $%
	\phi $ maps the roots of $F^{\prime }$ in $I_{i}$ to its vertices in $O_{i}$
	as well as the roots of $F^{\prime }$ in $O_{i}$ to its vertices in $I_{i}$.
	As we already argued, each forest $F\in
	\mathcal{F}_{N_{i+1},k_{i+1}}$ arises from the same number of such pairs---i.e., that the number of pairs $
	(F^{\prime },\phi )$, $F^{\prime }\in \mathcal{F}_{N_{i},k_{i}}$, causing $F$ to arise does not depend on the particular $F\in \mathcal{F}_{N_{i+1},k_{i+1}}$.

	The number of such pairs is given by $\beta
	(|I_{i}|,|O_{i}|,{k}_{i};|N_{i}|-|N_{i+1}|,k_{i+1})$.
	Let $\phi _{i}=(\phi _{i}^{I},\phi _{i}^{O})$ where $\phi _{i}^{I}$ is
	the random mapping from the roots of $F_{i}$ in $I_{i}$ to $O_{i}$ and $\phi
	_{i}^{O}$ is the random mapping from the roots of $F_{i}$ in $O_{i}$ to $%
	I_{i}$. Let $\phi =(\phi ^{I},\phi ^{O})$ be a generic mapping of that sort.
	Since, conditional on $F_{i}=F^{\prime }$, the mappings $\phi _{i}^{I}\ $and
	$\phi _{i}^{O}$ are uniform, we get
	\begin{align}\label{conditional}
	\Pr(F_{i+1}=F|F_{i}=F^{\prime })=&\frac{1}{|O_{i}|^{k_{i}^{I}}}\frac{1}{%
		|I_{i}|^{k_{i}^{O}}}\sum_{\phi }\Pr(F_{i+1}=F|F_{i}=F^{\prime },\phi _{i}=\phi
	) . \end{align}%
	
	Therefore, we obtain
	\begin{align}
	&\Pr (F_{i+1} =F|(N_{1},k_{1}),...,(N_{i},k_{i}))\cr
	=& \sum_{F'\in \mathcal{F}_{N_i, k_i}} \Pr (F_{i+1} =F, F_i=F'|(N_{1},k_{1}),...,(N_{i},k_{i}))\cr
	=& \sum_{F'\in \mathcal{F}_{N_i, k_i}} \Pr (F_{i+1} =F|(N_{1},k_{1}),...,(N_{i},k_{i}),F_i= F') \Pr(F_i=F'|(N_1, k_1), ...., (N_{i}, k_{i}))\cr
	=&\frac{1}{|\mathcal{F}%
		_{N_{i},k_{i}}|}\sum_{F^{\prime }\in \mathcal{F}%
		_{N_{i},k_{i}}}\Pr(F_{i+1}=F|F_{i}=F^{\prime }) \cr
	=& \frac{1}{|\mathcal{F}%
		_{N_{i},k_{i}}|}\sum_{F^{\prime }\in \mathcal{F}%
		_{N_{i},k_{i}}} \frac{1}{|O_{i}|^{k_{i}^{I}}}\frac{1}{%
		|I_{i}|^{k_{i}^{O}}}\sum_{\phi }\Pr(F_{i+1}=F|F_{i}=F^{\prime },\phi _{i}=\phi
	) \cr
	=& \frac{1}{|\mathcal{F}%
		_{N_{i},k_{i}}|}\frac{1}{|O_{i}|^{k_{i}^{I}}}\frac{1}{%
		|I_{i}|^{k_{i}^{O}}}\sum_{F^{\prime }\in \mathcal{F}%
		_{N_{i},k_{i}}} \sum_{\phi }\Pr(F_{i+1}=F|F_{i}=F^{\prime },\phi _{i}=\phi
	) \cr
	=&\frac{1}{|\mathcal{F}_{N_{i},k_{i}}|}\frac{1}{|O_{i}|^{k_{i}^{I}}}\frac{1}{%
		|I_{i}|^{k_{i}^{O}}}\beta (|I_{i}|,|O_{i}|, {k}_{i};|N_{i}|-|N_{i+1}|,k_{i+1}), \label{conditional1}  \end{align}
	where the third equality follows from  the Markov property of
	$\{F_{j}\}$ and the induction hypothesis,   the fourth follows from (\ref{conditional}), and the last follows from the definition of $\b$ and
	from the fact that the conditional probability in the sum of the penultimate line is $1$ or $0$, depending upon
	whether the forest $F$ arises from the pair $(F^{^{\prime }},\phi )$ or not. Note that this  probability is independent of $F\in \mathcal{F}_{N_{i+1},k_{i+1}}$.  Hence,
	\begin{align*}
	&\Pr(F_{i+1}=F|(N_{1},k_{1}),...,(N_{i},k_{i}),(N_{i+1},k_{i+1}))\cr
	=& \frac{\Pr(F_{i+1} =F|(N_{1},k_{1}),...,(N_{i},k_{i}))}{\Pr(F_{i+1} \in \mathcal{F}%
		_{N_{i+1},k_{i+1}}|(N_{1},k_{1}),...,(N_{i},k_{i}))}\cr
	=& \frac{\Pr(F_{i+1} =F|(N_{1},k_{1}),...,(N_{i},k_{i}))}
	{\sum_{\tilde F\in \mathcal{F}%
			_{N_{i+1},k_{i+1}}}    \Pr(F_{i+1}=\tilde F|(N_{1},k_{1}),...,(N_{i},k_{i}))}
	=   \frac{1}{|\mathcal{F}_{N_{i+1},k_{i+1}}|}, \end{align*}
	which proves that, given $(N_j, k_j), j=1, ..., i+1$, every rooted forest of $%
	\mathcal{F}_{N_{i+1}, k_{i+1}}$ is equally likely.
\end{proof}

The next lemma then follows easily.

\begin{lemma}
\label{lem:Nk} \label{lem:markov0} Random sequence $(N_i, k_i)$ forms a
Markov chain.
\end{lemma}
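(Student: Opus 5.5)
The plan is to read the Markov property of $(N_i,k_i)$ off the computation already carried out in the proof of \Cref{lem:uniform}, in particular equation (\ref{conditional1}). We must show that for any history $(N_1,k_1),\dots,(N_i,k_i)$ and any candidate next state $(N_{i+1},k_{i+1})$,
\[
\Pr\bigl((N_{i+1},k_{i+1})\mid (N_1,k_1),\dots,(N_i,k_i)\bigr)
\]
depends only on $(N_i,k_i)$ and $(N_{i+1},k_{i+1})$.

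The first step is to write this conditional probability as a sum over forests:
\[
\Pr\bigl((N_{i+1},k_{i+1})\mid (N_1,k_1),\dots,(N_i,k_i)\bigr)=\sum_{F\in\mathcal F_{N_{i+1},k_{i+1}}}\Pr\bigl(F_{i+1}=F\mid (N_1,k_1),\dots,(N_i,k_i)\bigr).
\]
By (\ref{conditional1}), which rests on \Cref{lem:uniform} (uniformity of $F_i$ given the history) and on the Markov property of the forest chain, each summand equals
\[
\frac{1}{|\mathcal F_{N_i,k_i}|}\,\frac{1}{|O_i|^{k_i^I}|I_i|^{k_i^O}}\,\beta\bigl(|I_i|,|O_i|,k_i;|N_i|-|N_{i+1}|,k_{i+1}\bigr).
\]
This expression does not depend on $F$. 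Hence the sum equals $|\mathcal F_{N_{i+1},k_{i+1}}|$ times it.

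Every factor in the result is a function of $(N_i,k_i)$ and $(N_{i+1},k_{i+1})$ alone. The earlier history $(N_1,k_1),\dots,(N_{i-1},k_{i-1})$ entered only through the induction hypothesis of \Cref{lem:uniform}, and it drops out. This proves the Markov property, and it also gives the explicit transition kernel.

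One point needs care: the conditioning event must have positive probability, which we simply assume for the transitions that matter. Beyond that, the only substantive issue is that (\ref{conditional1}) conditions on the entire past of the $(N,k)$ process rather than on $F_i$. That step is exactly what \Cref{lem:uniform} supplies. I therefore expect no real obstacle, just bookkeeping to confirm that $|\mathcal F_{N,k}|$ depends only on $N$ and $k$, which holds by definition.
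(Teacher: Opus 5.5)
Your proposal is correct and follows the paper's proof essentially verbatim: you sum (\ref{conditional1}) over $F\in\mathcal F_{N_{i+1},k_{i+1}}$ and observe that the result depends only on $(N_i,k_i)$ and $(N_{i+1},k_{i+1})$. Writing the sum explicitly as $|\mathcal F_{N_{i+1},k_{i+1}}|$ times the common summand simply makes visible what the paper leaves as an $F$-independent sum.
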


\begin{proof}
By (\ref{conditional1}) we must have
	\begin{align*}
	\Pr( (N_{i+1}, k_{i+1}) | (N_1, k_1), ...., (N_i, k_i))
	= &\sum_{F\in \mathcal{F}_{N_{i+1},k_{i+1}}} \Pr(F_{i+1}=F|(N_1, k_1), ...., (N_i, k_i))\\
	= &\sum_{F\in \mathcal{F}_{N_{i+1},k_{i+1}}}  \frac{1}{|\mathcal{F}_{N_{i},k_{i}}|}\frac{1}{|O_{i}|^{k_{i}^{I}}}\frac{1}{%
		|I_{i}|^{k_{i}^{O}}}\beta (|I_{i}|,|O_{i}|, {k}%
		_{i};|N_i|-|N_{i+1}|,k_{i+1}).
	\end{align*}
	Observing that the conditional probability depends only on $(N_{i+1}, k_{i+1})$ and $(N_i, k_i)$, the Markov chain property is established.
\end{proof}

The proof of  \Cref{lem:markov0} reveals in fact that the conditional
probability of $(N_{i+1}, k_{i+1})$ depends on $N_i$ only through its
cardinalities $(|I_i|, |O_i|)$, leading to the following conclusion. Let $%
n_i:=|I_i|$ and $o_i:=|O_i|$.

\begin{corollary}
\label{cor:markov} Random sequence $\{(n_i, o_i, k_i^I, k_i^O)\}$ forms a
Markov chain.
\end{corollary}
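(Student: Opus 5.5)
The corollary follows by sharpening the computation in the proof of \Cref{lem:markov0}. That computation already gives the conditional probability of $(N_{i+1},k_{i+1})$ given the full history $(N_1,k_1),\dots,(N_i,k_i)$, and shows it depends on the history only through $(N_i,k_i)$:
\[
\Pr\big((N_{i+1},k_{i+1})\mid (N_j,k_j)_{j\le i}\big)=\frac{|\mathcal F_{N_{i+1},k_{i+1}}|\,\beta(n_i,o_i,k_i;|N_i|-|N_{i+1}|,k_{i+1})}{|\mathcal F_{N_i,k_i}|\,o_i^{k_i^I}\,n_i^{k_i^O}}.
\]
The plan is to check that this expression depends on $N_i$ and $N_{i+1}$ only through the cardinalities $n_i,o_i,n_{i+1},o_{i+1}$. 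I then sum over all pairs of sets consistent with given cardinalities and use the tower property to project the chain onto $(n_i,o_i,k_i^I,k_i^O)$.

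\textbf{Step 1: cardinality dependence.} The factor $\beta$ already depends only on $n_i,o_i,k_i,k_{i+1}$ and $|N_i|-|N_{i+1}|=2(n_i-n_{i+1})$. The powers $o_i^{k_i^I}n_i^{k_i^O}$ are explicit in the cardinalities. For the remaining factors, note that $|\mathcal F_{N,k}|$ (the number of bipartite spanning rooted forests with $k^I$ roots in $I$ and $k^O$ roots in $O$) is invariant under relabelling vertices within each side. Any bijection $I\to I'$, $O\to O'$ maps forests to forests and preserves the root counts on each side, so $|\mathcal F_{N,k}|=f(|I|,|O|,k^I,k^O)$ for some function $f$. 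Hence the transition probability equals some $q(n_i,o_i,k_i;n_{i+1},o_{i+1},k_{i+1})$ for every admissible pair $N_{i+1}\subset N_i$ with $|I_i|-|I_{i+1}|=|O_i|-|O_{i+1}|$.

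\textbf{Step 2: projection.} Write $Y_i=(n_i,o_i,k_i^I,k_i^O)$, a function of $X_i=(N_i,k_i)$. Then
\[
\Pr(Y_{i+1}=y'\mid X_1,\dots,X_i)=\sum_{N_{i+1}} q(Y_i;y'),
\]
where the sum runs over subsets $N_{i+1}\subset N_i$ with cardinalities $(n',o')$ and equal numbers removed from each side. The number of such subsets is $\binom{n_i}{n'}\binom{o_i}{o'}$, which again depends only on $Y_i$ and $y'$. So the conditional law of $Y_{i+1}$ given the full $X$-history is a function of $Y_i$ alone. Since $\sigma(Y_1,\dots,Y_i)\subset\sigma(X_1,\dots,X_i)$, conditioning down with the tower property gives $\Pr(Y_{i+1}=y'\mid Y_1,\dots,Y_i)=\Pr(Y_{i+1}=y'\mid Y_i)$, which is the Markov property. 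This is the standard lumpability (Dynkin) criterion.

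\textbf{Main obstacle.} The only point needing care is the relabelling invariance of $|\mathcal F_{N,k}|$, together with making sure that the identity from the proof of \Cref{lem:markov0}, which relies on \Cref{lem:uniform}, holds for \emph{every} history. It is also necessary that all vertex sets of a given cardinality are indeed admissible. A subtle issue is that a vertex set might be reachable only with zero probability. That case is harmless: the formula gives zero through $\beta=0$, uniformly across relabellings, so the invariance argument still applies.
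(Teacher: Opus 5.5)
Your proof is correct and takes essentially the same route as the paper. Both arguments project the set-level chain $(N_i,k_i)$ onto cardinalities, using the fact from the proof of \Cref{lem:markov0} that its transition probabilities depend on the sets only through $(n_i,o_i,k_i,k_{i+1})$ and the number of vertices removed, and then sum over the admissible subsets $N_{i+1}\subset N_i$. The one difference is in how the earlier history is removed. The paper averages over $(I_i,O_i)$ using an asserted symmetry claim that these sets are uniformly distributed given the count history. Your tower-property (lumpability) step shows that this uniformity is not needed, because the summed transition probability is already the same for every $(I_i,O_i)$ of the given sizes.
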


\begin{proof}
By symmetry, given $(n_1, o_1, k_1^I, k_1^O), ..., (n_i, o_i, k_i^I, k_i^O)$, the set $(I_i, O_i)$ is chosen uniformly at random among all the ${n \choose n_i}{o\choose o_i}$ possible sets.  Hence,
	\begin{align*}& \Pr( (n_{i+1}, o_{i+1}, k_{i+1}^I, k_{i+1}^O) | (n_1, o_1, k_1^I, k_1^O), ...., (n_i, o_i, k_i^I, k_i^O)) \\
	=& \sum_{(I_i, O_i): |I_i|=n_i, |O_i|=o_i}  \Pr\{(n_{i+1}, o_{i+1}, k_{i+1}^I, k_{i+1}^O)  | (n_1, o_1, k_1^I, k_1^O), ...., (n_i, o_i, k_i^I, k_i^O), (I_i, O_i)\}\\
	& \qquad  \qquad  \times \Pr\left\{(I_i,O_i)\left |\right. (n_1, o_1, k_1^I, k_1^O), ..., (n_i, o_i, k_i^I, k_i^O)\right\}\\
	=&\left( \sum_{(I_i, O_i): |I_i|=n_i, |O_i|=o_i}  \Pr\{(n_{i+1}, o_{i+1}, k_{i+1}^I, k_{i+1}^O)  | (n_1, o_1, k_1^I, k_1^O), ...., (I_i, O_i, k_i^I, k_i^O)\}\right)\frac{1}{{n \choose n_i}{o\choose o_i}}\\
	= &\left(  \sum_{{(I_i, O_i): |I_i|=n_i, |O_i|=o_i\atop (I_{i+1}, O_{i+1}): |I_{i+1}|=n_{i+1}, |O_{i+1}|=o_{i+1}}}  \Pr\{(I_{i+1}, O_{i+1}, k_{i+1}^I, k_{i+1}^O) | (n_1, o_1, k_1^I, k_1^O), ...., (I_i, O_i, k_i^I, k_i^O)\}\right) \\
	&\qquad \times \frac{1}{{n \choose n_i}{o\choose o_i}}\\
	= &\frac{1}{{n \choose n_i}{o\choose o_i}} \sum_{{(I_i, O_i): |I_i|=n_i, |O_i|=o_i\atop (I_{i+1}, O_{i+1}): |I_{i+1}|=n_{i+1}, |O_{i+1}|=o_{i+1}}}  \Pr\{(I_{i+1}, O_{i+1}, k_{i+1}^I, k_{i+1}^O) |  (I_i, O_i, k_i^I, k_i^O)\},
	\end{align*}
	where the second equality follows from the above reasoning and the last equality follows from the Markov property of $\{ (I_i, O_i, k_i^I, k_i^O)\}$. The proof is complete by the fact that the last line, as shown in the proof of  \Cref{lem:markov0}, depends only on $ (n_{i+1}, o_{i+1}, k_{i+1}^I, k_{i+1}^O) ,  (n_i, o_i, k_i^I, k_i^O))$.
\end{proof}

\subsection{Proof of  \Cref{Markov}}



We first compute the probability of transition from $%
(n_{i},o_{i},k_{i}^{I},k_{i}^{O})$ to $%
(n_{i+1},o_{i+1},k_{i+1}^{I},k_{i+1}^{O})$ such that $k_{i+1}^{I}=\lambda^I$
and $k_{i+1}^{O}=\lambda^O$:
\begin{align*}
&\mathbf{P}(n,o, k_{i}^{I}, k_{i}^{O} ;m,\lambda^I,\lambda^O) \\
:=& \Pr
\left\{n_{i}-n_{i+1}=o_{i}-o_{i+1}=m,k_{i+1}^{I}=\lambda^{I},k_{i+1}^{O}=%
\lambda ^{O}\, \left|\right. \, n_{i}=n,o_{i}=o,k_{i}^{I}, k_{i}^{O}
\right\}.
\end{align*}

This will be computed as a fraction $\frac{\Theta }{\Upsilon }.$ The
denominator $\Upsilon $ counts the number of rooted forests in the bipartite
digraph with $k_{i}^{I}$ roots in $I_{i}$ and $k_{i}^{O}$ roots in $O_{i}$,
multiplied by the ways in which $k_{i}^{I}$ roots of $I_{i}$ could point to $%
O_{i}$ and $k_{i}^{O}$ roots of $O_{i}$ could point to $I_{i}$.\footnote{%
Given that we have $n_{i}=n\ $individuals, $o_{i}=o\ $objects and $k_{i}^{I}$
roots in $I_{i}$ and $k_{i}^{O}$ roots in $O_{i}$ at the beginning of step $%
i $ under TTC, one may think of this as the total number of possible
bipartite digraph one may obtain via TTC at the end of step $i$ when we let $%
k_{i}^{I}$ roots in $I_i$ point to their remaining most favorite object and $%
k_{i}^{O}\ $ roots in $O_i$ point to their remaining most favorite
individual.} We let $f(n,o,k_i^I,k_i^O)$ denote the number of rooted forests
in a bipartite digraph (with $n$ and $o$ vertices on both sides) containing $%
k_i^I$ and $k_i^O$ roots on both sides. Thus, since there are $%
o^{k_i^I}n^{k_i^O}$ ways in which $k_i^I$ roots in $I_{i}$ point to $O_{i}$
and $k_i^O$ roots in $O_{i}$ could point to $I_{i}$, $\Upsilon =
o^{k_i^I}n^{k_i^O}f(n,o,k_i^I,k_i^O)$.

The numerator $\Theta $ counts the number of ways in which $m$ agents are
chosen from $I_{i}$ and $m$ objects are chosen from $O_{i}$ to form a
bipartite bijection with the possible locations of old roots enumerated through $\beta$ and with each resulting cycle containing at least one old root, and for each such choice, the number of ways in
which the remaining vertices form a spanning rooted forest and the $\lambda
^I$ roots in $I_{i+1}$ point to objects in $O_{i}\setminus O_{i+1}$ and $%
\lambda ^O$ roots in $O_{i+1}$ point to agents in $I_{i}\setminus I_{i+1}$.
To compute this, recall $\beta (n, o, k_i^I, k_i^O; m, \lambda^I, \lambda^O),
$ denotes, for any $F$ with $n-m$ agents and $o-m$ objects and roots $%
\lambda^I $ and $\lambda^O$ on both sides, the total number of pairs $%
(F^{\prime }, \phi)$ that could have given rise to $F$, where $F^{\prime }$
has $n$ agents and $o$ objects with $(k_i^I,k_i^O)$ roots and $\phi$ maps
the roots to current opposite-side vertices at the start of round $i$.

The numerator $\Theta$ is now computed as $\Theta ={\binom{n}{m}}{\binom{o}{m%
}}f(n-m,o-m,\lambda ^I,\lambda ^O)\beta (n,o,k_i^I, k_i^O ;m,\lambda
^I,\lambda ^O)$.

In  \Cref{sec:lem:beta} of Online Appendix, we explicitly
compute $\beta (n,o,k_i^I, k_i^O ;m,\lambda ^I,\lambda ^O)$, then,
collecting terms, we show in \Cref{sec:computation
transition} that
\begin{equation*}
\mathbf{P}(n,o,k_i^I,k_i^O ;m,\lambda ^I,\lambda ^O)=\frac{\Theta}{\Upsilon}
=\frac{1}{o^{n}n^{o}}\left( \frac{ n!}{(n-m)!}\right) \left( \frac{o!}{(o-m)!%
}\right) m^{\lambda ^I+\lambda ^O}f(n-m,o-m,\lambda ^I,\lambda ^O).
\end{equation*}
A key observation is that this expression does not depend on $(k_i^I,k_i^O)$%
, which, together with  \Cref{cor:markov}, implies that $%
(n_{i},o_{i})$ forms a Markov chain.

Its transition probability can be derived by summing the expression over all
possible $(\lambda ^I,\lambda ^O)$'s:
\begin{equation*}
p_{n,o;m}:=\sum_{0\leq \lambda ^I\leq n-m,0\leq \lambda ^O\leq o-m} \mathbf{P%
}(n,o,k_i^I,k_i^O ;m,\lambda ^I,\lambda ^O)\text{.}
\end{equation*}
which is equal to the formula stated in the lemma $\left(\frac{m}{(on)^{m+1}}%
\right)\left(\frac{n!}{(n-m)!} \right)\left( \frac{o!}{(o-m)!}
\right)(o+n-m) $ as shown in  \Cref{sec:computation
transition}.

\section{Proof of  \Cref{prop:stopping}}

\label{sec:stopping}

We establish several lemmas before proving the proposition. Assume wlog that
$|I|=n\leq o=|O|$ (the proof is symmetric when $n\geq o$). Let $\{n_{t}\}$
be the (random) sequence corresponding to the number of individuals at Step $%
t$ of TTC. By our main result, this is a Markov chain. Let $c_{t}$ be the
number of cyclic vertices on the individual side obtained in the graph of
TTC at Step $t$ so that $n_{t+1}=n_{t}-c_{t}$ for each $t\geq 1$. In
general, $n_{t}=n-\sum_{k=1}^{t-1}c_{k}$. Thus, $\mathbb{E}%
[n_{t}]=n-\sum_{k=1}^{t-1}\mathbb{E}[c_{k}]$. Finally, letting $\{o_{t}\}$
be the (random) sequence corresponding to the number of objects at Step $t$
of TTC, we observe that $n_{t}\leq o_{t}$ for all $t\geq 1$ since $n\leq o$
(and the same number of individuals and objects are assigned at each step).

The following lemma shows that, if we start from any Step $t_{0}$ of TTC
with a number of agents/objects $n_{t_{0}}\geq \delta n$, for any
arbitrarily small $\delta>0$, then with a significant probability, after a
number of steps linear in $\sqrt{n}$ we will end up with fewer than $\delta
n $ agents remaining in the market.

\begin{lemma}\label{lem:stopping}
Consider any Step $t_{0}\geq 1$ of TTC. Fix any $\delta >0$ and let $c:=\frac{2}{\sqrt{\pi \delta }}$. There is $\gamma >0$ such that $\liminf \Pr
\{n_{t_{0}+c\sqrt{n}}\leq \delta n\left \vert n_{t_{0}}\geq \delta n\right.
\}>\gamma $ where $\gamma $ does not depend on $t_{0}$.
\end{lemma}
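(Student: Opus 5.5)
The plan is to use the Markov property of \Cref{Markov} together with a Wald-type identity for the cumulative number of agents cleared. Three observations drive it: while the chain stays above $\delta n$, each round clears on average at least about $\sqrt{\pi\delta n/4}$ agents; over $c\sqrt n$ rounds, with $c=2/\sqrt{\pi\delta}$, this adds up to about $n$ agents; and on the event that the chain never drops below $\delta n$, the total cleared is at most $(1-\delta)n$. Comparing these forces the event to have probability bounded away from one.

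\textbf{Step 1 (uniform drift bound).} From the closed form $p_{n,o;m}$ in \Cref{Markov}, I would show the following. For every $\varepsilon>0$ there is $n_0$ such that, whenever a round starts with $k\ge \delta n$ agents and $l\ge k$ objects (in the relevant range of $l$), the conditional mean clearance satisfies
\[
\mu(k,l):=\sum_m m\,p_{k,l;m}\ \ge\ (1-\varepsilon)\sqrt{\pi\delta n/4}
\quad\text{for all } n\ge n_0 .
\]
For $k=l$ this is the Laplace-type asymptotic $\mu(k,k)\approx\sqrt{\pi k/4}$ mentioned after \Cref{Markov}. One writes $\frac{k!}{(k-m)!\,k^m}\approx e^{-m^2/2k}$ and sums $m^2 e^{-m^2/k}$ against the prefactor. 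This gives a bound that is uniform in $k\ge\delta n$ and does not depend on the history.

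The unbalanced case $l>k$ needs monotonicity or an analogous computation. With $o_t-n_t=o-n$ fixed, the asymptotic mean is of order $\sqrt{\pi kl/(2(k+l))}$. I expect this to be the main obstacle: one needs a clean, uniform lower bound from the explicit formula, including control of the Gaussian-type approximation uniformly over $k\in[\delta n,n]$. In the balanced case relevant to the paper, the bound above is exactly what is needed.

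\textbf{Step 2 (stopped sum).} Let $\tau:=\inf\{t\ge t_0: n_t<\delta n\}$ and $s:=t_0+c\sqrt n$. Define
\[
S:=\sum_{t=t_0}^{(\tau\wedge s)-1} c_t = n_{t_0}-n_{\tau\wedge s}.
\]
Each summand is taken at a time $t<\tau$, so $n_t\ge\delta n$. By the Markov property, $\mathbb E[c_t\mid\mathcal F_{t-1}]\ge(1-\varepsilon)\sqrt{\pi\delta n/4}$ on $\{t<\tau\wedge s\}$, and this event is predictable. Summing, or applying optional stopping to the bounded stopping time $\tau\wedge s$, gives
\[
\mathbb E[S\mid n_{t_0}\ge\delta n]\ \ge\ (1-\varepsilon)\sqrt{\pi\delta n/4}\;\mathbb E[(\tau\wedge s)-t_0]\ \ge\ (1-\varepsilon)\,q\,c\sqrt n\sqrt{\pi\delta n/4}=(1-\varepsilon)\,q\,n .
\]
Here $q:=\Pr\{\tau>s\mid n_{t_0}\ge\delta n\}$.

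\textbf{Step 3 (upper bound and conclusion).} Always $S\le n_{t_0}\le n$. On $\{\tau>s\}$ we have $n_s\ge\delta n$, hence $S\le(1-\delta)n$. Therefore
\[
\mathbb E[S\mid n_{t_0}\ge\delta n]\le q(1-\delta)n+(1-q)n .
\]
Combining with Step 2 gives $q(1-\varepsilon)\le 1-q\delta$, that is, $q\le 1/(1+\delta-\varepsilon)$.

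Choose $\varepsilon=\delta/2$. Then $\liminf_n\Pr\{n_{s}\le\delta n\mid n_{t_0}\ge\delta n\}\ge 1-q\ge \delta/(2+\delta)>\gamma$ for, say, $\gamma:=\delta/(4+2\delta)$. This $\gamma$ is independent of $t_0$, because the drift bound in Step 1 is uniform over histories. Rounding $c\sqrt n$ to an integer affects nothing.
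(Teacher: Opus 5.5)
Your argument is correct and is essentially the paper's. A uniform lower bound of about $\sqrt{\pi\delta n/4}$ on the expected per-round clearance while at least $\delta n$ agents remain, summed over $c\sqrt n$ rounds, forces an expected clearance of about $n$. That is incompatible with the chain staying above $\delta n$ with probability near one. The paper phrases this as a contradiction along a subsequence and uses monotonicity of $n_t$ where you use the stopping time $\tau$.

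On the differences in detail:

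\begin{itemize}
\item \textbf{Drift bound.} The paper takes it from Jaworski's asymptotic $\sqrt{\pi ab/(2(a+b))}$, which is increasing in $b$. That removes your unbalanced-case worry. Equivalently, compare termwise in the exact mean $\sum_i (k)_i(l)_i/(k^i l^i)$, using $(l)_i/l^i\ge (k)_i/k^i$ for $l\ge k$.
\item \textbf{Constant.} Your stopped-sum version gives an explicit $\gamma=\delta/(4+2\delta)$ that is visibly uniform in $t_0$; the paper only asserts this uniformity.
\item \textbf{Filtration.} Take $\mathcal F_{t-1}$ to be the natural filtration of $(n_t,o_t)$. Conditional on the full TTC history (the current forest), the per-round drift can be $O(1)$, so your Step 2 bound would fail for that filtration.
\end{itemize}
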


\begin{proof}   In the sequel, we condition on the event that $n_{t_{0}}\geq \delta n$. By the Markov chain property, we can view the process as starting with a number $n_{t_{0}}$ of agents/objects. To avoid notational clutter, we suppress the dependence on this conditioning event. Assume, to get a contradiction along a subsequence, that $\Pr \{n_{t_{0}+c\sqrt{n}}>\delta n\}\to1$. Then $\Pr\{n_t>\delta n\}\to1$ uniformly for all $t_{0}\le t\le t_{0}+c\sqrt n$.

We use the cyclic-agent version of Jaworski's formula. If a round begins with $a$ agents and $b$ objects, the expected number of cyclic agents is asymptotic to $\sqrt{\pi ab/(2(a+b))}$. Hence, uniformly over states with $a\ge \delta n$ and $b\ge a$, for all sufficiently large $n$,
\[
\mathbb E[c_t\mid n_t=a,o_t=b]\ge (1-\varepsilon)\frac12\sqrt{\pi\delta n}.
\]
Because $\Pr\{n_t>\delta n\}\to1$ uniformly over the block, it follows that, for all $t_0\le t\le t_0+c\sqrt n$ and all sufficiently large $n$,
\[
\mathbb E[c_t]\ge (1-\varepsilon)\frac12\sqrt{\pi\delta}\sqrt n.
\]

Thus, for all sufficiently large $n$,
\begin{eqnarray*}
\mathbb{E}[n_{t_{0}+c\sqrt{n}}]
&=&\mathbb{E}\left[n_{t_{0}}-\sum_{k=t_{0}}^{t_{0}+c\sqrt{n}-1}c_{k}\right] \\
&\leq& n-\left(c\sqrt n\right)(1-\varepsilon)\frac12\sqrt{\pi\delta}\sqrt n \\
&=& n-(1-\varepsilon)n = \varepsilon n,
\end{eqnarray*}
Hence $\Pr\{n_{t_{0}+c\sqrt n}\le \delta n\}\to1$, contradicting the assumed subsequence. Therefore the desired probability is bounded away from zero. The bound is uniform in $t_0$ by the Markov property.
\end{proof}

\begin{lemma}\label{lem:stopping2}
Fix any $\delta >0$ and let $c:=\frac{2}{\sqrt{\pi \delta }}$. For any $\xi
>0$, for any $k\in \mathbb{N}$ large enough, $\Pr \{n_{kc\sqrt{n}}\leq
\delta n\}>\xi$ for any $n$ large enough.
\end{lemma}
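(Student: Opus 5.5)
The plan is to iterate \Cref{lem:stopping} over $k$ consecutive blocks of $c\sqrt n$ rounds each. We use the Markov property of $\{(n_t,o_t)\}$ from \Cref{Markov} to make the block-by-block failure probabilities multiply. Two observations drive the argument. First, $n_t$ is nonincreasing, so the event $\{n_{t}\le \delta n\}$ is absorbing: once it occurs at some round $t\le kc\sqrt n$, it persists through round $kc\sqrt n$. (If TTC has terminated we set $n_t$ to its terminal value $0$ or $n-o$, which is still $\le\delta n$ or is handled by the convention $n\le o$.) Second, \Cref{lem:stopping} gives a constant $\gamma>0$, independent of the starting round $t_0$, such that for all large $n$,
\[
\Pr\{n_{t_0+c\sqrt n}\le\delta n\mid n_{t_0}\ge \delta n\}\ge\gamma .
\]

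Define the block endpoints $t_j:=1+jc\sqrt n$ (ignoring integer parts), for $j=0,\dots,k$. Let $A_j:=\{n_{t_j}>\delta n\}$. By monotonicity, $A_{j}\subseteq A_{j-1}$. Hence
\[
\Pr(A_k)=\prod_{j=1}^{k}\Pr(A_j\mid A_{j-1})\le (1-\gamma)^k ,
\]
where each factor $\Pr(A_j\mid A_{j-1})\le 1-\gamma$ follows from \Cref{lem:stopping} applied with $t_0=t_{j-1}$.

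The conditioning step needs care. \Cref{lem:stopping} conditions only on $n_{t_{j-1}}\ge\delta n$, but $A_{j-1}$ carries the same information, since $A_{j-1}\subseteq A_{j-2}\subseteq\cdots$, so conditioning on $A_{j-1}$ is conditioning on the whole intersection $A_0\cap\cdots\cap A_{j-1}$. The Markov property lets us replace the past by the current state. To be safe, I would restate \Cref{lem:stopping} in the stronger form that the proof actually gives: the lower bound $\gamma$ holds uniformly over all initial states $(a,b)$ with $a\ge\delta n$ and $b\ge a$. Indeed, its proof uses only Jaworski's bound on $\mathbb E[c_t\mid n_t=a,o_t=b]$, which is uniform in such states. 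With that uniformity, $\Pr(A_j\mid \mathcal F_{t_{j-1}})\le 1-\gamma$ on $A_{j-1}$, and the product bound follows by the tower property.

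Finally, given $\xi<1$, choose $k$ with $(1-\gamma)^k<1-\xi$. Then for all large $n$,
\[
\Pr\{n_{kc\sqrt n}\le\delta n\}\ge 1-(1-\gamma)^k>\xi .
\]
Here we use that $kc\sqrt n\ge t_k-1$, with the off-by-one harmless by monotonicity. The claim also holds for every larger $k$, again by monotonicity.

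The main obstacle is the uniformity in the initial state: the "$\liminf$" in \Cref{lem:stopping} is over $n$ for a fixed conditioning event, and the bound is needed simultaneously for every block and every state reached. I would handle this by extracting from the proof of \Cref{lem:stopping} the uniform statement, with the contradiction argument run along a worst-case sequence of initial states rather than a fixed one.
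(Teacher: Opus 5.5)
Your proposal is correct and is essentially the paper's argument: split the first $kc\sqrt n$ rounds into $k$ blocks, use monotonicity of $n_t$ and the Markov property so that \Cref{lem:stopping} bounds each block's failure probability by $1-\gamma$, obtain $\Pr\{n_{kc\sqrt n}\le\delta n\}\ge 1-(1-\gamma)^k$, and take $k$ large. Your requirement that $\gamma$ be uniform over all states $(a,b)$ with $a\ge\delta n$, $b\ge a$ (obtained by rerunning the contradiction along worst-case initial states) makes rigorous what the paper only asserts as ``uniform in $t_0$ by the Markov property''. One small slip: with $t_k=1+kc\sqrt n$, monotonicity gives $n_{kc\sqrt n}\ge n_{t_k}$, so the off-by-one runs the wrong way. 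Shortening each block by one round fixes it, since that costs the proof of \Cref{lem:stopping} only $O(\sqrt n)$ in expected clearance.
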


\begin{proof} We know by the previous lemma that there is $\gamma >0$ such
	that for $n$ large enough, $\Pr \{n_{c\sqrt{n}}\leq \delta n\}>\gamma $.
	First, note that $\Pr \{n_{2c\sqrt{n}}\leq \delta n\}>\gamma +(1-\gamma
	)\gamma $. Indeed, because $\{n_{t}\}$ is a decreasing sequence, $\{n_{c%
		\sqrt{n}}\leq \delta n\}$ implies $\{n_{2c\sqrt{n}}\leq \delta n\}$. Hence,
	we have
	\begin{align*}
		& \Pr\{n_{2c\sqrt{n}} \leq \delta n\} \\
		&=\Pr \{n_{c\sqrt{n}}\leq \delta n\} \Pr \{n_{2c\sqrt{n}}\leq \delta n\left
		\vert n_{c\sqrt{n}}\leq \delta n \right. \}+\Pr \{n_{c\sqrt{n}}>\delta
		n\} \Pr \{n_{2c\sqrt{n}}\leq \delta n\left \vert n_{c\sqrt{n}}>\delta n
		\right. \} \\
		&=\Pr \{n_{c\sqrt{n}}\leq \delta n\}+\Pr \{n_{c\sqrt{n}}>\delta n\} \Pr
		\{n_{2c\sqrt{n}}\leq \delta n\left \vert n_{c\sqrt{n}}>\delta n \right.
		\}
	\end{align*}Applying  \Cref{lem:stopping} for $t_{0}=1$, we know that, for $n$ large enough, $\Pr
	\{n_{c\sqrt{n}}\leq \delta n\}>\gamma $. In addition, applying  \Cref{lem:stopping} for $%
	t_{0}=c\sqrt{n}$, we know that, for $n$ large enough, $\Pr \{n_{2c\sqrt{n}%
	}\leq \delta n\left \vert n_{c\sqrt{n}}>\delta n \right. \}>\gamma $.
	Thus, we obtain that for any $n$ large enough, $\Pr \{n_{2c\sqrt{n}}\leq \delta n\} \geq \gamma +(1-\gamma
	)\gamma $, as claimed.
	
	Similar reasoning yields that for each $k\in \mathbb{N}$, there is $N$ large
	enough so that%
	\begin{equation*}
	\Pr \{n_{kc\sqrt{n}}\leq \delta n\}>\sum_{\ell =1}^{k}(1-\gamma )^{\ell
		-1}\gamma =1-(1-\gamma )^{k}\text{.}
	\end{equation*}%
	Note that the right-hand side is equal to the cumulative distribution at $k$
	of a geometric distribution with parameter $\gamma $. Clearly, this goes to $%
	1$ as $k$ increases and so our argument is complete. Thus, if we fix any $%
	\xi >0$, we can find $k$ large enough so that $\Pr \{n_{kc\sqrt{n}}\leq
	\delta n\}>\xi $ for any $n$ large enough, as was to be proved.
\end{proof}

We are now ready to prove  \Cref{prop:stopping}.

\begin{proof}
	Fix any $\alpha >0$ and $\xi <1$, we claim that there is $n$
	large enough so that $\Pr \{ \frac{T}{n}\leq \alpha \}>\xi $. Consider any $%
	\delta \in (0,\alpha )$ and fix $k\in \mathbb{N}$ and $c=\frac{4}{\sqrt{\pi
			\delta }}$ in order to have $\Pr \{n_{kc\sqrt{n}}\leq \delta n\}>\xi $ for any $n$ large enough;
	which is well-defined by   \Cref{lem:stopping2}. Note that $\{n_{kc\sqrt{n}}\leq \delta n\}$
	implies that $T\leq kc\sqrt{n}+\delta n$. Because, $\delta <\alpha $, the
	term on the right-hand side of the inequality is smaller than $\alpha n$
	when $n$ is large enough. Thus, for $n$ large enough, we obtain $\Pr \{n_{kc%
		\sqrt{n}}\leq \delta n\} \leq \Pr \{T\leq kc\sqrt{n}+\delta n\} \leq \Pr \{%
	\frac{T}{n}\leq \alpha \}$. Now, because, for $n$ large enough, $\Pr \{n_{kc\sqrt{n}}\leq
	\delta n\}>\xi $, we obtain that for $n$ large enough, $\Pr \{ \frac{T}{n}%
	\leq \alpha \}>\xi $, as claimed.
\end{proof}

\section{The Number of Objects Assigned via Short Cycles}

Recall the random sequence of forests, $F_{1},F_{2},....$ is defined in the
main text, where $F_{1}$ is a trivial unique forest consisting of $|I|+|O|$
trees with isolated vertices, forming their own roots. For any $i=2,...$, we
first create a random directed edge from each root of $F_{i-1}$ to a vertex
on the other side, and then delete the resulting cycles (these are the
agents and objects assigned in round $i-1$) and $F_{i}$ is defined to be the
resulting rooted forest.

We begin with the following question: If round $k$ of TTC begins with a
rooted forest $F$, what is the expected number of short-cycles that will
form at the end of that round? We will show that, irrespective of $F$, this
expectation is bounded by $2$. To show this, we will make a couple of
observations.

To begin, let $n_{k}$ be the cardinality of the set $I_{k}$ of individuals
in our forest $F$ and let $o_{k}$ be the cardinality of $O_{k}$, the set of $%
F$'s objects. And, let $A\subset I_k$ be the set of roots on the individuals
side of our given forest $F$ and let $B\subset O_k$ be the set of its roots
on the objects side. Let their cardinalities be $a$ and $b$, respectively.

Now, observe that for any $(i,o)\in A\times B$, the probability that $(i,o)$
forms a short-cycle is $\frac{1}{n_{k}}\frac{1}{o_{k}}$. For any $(i,o)\in
(I_{k}\backslash A)\times B$, the probability that $(i,o)$ forms a
short-cycle is $\frac{1}{n_{k}}$ if $i$ points to $o$ and $0$ otherwise.
Similarly, for $(i,o)\in A\times (O_{k}\backslash B)$, the probability that $%
(i,o)$ forms a short-cycle is $\frac{1}{o_{k}}$ if $o$ points to $i$ and $0$
otherwise. Finally, for any $(i,o)\in (I_{k}\backslash A)\times
(O_{k}\backslash B)$, the probability that $(i,o)$ forms a short-cycle is $0$
(by definition of a forest, $i$ and $o$ cannot be pointing to each other in
the forest $F$). So, given the forest $F$, the expectation of the number $%
S_{k}$ of short-cycles is
\begin{eqnarray*}
\mathbb{E}\left[ S_{k}\vert F_k=F \right] &=&\mathbb{E}\left[ \sum_{(i,o)\in
I_{k}\times O_{k}}\mathbf{1}_{\{(i,o)\text{ is a short-cycle}\}}\bigg\vert %
F_k=F \right] \\
&=&\sum_{(i,o)\in I_{k}\times O_{k}}\mathbb{E}\left[ \mathbf{1}_{\{(i,o)%
\text{ is a short-cycle}\}}\left\vert F_k=F\right. \right] \\
&=&\sum_{(i,o)\in A\times B}\mathbb{E}\left[ \mathbf{1}_{\{(i,o)\text{ is a
short-cycle}\}}\left\vert F_k=F\right. \right] \\
&& +\sum_{(i,o)\in (I_{k}\backslash A)\times B}\mathbb{E}\left[ \mathbf{1}%
_{\{(i,o)\text{ is a short-cycle}\}}\left\vert F_k=F\right. \right] \\
&&+\sum_{(i,o)\in A\times (O_{k}\backslash B)}\mathbb{E}\left[ \mathbf{1}%
_{\{(i,o)\text{ is a short-cycle}\}}\left\vert F_k=F\right. \right] \\
&=&\sum_{(i,o)\in A\times B}\Pr \{(i,o)\text{ is a short-cycle}\left\vert
F_k=F\right. \} \\
&&+\sum_{(i,o)\in (I_{k}\backslash A)\times B}\Pr \{(i,o)\text{ is a
short-cycle}\left\vert F_k=F\right. \} \\
&&+\sum_{(i,o)\in I_{k}\times (O_{k}\backslash B)}\Pr \{(i,o)\text{ is a
short-cycle}\left\vert F_k=F\right. \} \\
&\leq &\frac{ab}{n_{k}o_{k}}+\frac{n_{k}-a}{n_{k}}+\frac{o_{k}-b}{o_{k}} \\
&=&2-\frac{ao_{k}+bn_{k}-ab}{n_{k}o_{k}}\le 2.
\end{eqnarray*}

Observe that since $o_{k}\geq b$, the above term is smaller than $2$. Thus,
as claimed, we obtain the following result.\footnote{%
Note that the bound is pretty tight: if the forest $F$ has one root on each
side and each node$\ $which is not a root points to the (unique) root on the
opposite side, the expected number of short-cycles given $F$ is $\frac{1}{%
n_{k}o_{k}}+\frac{n_{k}-1}{n_{k}}+\frac{o_{k}-1}{o_{k}}\rightarrow 2$ as $%
n_{k},o_{k}\rightarrow \infty $. Thus, the conditional expectation of $s_{k}$
is bounded by $2$ and, asymptotically, this bound is tight. However, we can
show, using a more involved computation, that the unconditional expectation
of $s_{k}$ is bounded by $1$. The details of the computation are available
upon request.}

\begin{proposition}
\label{prop: conditional exp short cycle} If TTC round $k$ begins with any forest $F$, \begin{equation*}
\mathbb{E}\left[ S_{k}\left\vert F_k=F\right. \right] \leq 2.
\end{equation*}
\end{proposition}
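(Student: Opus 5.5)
The plan is to use linearity of expectation over agent-object pairs. Conditional on $F_k=F$, the random objects are only the new edges emitted by the roots. The deferred-ranking observation at the start of \Cref{sec:Markov} makes these edges independent and uniform: each root agent in $A$ points uniformly into $O_k$, and each root object in $B$ points uniformly into $I_k$. The non-root edges are fixed by $F$. A short cycle is a pair $(i,o)$ with $i\to o$ and $o\to i$ in the round-$k$ graph. So I will write $S_k=\sum_{(i,o)\in I_k\times O_k}\mathbf 1_{\{i\to o,\ o\to i\}}$ and bound each term according to whether $i$ and $o$ are roots.

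I will split $I_k\times O_k$ into four blocks. In $(I_k\setminus A)\times(O_k\setminus B)$ both edges are fixed edges of the forest $F$. Mutual pointing there would be a cycle inside an acyclic forest, so the probability is $0$. In $A\times B$ the two draws are independent, so the probability is exactly $\frac{1}{o_k}\cdot\frac{1}{n_k}$. The block sum is therefore $ab/(n_ko_k)$.

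The mixed blocks carry the main point of the argument, and I want to avoid the $n_k o_k$ crude count there. Fix a non-root agent $i\in I_k\setminus A$. She points, deterministically in $F$, to a unique object $o(i)$. Hence among all pairs $(i,o)$ with $o\in B$, at most one (namely $o=o(i)$, if $o(i)\in B$) has positive probability, and that probability is $1/n_k$, the chance that $o(i)$ draws $i$. Summing over $i$ gives at most $(n_k-a)/n_k\le 1$. Symmetrically, each non-root object $o\in O_k\setminus B$ points to one fixed agent, which contributes at most $1/o_k$, so that block totals at most $(o_k-b)/o_k\le1$.

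Adding the blocks gives
\[
\mathbb E[S_k\mid F_k=F]\le \frac{ab}{n_ko_k}+\frac{n_k-a}{n_k}+\frac{o_k-b}{o_k}=2-\frac{ao_k+bn_k-ab}{n_ko_k}.
\]
Since $b\le o_k$, we have $ao_k-ab\ge0$, and so the numerator is nonnegative and the bound is at most $2$.

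The only delicate step is justifying that, conditional on the forest, the roots' new pointers are uniform and independent. This is needed because the conditioning comes from the TTC history. I will take it directly from the deferred-ranking observation and the forest-Markov construction already established in \Cref{sec:Markov}. Once that is granted, the remaining work is bookkeeping.
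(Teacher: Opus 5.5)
Your proposal is correct and follows the paper's proof essentially step for step. It uses the same linearity-of-expectation decomposition into the four root/non-root blocks, with per-block contributions $ab/(n_ko_k)$, at most $(n_k-a)/n_k$, at most $(o_k-b)/o_k$, and $0$, and closes with the same simplification via $b\le o_k$.
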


Given that our upper bound holds for any forest $F$, we get the following
corollary.

\begin{corollary}
\label{cor: conditional exp short cycle} For any round $k$ of TTC, $\mathbb{E}%
\left[ S_{k}\right] \leq 2.$
\end{corollary}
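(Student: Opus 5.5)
This follows from \Cref{prop: conditional exp short cycle} by the law of iterated expectations. I would take $S_k:=0$ on the event that TTC has already terminated before round $k$, as in the proof of \Cref{prop:short-cycles}.

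On the event $\{T\ge k\}$, round $k$ begins with a well-defined random spanning rooted forest $F_k$. Its vertex set and root sets are determined by the history up to round $k$, and the forest takes values in a finite set. Conditioning on $F_k$ and summing over its possible realizations gives
\[
\mathbb E[S_k]=\sum_{F}\Pr\{F_k=F,\,T\ge k\}\,\mathbb E[S_k\mid F_k=F]\le 2\sum_F \Pr\{F_k=F,\,T\ge k\}=2\Pr\{T\ge k\}\le 2 .
\]
The contribution from $\{T<k\}$ is zero because $S_k=0$ there.

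The one point to verify is that the per-forest bound really is a bound on the conditional expectation given $F_k=F$, whatever history produced $F$. This holds because, by the deferred-ranking observation at the start of \Cref{sec:Markov}, conditional on the history and the current forest, root agents point uniformly and independently to remaining objects, and root objects point uniformly and independently to remaining agents. So the pairwise probabilities used in the computation preceding \Cref{prop: conditional exp short cycle} remain valid under any conditioning on the past. The bound $2$ is uniform in $F$, so averaging over the distribution of $F_k$ preserves it.
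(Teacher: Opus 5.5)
Your proposal is correct and follows the paper's route: the corollary is the forest-uniform bound of Proposition~\ref{prop: conditional exp short cycle} averaged over the distribution of $F_k$. The tower-property computation, the convention $S_k=0$ after termination, and the appeal to the deferred-decision property for arbitrary histories just spell out what the paper leaves implicit.
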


\section{Proof of  \Cref{thm: irrelevance}}


\subsection{Proof of  \Cref{thm: irrelevance} (a)}

\label{proof:corner stone}

Recall our observation that {the priority rank realized for each object $o$
that is assigned via a long cycle is uniform on $\{R^*_o+1,...,n\}$}. This
in turn means that the rank enjoyed by each object $o\in \bar O$ is
stochastically dominated by the uniform distribution across $\{\left \lceil
\log^{1+\varepsilon }(n)\right \rceil +1,...,n\}$. To precisely characterize
the implication of this observation, recall that $R_{o}$ (resp. $R_{i}$)
denote the rank enjoyed by object $o$ (resp. enjoyed by individual $i$)
under TTC. We also let an arbitrary vector $(x_{k})_{k\in K}$ be denoted by $%
\mathbf{x}_{K}$. For instance, $\mathbf{R}_{O}$ stands for $\{R_o\}_{o \in
O} $. We are now ready to present the cornerstone for the proof of part (a)
of   \Cref{thm: irrelevance}.

\begin{proposition}
\label{corner stone} Fix any $I^{\prime }\subseteq I$ and $O^{\prime
}\subseteq O^{\prime \prime }\subseteq O$. For any $\boldsymbol{\ell }%
_{O^{\prime }},\boldsymbol{\ell }_{I^{\prime }}$,
\begin{equation*}
\Pr \left\{ \mathbf{R}_{O^{\prime }}\leq \boldsymbol{\ell }_{O^{\prime }},%
\mathbf{R}_{I^{\prime }}\leq \boldsymbol{\ell }_{I^{\prime }}\mid \bar{O}%
=O^{\prime \prime }\right\} \geq \prod_{o\in O^{\prime }}\Pr \left\{
Y_{o}\leq \ell _{o}\right\} \Pr \left\{ \mathbf{R}_{I^{\prime }}\leq
\boldsymbol{\ell }_{I^{\prime }}\mid \bar{O}=O^{\prime \prime }\right\}
\end{equation*}%
where $\{Y_{o}\}_{o\in O^{\prime }}$ is a collection of iid random variables
where each $Y_{o}$ follows the uniform distribution over $\{\left\lceil \log
^{1+\varepsilon }(n)\right\rceil +1,...,n\}$. In addition, for any $%
\boldsymbol{\ell }_{O^{\prime }},\boldsymbol{\ell }_{I^{\prime }}$,
\begin{equation*}
\Pr \left\{ \mathbf{R}_{O^{\prime }}\leq \boldsymbol{\ell }_{O^{\prime }},%
\mathbf{R}_{I^{\prime }}\leq \boldsymbol{\ell }_{I^{\prime }}\mid \bar{O}%
=O^{\prime \prime }\right\} \leq \prod_{o\in O^{\prime }}\Pr \left\{
U_{o}\leq \ell _{o}\right\} \Pr \left\{ \mathbf{R}_{I^{\prime }}\leq
\boldsymbol{\ell }_{I^{\prime }}\mid \bar{O}=O^{\prime \prime }\right\}
\end{equation*}%
where $\{U_{o}\}_{o\in O^{\prime }}$ is a collection of iid random variables
where each $U_{o}$ follows the uniform distribution over $\{1,...,n\}$.
\end{proposition}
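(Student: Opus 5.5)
We prove both inequalities with one symmetry (re-randomization) argument. The goal is to show that, conditional on $\bar O=O''$, on the agents' preference ranks $\mathbf R_{I'}$, and on the ``skeleton'' of the TTC run, the priority ranks $\{R_o\}_{o\in O''}$ are \emph{independent}. Each $R_o$ is uniform on $\{R_o^*+1,\dots,n\}$ with $R_o^*\le \log^{1+\varepsilon}(n)$. The objects in $O'\setminus O''$ get no conditional law from this argument and will be handled by trivial bounds.

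\textbf{Step 1: the symmetry for one object.} Fix an object $o$ and the round $t$ in which it is assigned. Let $j$ be the agent $o$ points to in round $t$, and let $J_o$ be the set of agents still present in round $t$ other than $j$. For a permutation $\sigma$ of $J_o$, define $\succ^\sigma$ by permuting the positions in $\succ_o$ of the agents in $J_o$ among themselves. Two facts follow.
\begin{itemize}
\item $\succ^\sigma$ is obtained from $\succ$ by a measure-preserving bijection of the priority space.
\item On the event that $o$ is assigned at round $t$ via a long cycle, TTC run on $(P,\succ^\sigma)$ produces exactly the same pointing graphs in every round and hence the same matching.
\end{itemize}
The second fact holds because $\succ_o$ is queried only through the identity of its top remaining agent. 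Before round $t$ that agent lies among agents ranked above everyone in $J_o\cup\{j\}$, or is $j$ itself; at round $t$ it is $j$. Those positions are untouched by $\sigma$.

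Consequently, $R_o^*$, the event $\{o\in\hat O\}$, all preference ranks $\mathbf R_I$, and all other objects' priorities are invariant. Meanwhile the position of the recipient $\mu(o)\in J_o$ is mapped uniformly over the positions occupied by $J_o$. Those positions are exactly the agents ranked below $j$ together with the agents ranked above $j$ who have already left; the latter are excluded because they are removed before round $t$.

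A subtlety is that the recipient need not lie below $j$ in a way that makes the positions exactly $\{R_o^*+1,\dots,n\}$. We will restrict $\sigma$ to permutations of the agents in $J_o$ that lie below $j$ in $\succ_o$. Every remaining agent other than $j$ is below $j$, since $j$ is the top remaining agent. So $J_o$ is precisely the set of agents ranked $R_o^*+1,\dots,n$, and the conditional law of $R_o$ is uniform on $\{R_o^*+1,\dots,n\}$.

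\textbf{Step 2: joint independence across $O''$.} Apply Step 1 simultaneously to every $o\in O''$ using the product group $\prod_{o}\mathrm{Sym}(J_o)$. The permutations act on distinct coordinates $\succ_o$ and each preserves the whole TTC trajectory, so they commute and jointly preserve everything except the recipients' positions.

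Condition on the $\sigma$-algebra $\mathcal G$ generated by the orbit of the whole group. Then $\bar O$ and $\mathbf R_{I}$ are $\mathcal G$-measurable, and the ranks $\{R_o\}_{o\in O''}$ are independent given $\mathcal G$, with $R_o\sim U\{R_o^*+1,\dots,n\}$.

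\textbf{Step 3: the bounds.} Since $R_o^*\le \lceil\log^{1+\varepsilon}n\rceil$ on $\tilde O$, the law $U\{R_o^*+1,\dots,n\}$ is stochastically dominated by $Y_o$ and stochastically dominates $U_o$. This gives $\Pr\{Y_o\le\ell\}\le \Pr\{R_o\le \ell\mid\mathcal G\}\le \Pr\{U_o\le \ell\}$.

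Multiply over $o\in O'$. This requires $O'\subseteq O''$ for the lower bound; for the upper bound, objects outside $O''$ could only use the trivial bound $1$, so we also use $O'\subseteq O''$ as in the statement. Then integrate against $\mathbf 1\{\mathbf R_{I'}\le\boldsymbol\ell_{I'}\}$ over $\{\bar O=O''\}$. Both events are $\mathcal G$-measurable, which yields the two displayed product inequalities.

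\textbf{Main obstacle.} The hard part is verifying in Step 1 that the permutation leaves the \emph{entire} TTC run invariant. In particular one must check that the round-$t$ cycle through $o$ is unchanged even though $\mu(o)$'s position moves. This works because $o$'s outgoing edge goes to $j$, not to $\mu(o)$, and later rounds never query $\succ_o$ again.

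The second point to check is that the group actions for different objects are compatible. Each $J_o$ depends on the run, but the run is invariant under the group, so the action is well defined. We will make this rigorous by first conditioning on the full sequence of pointing graphs $\mathcal P$, and then noting that the set of $\succ$ consistent with $\mathcal P$ is a product over objects of sets closed under the relevant permutations.
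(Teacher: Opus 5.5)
Your orbit-conditioning strategy is sound, but Step 1 permutes the wrong set of agents, and the sentence meant to repair this is false. You define $J_o$ as the agents still present in round $t$ other than $j$. From ``every remaining agent other than $j$ is below $j$'' you then conclude that $J_o$ is precisely the set of agents with priority ranks $R_o^*+1,\dots,n$ at $o$. That only gives $J_o\subseteq\{\text{agents ranked below } j \text{ at } o\}$. The two sets coincide only if every agent who left before round $t$ happens to rank above $j$ at $o$. In general, agents ranked below $j$ who were assigned to other objects in rounds $1,\dots,t-1$ lie outside $J_o$.

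With your group, conditional on $\mathcal G$, $R_o$ is therefore uniform only on the positions currently held by members of $J_o$. This is a history-dependent subset of $\{R_o^*+1,\dots,n\}$, and uniform laws on such subsets are not sandwiched between $U_o$ and $Y_o$. For example, if $o$ is cleared via a long cycle in a round where only $j$ and the recipient remain, then $R_o$ is $\mathcal G$-measurable. Then $\Pr\{R_o\le\ell\mid\mathcal G\}\in\{0,1\}$ violates one of your two pointwise inequalities in Step 3 for a suitable $\ell$. So Step 3 does not go through as written.

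The fix is to let $\sigma$ range over all permutations of the agents ranked strictly below $j$ in $\succ_o$, whether or not they are still in the market at round $t$. Your invariance argument already covers this larger group. Agent $j$ is present in every round $s\le t$, so in each of those rounds $o$'s top remaining agent sits at or above $j$'s position, which $\sigma$ does not touch. The relative order among agents below $j$ is never consulted before $o$ leaves, and an induction over rounds shows the whole run, hence $\bar O$ and $\mathbf R_I$, is unchanged. Under this group the orbit-conditional law of $R_o$ is exactly uniform on $\{R_o^*+1,\dots,n\}$, independently across $o\in O''$, and your Steps 2--3 then go through.

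This is exactly what the paper's lemma exploits. It swaps the recipient of $o$ with whichever agent holds rank $\ell'_o>r_o$, an agent who may well have left before round $t$. This shows that $\Pr\{\mathbf R_I=\boldsymbol\ell_I,\mathbf R_{O'}=\boldsymbol\ell_{O'},\mathbf R^*_{O'}=\mathbf r_{O'},\bar O=O''\}$ does not depend on $\boldsymbol\ell_{O'}\gg\mathbf r_{O'}$. The paper then obtains independence through a factorization over disjoint $O_1,O_2$. Once repaired, your orbit conditioning reaches the same independence in one step, but the essential ingredient is the same.
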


Roughly speaking, the proposition asserts that the distribution of the rank
enjoyed by each object within $\bar O$ is ``squeezed'' (according to
first-order stochastic dominance) in between uniform from $\{ \left \lceil
\log ^{1+\varepsilon }(n)\right \rceil +1,...,n\}$ from above and uniform
from $\{ 1,...,n\}$ from below, independently of the distribution of the
ranks enjoyed by the agents and the ranks enjoyed by the other objects in
the set $\bar O$.

To prove  \Cref{corner stone}, we start with the following lemma.

\begin{lemma}
Fix any $O^{\prime \prime }\subseteq O$. For any $O^{\prime }\subseteq
O^{\prime \prime }$, for any $\boldsymbol{\ell }_{I}:=(\ell _{i})_{i\in I},%
\boldsymbol{\ell }_{O^{\prime }}:=(\ell _{o})_{o\in O^{\prime }}$ and $%
\mathbf{R}^*_{O^{\prime }}:=(r_{o})_{o\in O^{\prime }}$,
\begin{equation*}
\Pr \{ \mathbf{R}_{I}=\boldsymbol{\ell }_{I},\mathbf{R}_{O^{\prime }}=%
\boldsymbol{\ell }_{O^{\prime }},\mathbf{R}^*_{O^{\prime }}=\mathbf{r}%
_{O^{\prime }},\bar{O}=O^{\prime \prime }\}=0
\end{equation*}%
if $\ell _{o}\leq r_{o}$ for some $o\in O^{\prime }$ and is a number, possibly zero, whose value does not depend on $\boldsymbol{\ell }_{O^{\prime }}$ otherwise.
\end{lemma}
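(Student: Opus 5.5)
The plan is a measure-preserving bijection (symmetry) argument on the space of profiles $(P,\succ)$, which are uniformly distributed. Fix $O''$, $O'\subseteq O''$, $\boldsymbol{\ell}_I$ and $\mathbf r_{O'}$. The first claim is immediate from the definitions. If $o\in O'\subseteq O''\subseteq\hat O$, then $o$ is assigned via a long cycle, so $R_o>R^*_o$. Hence the event is empty whenever $\ell_o\le r_o$ for some $o\in O'$.

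For the second claim, take two vectors $\boldsymbol{\ell}_{O'}$ and $\boldsymbol{\ell}'_{O'}$ with $\ell_o,\ell'_o>r_o$ for all $o\in O'$. It suffices to handle vectors that differ in a single coordinate $o$, and then chain such changes one coordinate at a time. So let $\ell_o\neq\ell'_o$, both exceeding $r_o$. I would construct a bijection $\sigma$ on profiles that maps the event with $\boldsymbol{\ell}_{O'}$ onto the event with $\boldsymbol{\ell}'_{O'}$. The map $\sigma$ changes only $\succ_o$. It swaps the positions in $\succ_o$ of the two agents ranked $\ell_o$ and $\ell'_o$, and leaves everything else unchanged. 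Because profiles are uniform, $\sigma$ preserves probability, and this gives the claimed independence of $\boldsymbol{\ell}_{O'}$.

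The key step is to verify that on the event in question, $\sigma$ does not change the TTC run at all. Object $o$'s priority matters in TTC only through whom $o$ points to in each round. Let $\tau$ be the round in which $o$ is assigned. In every round up to and including $\tau$, $o$ points to an agent of rank at most $r_o$, since priority ranks of pointed-to agents are nondecreasing over rounds. Both swapped positions $\ell_o,\ell'_o$ are strictly greater than $r_o$. I also need both swapped agents to still be present in every round $t\le\tau$ whenever that matters, or else to be irrelevant to the pointing. Whom $o$ points to is the best-ranked remaining agent, and that agent has rank $\le r_o<\min(\ell_o,\ell'_o)$. Swapping positions strictly below it therefore cannot change the top remaining agent.

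Given this, the entire sequence of rounds, cycles and assignment is identical under $\sigma$. Consequently the following are all unchanged: the agents' preference ranks $\mathbf R_I$, the pointed-to ranks $\mathbf R^*$, the sets $\hat O$ and $\tilde O$ (hence $\bar O$), and the ranks $R_{o'}$ for $o'\neq o$, since only $\succ_o$ changed. The recipient $j$ of $o$ is also unchanged. Its rank in $\succ_o$ was $\ell_o$, so after the swap it becomes $\ell'_o$. This shows $\sigma$ maps the event with $\ell_o$ bijectively onto the event with $\ell'_o$, because $\sigma$ is an involution and the same argument applies in reverse.

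The main obstacle is the claim that $o$'s priority order below rank $r_o$ never affects TTC. I would state it carefully as a lemma: the TTC outcome is a function of $P$, of $\succ_{o'}$ for $o'\ne o$, and of the restriction of $\succ_o$ to its top $R^*_o$ positions together with the unordered set of the remaining agents. I would prove the lemma by induction on rounds, using monotonicity of the pointing ranks.

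With this lemma, summing over $\mathbf r_{O'}$ yields Proposition~\ref{corner stone}. Conditionally on $\bar O=O''$, on $\mathbf R^*$, and on $\mathbf R_I$, the ranks $R_o$ for $o\in O'$ are independent and uniform on $\{r_o+1,\dots,n\}$. Since $r_o\le\lceil\log^{1+\varepsilon}n\rceil$ on $\tilde O$, these uniforms lie between $Y_o$ and $U_o$ in stochastic order.
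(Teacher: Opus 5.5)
Your proposal is correct and is essentially the paper's argument. The paper likewise swaps the recipient $k$ of $o$ with the agent at the target rank $\ell'_o$ in $\succ_o$, and observes that the TTC path is unchanged because $o$ only ever points to agents of rank at most $r_o$; this gives an injection between the two events, with the reverse inequality by symmetry. Your version changes one coordinate at a time and uses an involution, so it gets the bijection directly, but this is only a cosmetic variant of the paper's simultaneous swap over all $o\in O'$.
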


\begin{proof}
In the sequel, to save on notation, we let $\mathcal{E}\ $be $\{ \bar{O}%
=O^{\prime \prime }\}$. We first note that
\begin{equation*}
\Pr \{ \mathbf{R}_{I}=\boldsymbol{\ell }_{I},\mathbf{R}_{O^{\prime }}=%
\boldsymbol{\ell }_{O^{\prime }},\mathbf{R}^*_{O^{\prime }}=\mathbf{r}%
_{O^{\prime }},\mathcal{E}\}=0.
\end{equation*}%
\ if for some $o\in O^{\prime }$, $\ell _{o}\leq r_{o}$. Indeed, by definition, when object $o$ is involved in a cycle, it points to the agent whose priority rank is $r_o$. In addition, $%
o\in O^{\prime }\subseteq \bar{O}$ implies that object $o$ is assigned via a
long cycle, hence, the individual $o$ is matched to must have a priority rank
strictly greater than $r_{o}$.

Now, for any $\boldsymbol{\ell }_{O^{\prime }},\boldsymbol{\ell }_{O^{\prime
}}^{\prime }\ $satisfying $\boldsymbol{\ell }_{O^{\prime }},\boldsymbol{\ell
}_{O^{\prime }}^{\prime }\gg \mathbf{r}_{O^{\prime }}$, we argue that
\begin{equation*}
\Pr \{ \mathbf{R}_{I}=\boldsymbol{\ell }_{I},\mathbf{R}_{O^{\prime }}=%
\boldsymbol{\ell }_{O^{\prime }},\mathbf{R}^*_{O^{\prime }}=\mathbf{r}%
_{O^{\prime }},\mathcal{E}\}=\Pr \{ \mathbf{R}_{I}=\boldsymbol{\ell }_{I},%
\mathbf{R}_{O^{\prime }}=\boldsymbol{\ell }_{O^{\prime }}^{\prime },\mathbf{R%
}^*_{O^{\prime }}=\mathbf{r}_{O^{\prime }},\mathcal{E}\}.
\end{equation*}%
Indeed, fix a profile of preferences and priorities yielding $\{ \mathbf{R}%
_{I}=\boldsymbol{\ell }_{I},\mathbf{R}_{O^{\prime }}=\boldsymbol{\ell }%
_{O^{\prime }},\mathbf{R}^*_{O^{\prime }}=\mathbf{r}_{O^{\prime }},\mathcal{E}%
\}$. For each object $o\in O^{\prime }$, let $i$ be the individual with rank
$\ell _{o}^{\prime }$. Swap $i$ and $k:=$ TTC$(o)$ in $o$'s priority
ordering. Clearly, $k$ has rank $\ell _{o}^{\prime }$ at $o$. In addition,
since for each object $o$, the agent to whom $o$ points under the original profile has priority rank $r_o$, which is smaller than the priority ranks of both $i$ and $k$ at the original profile (recall that by assumption $%
\boldsymbol{\ell }_{O^{\prime }},\boldsymbol{\ell }_{O^{\prime }}^{\prime
}\gg \mathbf{r}_{O^{\prime }}$), each step of the TTC algorithm remains
unchanged after the swaps. Hence, $\{ \mathbf{R}_{I}=\boldsymbol{\ell }_{I},%
\mathbf{R}_{O^{\prime }}=\boldsymbol{\ell }_{O^{\prime }}^{\prime },\mathbf{R}^*_{O^{\prime }}=\mathbf{r}_{O^{\prime }},\mathcal{E}\}$ is obtained. Thus,
we have an injection from the set of profiles of preferences and priorities
yielding $\{ \mathbf{R}_{I}=\boldsymbol{\ell }_{I},\mathbf{R}_{O^{\prime }}=%
\boldsymbol{\ell }_{O^{\prime }},\mathbf{R}^*_{O^{\prime }}=\mathbf{r}%
_{O^{\prime }},\mathcal{E}\}$ to the one yielding $\{ \mathbf{R}_{I}=%
\boldsymbol{\ell }_{I},\mathbf{R}_{O^{\prime }}=\boldsymbol{\ell }%
_{O^{\prime }}^{\prime },\mathbf{R}^*_{O^{\prime }}=\mathbf{r}_{O^{\prime }},%
\mathcal{E}\}$. Given the iid distribution of priority order, it follows that%
\begin{equation*}
\Pr \{ \mathbf{R}_{I}=\boldsymbol{\ell }_{I},\mathbf{R}_{O^{\prime }}=%
\boldsymbol{\ell }_{O^{\prime }},\mathbf{R}^*_{O^{\prime }}=\mathbf{r}%
_{O^{\prime }},\mathcal{E}\} \leq \Pr \{ \mathbf{R}_{I}=\boldsymbol{\ell }%
_{I},\mathbf{R}_{O^{\prime }}=\boldsymbol{\ell }_{O^{\prime }}^{\prime },%
\mathbf{R}^*_{O^{\prime }}=\mathbf{r}_{O^{\prime }},\mathcal{E}\}.
\end{equation*}%
\ A similar reasoning shows that the inequality holds in the other direction
as well. \end{proof}

Now, we can complete the proof of  \Cref{corner stone}. Here
again, in the sequel, to save on notation, we let $\mathcal{E}\ $be $\{\bar{O%
}=O^{\prime \prime }\}$. By the above lemma, for any $O_{1},O_{2}\subseteq
O^{\prime }$ disjoint, whenever well-defined, $\Pr \{\mathbf{R}_{O_{1}}=%
\boldsymbol{\ell }_{O_{1}},\mathbf{R}_{I^{\prime }}=\boldsymbol{\ell }%
_{I^{\prime }}\mid \mathbf{R}_{O_{2}}=\boldsymbol{\ell }_{O_{2}},\mathbf{R}%
^*_{O^{\prime }}=\mathbf{r}_{O^{\prime }},\mathcal{E}\} \ $has a value, possibly zero, which does not depend on $\boldsymbol{\ell }_{O_{2}}$.\footnote{%
Indeed, by the above lemma, $\Pr \{ \mathbf{R}_{O_{2}}=\boldsymbol{\ell }%
_{O_{2}},\mathbf{R}^*_{O^{\prime }}=\mathbf{r}_{O^{\prime }},\mathcal{E}\}$
does not depend on $\boldsymbol{\ell }_{O_{2}}$ as long as it is strictly
positive. In addition, provided that the conditional distribution is
well-defined (i.e., $\boldsymbol{\ell }_{O_{2}}\gg \mathbf{r}_{O_{2}}$), $%
\Pr \{ \mathbf{R}_{O_{1}}=\boldsymbol{\ell }_{O_{1}},\mathbf{R}_{I^{\prime
}}=\boldsymbol{\ell }_{I^{\prime }},\mathbf{R}_{O_{2}}=\boldsymbol{\ell }%
_{O_{2}},\mathbf{R}^*_{O^{\prime }}=\mathbf{r}_{O^{\prime }},\mathcal{E}\}$
is equal to $0$ if $\ell _{o}\le r_o$ for some $o\in O_{1}$. In this case, this
remains equal to $0$ irrespective of $\boldsymbol{\ell }_{O_{2}}$. Finally,
if $\boldsymbol{\ell }_{O_{1}}\gg \mathbf{r}_{O_{1}}$ then the above lemma
implies that $\Pr \{ \mathbf{R}_{O_{1}}=\boldsymbol{\ell }_{O_{1}},\mathbf{R}%
_{I^{\prime }}=\boldsymbol{\ell }_{I^{\prime }},\mathbf{R}_{O_{2}}=%
\boldsymbol{\ell }_{O_{2}},\mathbf{R}^*_{O^{\prime }}=\mathbf{r}_{O^{\prime
}},\mathcal{E}\}$ has a value, possibly zero, which does not depend on $%
\boldsymbol{\ell }_{O_{2}}$.} Hence, we can write
\begin{eqnarray}  \label{independence}
&&\Pr \left \{ \mathbf{R}_{O_{1}}=\boldsymbol{\ell }_{O_{1}},\mathbf{R}%
_{I^{\prime }}=\boldsymbol{\ \ell }_{I^{\prime }}\mid \mathbf{R}%
^*_{O^{\prime }}=\mathbf{r}_{O^{\prime }},\mathcal{E}\right \}\cr &=&\sum_{%
\boldsymbol{\ell }_{O_{2}}^{\prime }}\Pr \left \{ \mathbf{R}_{O_{2}}=%
\boldsymbol{\ell }_{O_{2}}^{\prime }\mid \mathbf{R}^*_{O^{\prime }}=\mathbf{r%
}_{O^{\prime }},\mathcal{E}\} \Pr \{ \mathbf{R}_{O_{1}}=\boldsymbol{\ell }%
_{O_{1}},\mathbf{R}_{I^{\prime }}=\boldsymbol{\ell }_{I^{\prime }}\mid
\mathbf{R}_{O_{2}}=\boldsymbol{\ell }_{O_{2}}^{\prime },\mathbf{R}%
^*_{O^{\prime }}=\mathbf{r}_{O^{\prime }},\mathcal{E}\right \} \cr &=&\Pr
\left \{ \mathbf{R}_{O_{1}}=\boldsymbol{\ell }_{O_{1}},\mathbf{R}_{I^{\prime
}}=\boldsymbol{\ell }_{I^{\prime }}\mid \mathbf{R}_{O_{2}}=\boldsymbol{\ell }%
_{O_{2}},\mathbf{R}^*_{O^{\prime }}=\mathbf{r}_{O^{\prime }},\mathcal{E}%
\right \} \sum_{\boldsymbol{\ell }_{O_{2}}^{\prime }}\Pr \left \{ \mathbf{R}%
_{O_{2}}=\boldsymbol{\ell }_{O_{2}}^{\prime }\mid \mathbf{R}^*_{O^{\prime }}=%
\mathbf{r}_{O^{\prime }},\mathcal{E}\right \} \cr &=&\Pr \left \{ \mathbf{R}%
_{O_{1}}=\boldsymbol{\ell }_{O_{1}},\mathbf{R}_{I^{\prime }}=\boldsymbol{%
\ell }_{I^{\prime }}\mid \mathbf{R}_{O_{2}}=\boldsymbol{\ell }_{O_{2}},%
\mathbf{R}^*_{O^{\prime }}=\mathbf{r}_{O^{\prime }},\mathcal{E}\right \},
\end{eqnarray}
where $\boldsymbol{\ell }_{O_{2}}$ is an arbitrary profile under which the
above conditional probability is well-defined.\ Hence, conditional on $\{%
\mathbf{R}^*_{O^{\prime }}=\mathbf{r}_{O^{\prime }}\} \ $and $\mathcal{E}$,
the joint distribution of $\mathbf{R}_{O_{1}}\ $and $\mathbf{R}_{I^{\prime
}}\ $does not depend on the specific realization of $\mathbf{R}_{O_{2}}$.
This implies first that (setting $O_{1}=\emptyset $)%
\begin{eqnarray}  \label{independence_bis}
\Pr \left \{ \mathbf{R}_{I^{\prime }}\leq \boldsymbol{\ell }_{I^{\prime
}}\mid \mathbf{R}_{O_{2}}=\boldsymbol{\ell }_{O_{2}},\mathbf{R}^*_{O^{\prime
}}=\mathbf{r}_{O^{\prime }},\mathcal{E}\right \} =\Pr \left \{ \mathbf{R}%
_{I^{\prime }}\leq \boldsymbol{\ell }_{I^{\prime }}\mid \mathbf{R}%
^*_{O^{\prime }}=\mathbf{r}_{O^{\prime }},\mathcal{E}\right \} \text{.}
\end{eqnarray}%
Next, using Equation (\ref{independence}) with $I^{\prime }=\emptyset $, we
also obtain that
\begin{equation}  \label{eq;ind}
\Pr \left \{ \mathbf{R}_{O_{1}}=\boldsymbol{\ell }_{O_{1}}\mid \mathbf{R}%
^*_{O^{\prime }}=\mathbf{r}_{O^{\prime }},\mathcal{E}\right \} =\Pr \left \{
\mathbf{R}_{O_{1}}=\boldsymbol{\ell }_{O_{1}}\mid \mathbf{R}_{O_{2}}=%
\boldsymbol{\ell }_{O_{2}},\mathbf{R}^*_{O^{\prime }}=\mathbf{r}_{O^{\prime
}},\mathcal{E}\right \}.
\end{equation}

Now, pick an arbitrary $o\in O_{1}$. We have
\begin{eqnarray*}
&&\Pr \left \{ \mathbf{R}_{O_{1}}=\boldsymbol{\ell }_{O_{1}}\mid \mathbf{R}%
^*_{O^{\prime }}=\mathbf{r}_{O^{\prime }},\mathcal{E}\right \} \\
&=&\Pr \left \{ \mathbf{R}_{O_{1}\backslash \{o\}}=\boldsymbol{\ell }%
_{O_{1}\backslash \{o\}}\mid R_{o}=\ell _{o},\mathbf{R}^*_{O^{\prime }}=%
\mathbf{r}_{O^{\prime }},\mathcal{E}\right \} \Pr \left \{ R_{o}=\ell
_{o}\mid \mathbf{R}^*_{O^{\prime }}=\mathbf{r}_{O^{\prime }},\mathcal{E}%
\right \} \\
&=&\Pr \left \{ \mathbf{R}_{O_{1}\backslash \{o\}}=\boldsymbol{\ell }%
_{O_{1}\backslash \{o\}}\mid \mathbf{R}^*_{O^{\prime }}=\mathbf{r}%
_{O^{\prime }},\mathcal{E}\right \} \Pr \left \{ R_{o}=\ell _{o}\mid \mathbf{%
R}^*_{O^{\prime }}=\mathbf{r}_{O^{\prime }},\mathcal{E}\right \},
\end{eqnarray*}%
where the last equality comes from Equation (\ref{eq;ind}) above. Now,
applying the argument inductively, we obtain

\begin{equation*}
\Pr \left \{ \mathbf{R}_{O_{1}}=\boldsymbol{\ell }_{O_{1}}\mid \mathbf{R}%
^*_{O^{\prime }}=\mathbf{r}_{O^{\prime }},\mathcal{E}\right \} =\prod_{o\in
O_{1}}\Pr \left \{ R_{o}=\ell _{o}\mid \mathbf{R}^*_{O^{\prime }}=\mathbf{r}%
_{O^{\prime }},\mathcal{E}\right \} \text{.}
\end{equation*}
In other words, conditional on $\mathbf{R}^*_{O^{\prime }}=\mathbf{r}%
_{O^{\prime }}\ $and $\mathcal{E}$, $\{R_{o}\}_{o\in O^{\prime }}$ is a
collection of mutually independent random variables (not necessarily
identically distributed). In addition, conditional on $\{ \mathbf{R}%
^*_{O^{\prime }}=\mathbf{r}_{O^{\prime }}\} \ $and $\mathcal{E}$, for each $%
o\in O^{\prime }$, $R_{o}$\ is stochastically dominated by the uniform
distribution over $\{ \left \lceil \log ^{1+\varepsilon }(n)\right \rceil
+1,...,n\}$. Indeed, the above lemma implies that for any $o\in O^{\prime }$%
, $\Pr \left \{ R_{o}=\ell _{o}\mid \mathbf{R}^*_{O^{\prime }}=\mathbf{r}%
_{O^{\prime }},\mathcal{E}\right \} =0$ if $\ell _{o}\leq r_{o}\ $and is
constant over all possible $\ell _{o}$ such that $\ell _{o}>r_{o}$. Thus, in
the latter case, $\Pr \left \{ R_{o}=\ell _{o}\mid \mathbf{R}^*_{O^{\prime
}}=\mathbf{r}_{O^{\prime }},\mathcal{E}\right \} =\frac{1}{n-r_{o}}$. In
other words, given $\{ \mathbf{R}^*_{O^{\prime }}=\mathbf{r}_{O^{\prime }}\}
\ $and $\mathcal{E}$, for $o\in O^{\prime }$, $R_{o}$ follows a uniform
distribution over $\{r_{o}+1,...,n\}$. Since $o\in O^{\prime }\subseteq \bar{%
O}\subseteq \tilde{O}$, we must have $r_{o}<\log ^{1+\varepsilon }(n)$ and
so $R_{o}$ is stochastically dominated by the uniform distribution over $\{
\left \lceil \log ^{1+\varepsilon }(n)\right \rceil +1,...,n\}$. To recap,
conditional on $\{ \mathbf{R}^*_{O^{\prime }}=\mathbf{r}_{O^{\prime }}\}$
and $\mathcal{E}$,\ $\{R_{o}\}_{o\in O^{\prime }}$\ is a collection of
independent random variables that is stochastically dominated by a
collection of $\left \vert O^{\prime }\right \vert $ iid random variables
distributed according to a uniform distribution over $\{ \left \lceil \log
^{1+\varepsilon }(n)\right \rceil +1,...,n\}$, i.e.,
\begin{eqnarray}
\Pr \left \{ \mathbf{R}_{O^{\prime }}\leq \boldsymbol{\ell }_{O^{\prime
}}\mid \mathbf{R}^*_{O^{\prime }}=\mathbf{r}_{O^{\prime }},\mathcal{E}\right
\} &=&\prod_{o\in O^{\prime }}\Pr \left \{ R_{o}\leq \ell _{o}\mid \mathbf{R}%
^*_{O^{\prime }}=\mathbf{r}_{O^{\prime }},\mathcal{E}\right \}  \label{Ineq}
\\
&\geq &\prod_{o\in O^{\prime }}\Pr \left \{ Y_{o}\leq \ell _{o}\right \}
\text{.}  \notag
\end{eqnarray}%
Now, for any $\boldsymbol{\ell }_{O^{\prime }},\boldsymbol{\ell }_{I^{\prime
}},$%
\begin{eqnarray*}
&&\Pr \left \{ \mathbf{R}_{O^{\prime }}\leq \boldsymbol{\ell }_{O^{\prime }},%
\mathbf{R}_{I^{\prime }}\leq \boldsymbol{\ell }_{I^{\prime }}\mid \mathbf{R}%
^*_{O^{\prime }}=\mathbf{r}_{O^{\prime }},\mathcal{E}\right \} \\
&=&\Pr \left \{ \mathbf{R}_{O^{\prime }}\leq \boldsymbol{\ell }_{O^{\prime
}}\mid \mathbf{R}^*_{O^{\prime }}=\mathbf{r}_{O^{\prime }},\mathcal{E}\right
\} \Pr \left \{ \mathbf{R}_{I^{\prime }}\leq \boldsymbol{\ell }_{I^{\prime
}}\mid \mathbf{R}_{O^{\prime }}\leq \boldsymbol{\ell }_{O^{\prime }},\mathbf{%
R}^*_{O^{\prime }}=\mathbf{r}_{O^{\prime }},\mathcal{E}\right \} \\
&\geq &\prod_{o\in O^{\prime }}\Pr \left \{ Y_{o}\leq \ell _{o}\right \} \Pr
\left \{ \mathbf{R}_{I^{\prime }}\leq \boldsymbol{\ell }_{I^{\prime }}\mid
\mathbf{R}^*_{O^{\prime }}=\mathbf{r}_{O^{\prime }},\mathcal{E}\right \}
\text{.}
\end{eqnarray*}%
where the inequality comes from Equation (\ref{independence_bis}) together
with Equation (\ref{Ineq}). Hence, we obtain
\begin{eqnarray*}
&&\Pr \left \{ \mathbf{R}_{O^{\prime }}\leq \boldsymbol{\ell }_{O^{\prime }},%
\mathbf{R}_{I^{\prime }}\leq \boldsymbol{\ell }_{I^{\prime }}\mid \mathcal{E}%
\right \} \\
&=&\sum_{\mathbf{r}_{O^{\prime }}}\Pr \{ \mathbf{R}^*_{O^{\prime }}=\mathbf{r%
}_{O^{\prime }}\mid \mathcal{E}\} \Pr \left \{ \mathbf{R}_{O^{\prime }}\leq
\boldsymbol{\ell }_{O^{\prime }},\mathbf{R}_{I^{\prime }}\leq \boldsymbol{%
\ell }_{I^{\prime }}\mid \mathbf{R}^*_{O^{\prime }}=\mathbf{r}_{O^{\prime }},%
\mathcal{E}\right \} \\
&\geq &\sum_{\mathbf{r}_{O^{\prime }}}\Pr \{ \mathbf{R}^*_{O^{\prime }}=%
\mathbf{r}_{O^{\prime }}\mid \mathcal{E}\} \prod_{o\in O^{\prime }}\Pr \left
\{ Y_{o}\leq \ell _{o}\right \} \Pr \left \{ \mathbf{R}_{I^{\prime }}\leq
\boldsymbol{\ell }_{I^{\prime }}\mid \mathbf{R}^*_{O^{\prime }}=\mathbf{r}%
_{O^{\prime }},\mathcal{E}\right \} \\
&=&\prod_{o\in O^{\prime }}\Pr \left \{ Y_{o}\leq \ell _{o}\right \} \sum_{%
\mathbf{r}_{O^{\prime }}}\Pr \{ \mathbf{R}^*_{O^{\prime }}=\mathbf{r}%
_{O^{\prime }}\mid \mathcal{E}\} \Pr \left \{ \mathbf{R}_{I^{\prime }}\leq
\boldsymbol{\ell }_{I^{\prime }}\mid \mathbf{R}^*_{O^{\prime }}=\mathbf{r}%
_{O^{\prime }},\mathcal{E}\right \} \\
&=&\prod_{o\in O^{\prime }}\Pr \left \{ Y_{o}\leq \ell _{o}\right \} \Pr
\left \{ \mathbf{R}_{I^{\prime }}\leq \boldsymbol{\ell }_{I^{\prime }}\mid
\mathcal{E}\right \}
\end{eqnarray*}%
as claimed.

Note further that, conditional on $\{ \mathbf{R}^*_{O^{\prime }}=\mathbf{r}%
_{O^{\prime }}\} \ $and $\mathcal{E}$, $\{R_{o}\}_{o\in O^{\prime }} $
stochastically dominates the collection of $\left \vert O^{\prime
}\right
\vert $ iid random variables $U_{1},...,U_{\left \vert O^{\prime
}\right
\vert }$ where the distribution of $U_{o}$\ is uniform over $%
\{1,...,n\}$.\ Using a similar argument as above, we obtain that,
conditional on $\mathcal{E}$, $\{R_{o}\}_{o\in O^{\prime }}$ stochastically
dominates a collection of $\left \vert O^{\prime }\right \vert $ iid random
variables distributed according to a uniform distribution over $\{1,...,n\}$
and we can easily complete the proof of the second part of  \Cref{corner stone}.

Finally, to complete the proof of part (a) of  \Cref{thm: irrelevance}, fix $K$, $x\in[0,1]^K$, and a sequence $y^n\in[0,1]^n$. Let $O_K=\{o_1,\ldots,o_K\}$ and write $A_n=\{O_K\subseteq \bar O\}$. By \Cref{prop:short-cycles}, $\Pr(A_n)\to1$. Conditional on $A_n$, \Cref{corner stone} squeezes the CDF of $(\bar R_{o_1},\ldots,\bar R_{o_K},\bar R_{i_1},\ldots,\bar R_{i_n})$ between the corresponding CDFs obtained by replacing the object ranks by iid uniforms on $\{1,\ldots,n\}$ and by iid uniforms on $\{\lceil\log^{1+\varepsilon}n\rceil+1,\ldots,n\}$. These two product-uniform bounds differ by at most $K\lceil\log^{1+\varepsilon}n\rceil/n=o(1)$, uniformly in $x$ and $y^n$. Since conditioning on $A_n$ changes any probability by at most $\Pr(A_n^c)=o(1)$, the desired growing-vector CDF convergence follows. 

\subsection{Proof of  \Cref{thm: irrelevance} (b)}

\label{sec: empirical dist}

Fix $x\in \lbrack 0,1]$. By the above result, given $\{ \bar{O}=O^{\prime
\prime }\}$, the collection $\{ \mathbf{1}_{\{ \bar{R}_{o}\leq x\}} \}_{o\in
\bar{O}}$ is stochastically dominated by $\{ \mathbf{1}_{\{ \bar{Y}_{o}\leq
x\}} \}_{o\in \bar{O}}$,  where $\bar{Y}_{o}$ is $\frac{1}{n}U\{ \left
\lceil
\log ^{1+\varepsilon }(n)\right \rceil +1,...,n\}$ which converges in
distribution to $U[0,1]$. Similarly, given $\{ \bar{O}=O^{\prime \prime }\}$%
, the collection $\{ \mathbf{1}_{\{\bar{R}_{o}\leq x\}}\}_{o\in \bar{O}}$
stochastically dominates the collection $\{ \mathbf{1}_{\{ \bar{U}_{o}\leq
x\}} \}_{o\in \bar{O}}$ where $\bar{U}_{o}$ is $\frac{1}{n}U\{1,...,n\}$
which converges in distribution to $U[0,1]$.

Now fix any $\delta>0$ and condition on the event $|\bar O|\ge(1-\delta)n$, whose probability tends to one, and on the realized set $\bar O=O''$. On this event, the contribution of $O\setminus\bar O$ to the empirical average is at most $\delta$. Hence
\[
\frac{|\bar O|}{n}\cdot \frac1{|\bar O|}\sum_{o\in\bar O}\mathbf 1_{\{\bar R_o\le x\}}
\le \frac1n\sum_{o\in O}\mathbf 1_{\{\bar R_o\le x\}}
\le
\frac{|\bar O|}{n}\cdot \frac1{|\bar O|}\sum_{o\in\bar O}\mathbf 1_{\{\bar R_o\le x\}}+\delta .
\]
The stochastic squeeze from \Cref{corner stone} and the law of large numbers for the two iid uniform bounds imply that the middle empirical average lies between $(1-\delta)x-o_p(1)$ and $x+\delta+o_p(1)$, uniformly over the conditioning on $O''$. Removing the high-probability conditioning and then letting $\delta\downarrow0$ gives
\[
\frac1n\sum_{o\in O}\mathbf 1_{\{\bar R_o\le x\}}\overset{p}{\longrightarrow}x .
\]

\section{Proof of Proposition~\ref{prop:je-metrics} and Corollary~\ref{cor: BP}}
\label{sec: BP}

\begin{proof}[Proof of Proposition~\ref{prop:je-metrics}]
For a mechanism $M\in\{\mathrm{TTC},\mathrm{RSD}\}$, let $\mu^M$ be the matching it produces.  For each agent-object pair $(i,o)$, define
\[
        E_{io}^M:=\mathbf 1\{oP_i\mu^M(i)\}
\]
and
\[
        J_{io}^M:=E_{io}^M\mathbf 1\{i\succ_o(\mu^M)^{-1}(o)\}.
\]
Thus, $E_{io}^M$ records an envy incidence, while $J_{io}^M$ records a justified-envy incidence, equivalently a blocking pair involving agent $i$ and object $o$.  Let
\[
        E_n^M:=\mathbb E\left[\sum_{i,o}E_{io}^M\right],
        \qquad
        J_n^M:=\mathbb E\left[\sum_{i,o}J_{io}^M\right].
\]
The first statistic in the proposition is $J_n^M/E_n^M$.  The third statistic is $J_n^M/[n(n-1)]$.

We first compute the expected number of envy incidences.  Let $R_i^M$ be the preference rank of agent $i$'s assignment under mechanism $M$, where rank one is best.  Since agent $i$ envies exactly the objects ranked above her assignment,
\begin{equation}
        \sum_o E_{io}^M=R_i^M-1.
        \label{eq:je-envy-rank-count}
\end{equation}
Under RSD, conditional on agent $i$'s serial position being $s$, there are $m=n-s+1$ objects available when she chooses.  These objects form, from $i$'s perspective, a uniformly random $m$-element subset of $\{1,\ldots,n\}$, and she receives the best-ranked object in that subset.  If $Y$ denotes the minimum of a uniformly random $m$-element subset of $\{1,\ldots,n\}$, then
\[
        \Pr\{Y>q\}=\frac{\binom{n-q}{m}}{\binom nm},
        \qquad q=0,\ldots,n-m.
\]
Thus the tail-sum formula and the hockey-stick identity give
\[
        \mathbb E[Y]
        =
        \sum_{q=0}^{n-m}\frac{\binom{n-q}{m}}{\binom nm}
        =
        \frac{\sum_{r=m}^{n}\binom rm}{\binom nm}
        =
        \frac{\binom{n+1}{m+1}}{\binom nm}
        =
        \frac{n+1}{m+1}.
\]
Averaging over the uniform serial position $s$ yields
\begin{equation}
        \mathbb E[R_i^{\mathrm{RSD}}]
        =
        \frac1n\sum_{s=1}^n\frac{n+1}{n-s+2}
        =
        \frac{n+1}{n}(H_{n+1}-1).
        \label{eq:je-rsd-expected-rank}
\end{equation}
By the finite agent-side equivalence between random-priority TTC and RSD \citep{pathak/sethuraman:11}, $R_i^{\mathrm{TTC}}$ and $R_i^{\mathrm{RSD}}$ have the same distribution.  Combining this equivalence with (\ref{eq:je-envy-rank-count}) and (\ref{eq:je-rsd-expected-rank}),
\begin{equation}
        E_n^{\mathrm{TTC}}=E_n^{\mathrm{RSD}}
        =
        (n+1)(H_{n+1}-1)-n
        \sim n\log n.
        \label{eq:je-total-envy}
\end{equation}

Under RSD, priorities are independent of the assignment.  Conditional on any envy incidence, the envying agent and the recipient of the envied object are two distinct agents and are symmetrically ordered in that object's priority ranking.  Hence
\begin{equation}
        J_n^{\mathrm{RSD}}=\frac12E_n^{\mathrm{RSD}}.
        \label{eq:je-rsd-half}
\end{equation}
This proves the RSD part of the incidence-ratio statement.  Together with (\ref{eq:je-total-envy}), it also gives the stated RSD blocking-pair fraction.

We next prove the TTC incidence ratio.  For each object $o$, let $L_o=1$ if $o$ is assigned through a long cycle under TTC, and let $L_o=0$ otherwise.  If $o$ is assigned through a short cycle to agent $k$, then $o$ points to $k$ in the round in which it is assigned.  Any agent who envies $k$ for $o$ must still be present in that round; otherwise, $o$ would have been available when that agent received her own assignment.  Since $k$ is the highest-priority remaining agent for $o$ in that round, no such envy can be justified.  Therefore,
\begin{equation}
        J_{io}^{\mathrm{TTC}}(1-L_o)=0
        \quad\text{for all }(i,o).
        \label{eq:je-short-cycle-none}
\end{equation}
For long-cycle objects, conditional on an envy incidence, the envy is justified with
probability one half. Fix $(i,o)$ and consider a profile for which
$E_{io}^{\mathrm{TTC}}L_o=1$. Let $k=(\mu^{\mathrm{TTC}})^{-1}(o)$ be the
recipient of $o$, and let $a_o$ be the agent to whom $o$ points in the round in
which $o$ is assigned. Since the cycle is long, $a_o\ne k$. Since $i$ envies the
recipient of $o$, agent $i$ must still be present in that round; otherwise $o$
would have been available when $i$ was assigned, contradicting $oP_i\mu^{\mathrm{TTC}}(i)$.
Hence both $i$ and $k$ have lower priority for $o$ than $a_o$. Now exchange only
$i$ and $k$ in object $o$'s priority ordering, leaving all preferences and all
other priority orderings unchanged. In every round before $o$ is assigned, neither
$i$ nor $k$ can be the highest-priority remaining agent for $o$, since $a_o$ is
still present and has higher priority than both; after $o$ is assigned, $o$'s
priority ordering is irrelevant. Thus the TTC path, the recipient $k$ of $o$, and
the event $E_{io}^{\mathrm{TTC}}L_o=1$ are unchanged. The exchange only reverses the
relative priority of $i$ and $k$ at $o$, and therefore reverses whether the envy is
justified. Since the priority ordering at $o$ is drawn uniformly, the original
profile and the exchanged profile have the same probability. So among long-cycle
envy incidences, exactly half are justified and half are unjustified. Summing over
all $(i,o)$ gives
\begin{equation}
        \mathbb E\left[\sum_{i,o}J_{io}^{\mathrm{TTC}}L_o\right]
        =
        \frac12\mathbb E\left[\sum_{i,o}E_{io}^{\mathrm{TTC}}L_o\right].
        \label{eq:je-long-cycle-half}
\end{equation}

Combining (\ref{eq:je-short-cycle-none}) and (\ref{eq:je-long-cycle-half}),
\begin{equation}
        J_n^{\mathrm{TTC}}
        =
        \frac12\left\{
        E_n^{\mathrm{TTC}}
        -
        \mathbb E\left[\sum_{i,o}E_{io}^{\mathrm{TTC}}(1-L_o)\right]
        \right\}.
        \label{eq:je-ttc-decomposition}
\end{equation}
It remains to show that the short-cycle term in (\ref{eq:je-ttc-decomposition}) is $o(n\log n)$.

Let $C_n:=\{o:L_o=0\}$ be the set of short-cycle objects, and let $X_o:=\sum_iE_{io}^{\mathrm{TTC}}$ be the number of envy incidences directed at the recipient of object $o$.  Then
\[
        \sum_{i,o}E_{io}^{\mathrm{TTC}}(1-L_o)=\sum_{o\in C_n}X_o.
\]
By Cauchy's inequality across objects and then in the probability space,
\begin{equation}
        \mathbb E\left[\sum_{o\in C_n}X_o\right]
        \le
        \left(\mathbb E|C_n|\right)^{1/2}
        \left(\mathbb E\sum_oX_o^2\right)^{1/2}.
        \label{eq:je-two-cauchy}
\end{equation}
Indeed, pointwise, $\sum_{o\in C_n}X_o\le |C_n|^{1/2}(\sum_oX_o^2)^{1/2}$; applying Cauchy's inequality to the expectation of this product yields (\ref{eq:je-two-cauchy}).

We bound the two factors in (\ref{eq:je-two-cauchy}).  Let $S_t$ be the number of short-cycle objects assigned in round $t$, with $S_t=0$ after TTC terminates, and let $\mathcal F_{t-1}$ denote the history at the beginning of round $t$.  Since $\{T\ge t\}$ is $\mathcal F_{t-1}$-measurable and Proposition~\ref{prop: conditional exp short cycle} gives $\mathbb E[S_t\mid\mathcal F_{t-1}]\le2$,
\[
        \mathbb E[S_t]
        =
        \mathbb E\left[\mathbf 1\{T\ge t\}\mathbb E[S_t\mid\mathcal F_{t-1}]\right]
        \le
        2\Pr\{T\ge t\}.
\]
Therefore $\mathbb E|C_n|\le2\mathbb E[T]$.  Since $T\le n$ and Proposition~\ref{prop:stopping} gives $T/n\to0$ in probability, bounded convergence gives $\mathbb E[T]/n\to0$.  Hence
\begin{equation}
        \mathbb E|C_n|=o(n).
        \label{eq:je-short-count-small}
\end{equation}
For the second factor, use the finite equivalence between random-priority TTC and RSD at the level of the assignment distribution, conditional on preferences.  The statistic $\sum_oX_o^2$ depends only on preferences and on the realized matching, so it is enough to bound it under RSD.  Fix an object $o$, and let $\tau$ be the serial position at which $o$ is chosen.  Agents before $\tau$, and the agent at $\tau$, do not envy the recipient of $o$.  For $t>\tau$, let $Z_t$ be the indicator that the agent in serial position $t$ envies the recipient of $o$.  Conditional on the RSD history before position $t$, this occurs with probability $p_t=1/(n-t+2)$.  Iterated expectations give $\mathbb E[Z_sZ_t\mid \tau]=p_sp_t$ for $\tau<s<t$.  Therefore,
\[
        \mathbb E[X_o^2\mid \tau]
        \le
        \sum_{t=\tau+1}^np_t+
        \left(\sum_{t=\tau+1}^np_t\right)^2
        \le
        H_n+H_n^2.
\]
By symmetry across objects,
\begin{equation}
        \mathbb E\sum_oX_o^2=O(n\log^2n).
        \label{eq:je-X-second-moment}
\end{equation}
Equations (\ref{eq:je-two-cauchy}), (\ref{eq:je-short-count-small}), and (\ref{eq:je-X-second-moment}) imply
\begin{equation}
        \mathbb E\left[\sum_{i,o}E_{io}^{\mathrm{TTC}}(1-L_o)\right]
        =o(n\log n).
        \label{eq:je-short-envy-negligible}
\end{equation}
Together with (\ref{eq:je-ttc-decomposition}) and (\ref{eq:je-total-envy}), this proves that $J_n^{\mathrm{TTC}}/E_n^{\mathrm{TTC}}\to1/2$.  The TTC blocking-pair fraction is $J_n^{\mathrm{TTC}}/[n(n-1)]$, so (\ref{eq:je-total-envy}) and (\ref{eq:je-short-envy-negligible}) also imply that it is asymptotic to $\log n/(2n)$.

It remains to prove the agent-level statement.  Let
\[
        A_n^M:=\mathbb E\left[\frac1n\sum_i\mathbf 1\left\{\sum_oJ_{io}^M\ge1\right\}\right]
\]
be the expected fraction of agents with at least one justified envy.  We first compute this number under RSD.  For a fixed agent,
\begin{equation}
        \Pr\{R_i^{\mathrm{RSD}}>r\}
        =
        \frac1n\sum_{s=1}^n\frac{\binom{s-1}{r}}{\binom nr}
        =
        \frac{n-r}{n(r+1)},
        \qquad r=0,\ldots,n.
        \label{eq:je-rsd-rank-tail}
\end{equation}
Thus
\begin{equation}
        \Pr\{R_i^{\mathrm{RSD}}=r\}
        =
        \frac{n+1}{nr(r+1)},
        \qquad r=1,\ldots,n.
        \label{eq:je-rsd-rank-mass}
\end{equation}
Conditional on $R_i^{\mathrm{RSD}}=r$, agent $i$ envies exactly $r-1$ recipients.  Priorities are independent across objects and independent of the RSD assignment, so the probability that none of these $r-1$ envies is justified is $2^{-(r-1)}$.  Hence
\begin{equation}
        A_n^{\mathrm{RSD}}
        =
        1-
        \sum_{r=1}^n
        2^{-(r-1)}\frac{n+1}{nr(r+1)}.
        \label{eq:je-rsd-agent-exact}
\end{equation}
Letting $n\to\infty$,
\begin{equation}
        \lim_{n\to\infty}A_n^{\mathrm{RSD}}
        =
        1-
        \sum_{r=1}^{\infty}\frac{2^{-(r-1)}}{r(r+1)}
        =
        2\log2-1,
        \label{eq:je-rsd-agent-limit}
\end{equation}
where the last equality follows from
\[
        \sum_{r=1}^{\infty}\frac{2^{-(r-1)}}{r(r+1)}
        =
        \int_0^1\frac{1-x}{1-x/2}\,dx
        =
        2(1-\log2).
\]

We now transfer this calculation to TTC.  Fix $K$.  Consider envy incidences generated by agents whose assignment rank is at most $K$ and directed at short-cycle objects.  If such an agent envies object $o$, then $o$ must be among her top $K$ objects.  Let $N_o^K$ be the number of agents who rank $o$ among their top $K$ objects.  The expected number of these finite-rank envy incidences directed at short-cycle objects is at most $\mathbb E[\sum_{o\in C_n}N_o^K]$.  By the same two Cauchy inequalities used above,
\[
        \mathbb E\left[\sum_{o\in C_n}N_o^K\right]
        \le
        \left(\mathbb E|C_n|\right)^{1/2}
        \left(\mathbb E\sum_o(N_o^K)^2\right)^{1/2}.
\]
For each fixed object, $N_o^K$ is binomial with parameters $n$ and $K/n$, so $\mathbb E[(N_o^K)^2]\le K+K^2$.  Using (\ref{eq:je-short-count-small}), we obtain
\begin{equation}
        \frac1n\mathbb E\left[\sum_{o\in C_n}N_o^K\right]\longrightarrow0
        \quad\text{for every fixed }K.
        \label{eq:je-finite-rank-short-negligible}
\end{equation}
Thus, for a fixed agent, the probability that she both has assignment rank at most $K$ and envies some short-cycle object goes to zero.

Now fix $r\le K$. Let
\[
        \mathcal V_i:=\{o\in O: oP_i\mu^{\mathrm{TTC}}(i)\}
\]
be the set of objects envied by $i$, so that $|\mathcal V_i|=r-1$ on
$\{R_i^{\mathrm{TTC}}=r\}$. Let $A_i$ be the event that every object in
$\mathcal V_i$ is assigned through a long cycle. By \Cref{eq:je-finite-rank-short-negligible},
$\Pr\{R_i^{\mathrm{TTC}}=r,\ A_i^c\}=o(1)$ for each fixed $r\le K$.

Conditional on $R_i^{\mathrm{TTC}}=r$ and $A_i$, fix a profile and write, for
each $o\in\mathcal V_i$, $k_o=(\mu^{\mathrm{TTC}})^{-1}(o)$ for the recipient of
$o$ and $a_o$ for the agent to whom $o$ points in the round in which $o$ is
assigned. Since $o$ is assigned through a long cycle, $a_o\ne k_o$. Moreover,
agent $i$ is still present in that round, and she cannot be assigned in that
round; otherwise, with $o$ still available, she would receive her favorite
remaining object and could not envy the recipient of $o$. Thus
$a_o\notin\{i,k_o\}$, and both $i$ and $k_o$ have lower priority for $o$ than
$a_o$.

Now take any subset $S\subseteq\mathcal V_i$. For each $o\in S$, exchange only
$i$ and $k_o$ in object $o$'s priority ordering, leaving all preferences and all
other priority orderings unchanged. Along the original TTC path, before such an
object $o$ is assigned, the agent $a_o$ is still present and has higher priority
for $o$ than both $i$ and $k_o$. Hence, under the same sequence of remaining
sets, object $o$'s pointer is unchanged before its assignment; after $o$ is
assigned, its priority ordering is irrelevant. Since preferences and all other
object priorities are unchanged, an induction over rounds shows that the entire
TTC path is unchanged. Consequently, the assignment of $i$, the set
$\mathcal V_i$, and the event $A_i$ are unchanged. For each $o\in S$, however,
the exchange reverses the relevant comparison $i\succ_o k_o$, while the
corresponding comparison is unchanged for each $o\notin S$.

Since object-specific priority orderings are drawn independently and uniformly,
the profiles obtained from the $2^{r-1}$ choices of $S\subseteq\mathcal V_i$ all
have the same probability. Hence all $2^{r-1}$ possible patterns of justified
versus unjustified envy across the $r-1$ envied long-cycle objects are equally
likely. In particular, conditional on $R_i^{\mathrm{TTC}}=r$ and $A_i$, the
probability that none of the $r-1$ envy incidences is justified is
$2^{-(r-1)}$. Combining this with
$\Pr\{R_i^{\mathrm{TTC}}=r,\ A_i^c\}=o(1)$ gives, for each fixed $r$,
\begin{equation}
        \Pr\left\{\sum_oJ_{io}^{\mathrm{TTC}}=0,\ R_i^{\mathrm{TTC}}=r\right\}
        =
        2^{-(r-1)}\Pr\{R_i^{\mathrm{TTC}}=r\}+o(1).
        \label{eq:je-ttc-agent-fixed-r}
\end{equation}

The finite agent-side equivalence between TTC and RSD gives $\Pr\{R_i^{\mathrm{TTC}}=r\}=\Pr\{R_i^{\mathrm{RSD}}=r\}$.  Summing (\ref{eq:je-ttc-agent-fixed-r}) over $r\le K$, using (\ref{eq:je-rsd-rank-mass}), and then letting first $n\to\infty$ and then $K\to\infty$ gives
\[
        \lim_{n\to\infty}\Pr\left\{\sum_oJ_{io}^{\mathrm{TTC}}=0\right\}
        =
        \sum_{r=1}^{\infty}\frac{2^{-(r-1)}}{r(r+1)}
        =
        2(1-\log2),
\]
The omitted terms with $r>K$ are negligible after taking $K$ large: indeed,
\[
\Pr\left\{\sum_o J_{io}^{\mathrm{TTC}}=0,\ R_i^{\mathrm{TTC}}>K\right\}
\le
\Pr\{R_i^{\mathrm{TTC}}>K\}
=
\Pr\{R_i^{\mathrm{RSD}}>K\}
=
\frac{n-K}{n(K+1)}
\to
\frac{1}{K+1},
\]
where the equality in distribution follows from the finite agent-side equivalence between
TTC and RSD. Hence this tail vanishes when we subsequently let $K\to\infty$. Therefore $A_n^{\mathrm{TTC}}\to2\log2-1$.  This completes the proof of the proposition.
\end{proof}

\begin{proof}[Proof of Corollary~\ref{cor: BP}]
By exchangeability across ordered agent-object pairs,
\[
        \Pr\{E_{io}^M=1\}=\frac{E_n^M}{n^2},
        \qquad
        \Pr\{J_{io}^M=1\}=\frac{J_n^M}{n^2}.
\]
The stated envy probability follows from (\ref{eq:je-total-envy}). For RSD, (\ref{eq:je-rsd-half}) implies that the fixed-pair blocking probability is exactly one
half of the fixed-pair envy probability. For TTC, Proposition~\ref{prop:je-metrics} gives
$J_n^{\mathrm{TTC}}/E_n^{\mathrm{TTC}}\to 1/2$, so the fixed-pair blocking
probability is asymptotically one half of the fixed-pair envy probability. Since
$J_{io}^M=1$ implies $E_{io}^M=1$, we have
\[
    \Pr\{J_{io}^M=1\mid E_{io}^M=1\}
    =
    \frac{\Pr\{J_{io}^M=1\}}{\Pr\{E_{io}^M=1\}}.
\]
Thus the conditional probability that the envy is justified is exactly $1/2$ under
RSD and converges to $1/2$ under TTC.
\end{proof}

\bibliographystyle{economet}
\bibliography{bibmatching}


\clearpage
\setcounter{page}{1}
\renewcommand{\thepage}{OA-\arabic{page}}

\section*{Online Appendix: Not for Publication}
\phantomsection
\addcontentsline{toc}{section}{Online Appendix}

\setcounter{section}{0}
\renewcommand{\thesection}{OA.\arabic{section}}
\renewcommand{\thesubsection}{\thesection.\arabic{subsection}}

\renewcommand{\theHsection}{OA.\arabic{section}}
\renewcommand{\theHsubsection}{OA.\arabic{section}.\arabic{subsection}}

\section{Additional Details for the Proof of \Cref{Markov}}

\label{OA:MarkovProof}

\subsection{Computation of $\beta$}\label{sec:lem:beta}

To compute $\beta$, we first establish the following lemma.

\subsubsection{Useful computational lemma}


For the next result, consider agents $I^{\prime }$ and objects $O^{\prime }$
such that $|I^{\prime }|=|O^{\prime }|=m>0$. We say a mapping $f=h\circ g$
is a \textbf{bipartite bijection}, if $g:I^{\prime }\to O^{\prime }$ and $%
h:O^{\prime }\to I^{\prime }$ are both bijections. A \textbf{cycle} of a
bipartite bijection is a cycle of the induced digraph. Note that a bipartite
bijection consists of disjoint cycles. A \textbf{random bipartite bijection}
is a (uniform) random selection of a bipartite bijection from the set of all
bipartite bijections. The following result will prove useful for a later
analysis.

\begin{lemma}
\label{lem:lovasz} \label{trick} Fix sets $I^{\prime }$ and $O^{\prime }$
with $|I^{\prime }|=|O^{\prime }|=m>0$, and a subset $K\subset I^{\prime
}\cup O^{\prime }$, containing $a\ge 0$ vertices in $I^{\prime }$ and $b\ge 0
$ vertices in $O^{\prime }$. The probability that each cycle in a random
bipartite bijection contains at least one vertex from $K$ is
\begin{equation*}
\frac{a+b}{m}- \frac{ab}{m^2}.
\end{equation*}
\end{lemma}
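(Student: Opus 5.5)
The plan is to reduce the statement to a classical fact about uniform random permutations, in three steps.

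\emph{Step 1 (reduction to a permutation).} Write $f=h\circ g$ with $g:I'\to O'$ and $h:O'\to I'$ independent uniform bijections, and set $\sigma:=h\circ g$, a permutation of $I'$. Every cycle of the bipartite digraph alternates agents and objects. Its agent vertices form a cycle $C$ of $\sigma$, and its object vertices are exactly $g(C)$. Let $A:=K\cap I'$ and $B:=K\cap O'$. A cycle of the bipartite bijection then meets $K$ if and only if the corresponding cycle $C$ of $\sigma$ meets the set $K':=A\cup g^{-1}(B)\subseteq I'$.

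\emph{Step 2 (independence).} I will argue that $\sigma$ and $g$ are independent, with $\sigma$ uniform on permutations of $I'$. For each fixed $g$, the map $h\mapsto h\circ g$ is a bijection from bijections $O'\to I'$ onto permutations of $I'$, so $\sigma$ is uniform conditional on $g$. Consequently $\sigma$ is independent of $B':=g^{-1}(B)$, and $B'$ is a uniformly random $b$-subset of $I'$. This is the step I expect to require the most care. The potential obstacle is that one might worry the cycle structure is entangled with which objects lie in $B$; the conditioning argument above removes this worry.

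\emph{Step 3 (fixed-set fact and averaging).} I will use the following fact: for a uniform permutation of $m$ elements and a fixed set $S$ with $|S|=k\ge1$, the probability that every cycle meets $S$ is $k/m$. To prove it, generate $\sigma$ by the Chinese-restaurant (Feller) insertion process, inserting the elements of $S$ first and the remaining elements afterwards. The $j$-th inserted element opens a new cycle with probability $1/j$, and otherwise it joins an existing cycle. Every cycle meets $S$ exactly when none of the elements inserted at positions $j=k+1,\dots,m$ opens a new cycle. This has probability
\[
\prod_{j=k+1}^{m}\frac{j-1}{j}=\frac{k}{m}.
\]
The formula also holds for $k=0$, since both sides are then $0$ when $m>0$.

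Now condition on $B'$ and apply the fact with $S=K'$, using the independence from Step 2. The desired probability is
\[
\mathbb E\!\left[\frac{|A\cup B'|}{m}\right]=\frac{a+b-\mathbb E|A\cap B'|}{m}.
\]
Each of the $a$ elements of $A$ lies in $B'$ with probability $b/m$, so $\mathbb E|A\cap B'|=ab/m$. Substituting gives the claimed value $\frac{a+b}{m}-\frac{ab}{m^2}$.
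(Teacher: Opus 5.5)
Your proof is correct and follows the paper's route: you reduce to the induced permutation $h\circ g$ of $I'$, use its independence from $g$, and average the fixed-set probability $|Y|/m$ over the random set $Y=(K\cap I')\cup g^{-1}(K\cap O')$. The differences are minor. You prove the fixed-set fact with the Chinese-restaurant insertion process, where the paper cites Lov\'asz's exercise. You also compute $\mathbb E|Y|$ by linearity, via $\mathbb E|A\cap B'|=ab/m$, where the paper sums the hypergeometric law of $|g^{-1}(K\cap O')\setminus K|$ using Vandermonde's identity.
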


\begin{proof}
We begin with a few definitions.  A \textbf{permutation} of $X$ is a bijection $f:X\rightarrow X$.  A \textbf{cycle} of a permutation is a cycle of the digraph induced by the permutation.  A permutation consists of disjoint cycles.  A \textbf{random
		permutation} chooses  uniformly at random a permutation $f$ from the set of all possible
	permutations.  Our proof will invoke following result:
	\begin{fact} [\citet{lovasz;79} Exercise 3.6] \label{lovasz}
		The  probability that each cycle of a random permutation of a finite set $X$ contains at least one element of a set $%
		Y\subset X$ is $|Y|/|X|$.
	\end{fact}
	
	To begin, observe first that a bipartite bijection $h\circ g$ induces a
	permutation of set $I^{\prime }$. Thus, a random bipartite bijection defined by a pair $(g,h)$ over $I'\times O'$ induces
	a random permutation of  $I^{\prime }$. Moreover, because $h$ is drawn uniformly and independently of $g$, the induced uniform permutation $h\circ g$ is independent of $g$. Consequently, it is independent of the random set of vertices of $I'$ whose $g$-image lies in $K\cap O'$. To compute the probability that each cycle of a random bipartite bijection  $h\circ g$ contains at least one vertex in $K\subset I'\cup O'$, we shall apply Fact \ref{lovasz} to this induced random permutation of $I'$.

	Indeed, each cycle of a random bipartite bijection contains at least one vertex in
	$K\subset I'\cup O'$ if and only if each cycle of the induced random permutation of $I'$ contains either a vertex in $K \cap I'$ or a vertex in $ I'\setminus K$ that  points to a vertex in $K\cap O'$  in the original random bipartite bijection.  Hence, the relevant set $Y\subset I'$ for the purpose of applying Fact \ref{lovasz} is a random set that contains  $|K\cap I'|=a$ vertices of the former kind and $Z$ vertices of the latter kind.
	
	The number $Z$ is random and takes a value $z$, $\max \{b-a,0\}\leq z\leq \min \{m-a,b\}$,  with probability:
	\begin{equation*}
	\Pr \{Z=z\}=\frac{{\binom{m-a}{z}}{\binom{a}{b-z}}}{{\binom{m}{b}}}.
	\end{equation*}%
	This formula is explained as follows. $\Pr \{Z=z\}$ is the ratio of the
	number of bipartite bijections having exactly $z$ vertices in $I^{\prime }\setminus K$
	pointing toward $K\cap O^{\prime }$ to the total number of bipartite bijections.
	
	Note that since we consider bipartite bijections, the number of vertices in $I^{\prime}$ pointing
	to the vertices in $K\cap O^{\prime }$ must be equal to $b$.
	Focusing first on the
	numerator, we have to compute the number of bipartite bijections having exactly $z$ vertices in $I^{\prime }\setminus K$
	pointing toward $K\cap O^{\prime }$ and the remaining $b-z$ vertices pointing to the remaining $K\cap O^{\prime }$.   There are
	${\binom{m-a}{z%
		}\binom{a}{b-z}}$ ways one can choose  $z$ vertices from $I^{\prime }\setminus K$ and $b-z$ vertices from
	$K\cap I^{\prime }$.    Thus, the total number of
	bipartite bijections having exactly $z$ vertices in $I^{\prime }\setminus K$
	that point to $K\cap O^{\prime }$ is ${\binom{m-a}{z}\binom{a}{b-z}}\upsilon$, where
	$\upsilon$ is the total number of bipartite bijections in which the $b$ vertices thus chosen point to the vertices in $K\cap O^{\prime }$. This gives us the numerator.   As for the denominator,  the total number of
	bipartite bijections having $b$ vertices in $I^{\prime }$ pointing to $K\cap
	O^{\prime }$ is ${\binom{m}{b}}$ (the number of ways $b$ vertices are chosen from $I'$), multiplied by $\upsilon$ (the number of bijections in which the $b$ vertices thus chosen point to the vertices in $K\cap O^{\prime }$). Hence, the denominator is
	${\binom{m}{b}}\upsilon$.   Thus, we get the above formula.
	
	Recall our goal is to compute the probability that each cycle of the random permutation induced by the random bipartite bijection contains at least one vertex in the random set $Y$, with $|Y|=a+Z$, where $\Pr\{Z=z\}=\frac{{\binom{m-a}{z}}{\binom{a}{b-z}}}{{\binom{m}{b}}}$.   Applying Fact \ref{lovasz}, then the desired probability is
	\begin{align*}
	\E\left[ \frac{|Y|}{|I'|}\right]
	=& \sum_{z=\max \{b-a,0\}}^{\min \{m-a,b\}}\Pr \{Z=z\}\frac{a+z}{m} \\
	=& \frac{a}{m}+\sum_{z=\max \{b-a,0\}}^{\min \{m-a,b\}}\Pr \{Z=z\}\frac{z}{m}
	\\
	=& \frac{a}{m}+\sum_{z=\max \{b-a,0\}}^{\min \{m-a,b\}}\frac{{\binom{m-a}{z}}{\binom{a}{b-z}}%
	}{{\binom{m}{b}}}\left( \frac{z}{m}\right) \\
	=& \frac{a}{m}+\left( \frac{m-a}{m{\binom{m}{b}}}\right) \sum_{z=\max
		\{b-a,1\}}^{\min \{m-a,b\}}{\binom{a}{b-z}}{\binom{m-a-1}{z-1}} \\
	=& \frac{a}{m}+\left( \frac{m-a}{m{\binom{m}{b}}}\right) {\binom{m-1}{b-1}}
	\\
	=& \frac{a}{m}+\frac{b(m-a)}{m^{2}} \\
	=& \frac{a+b}{m}-\frac{ab}{m^{2}},
	\end{align*}%
	where the fifth equality follows from Vandermonde's identity.
\end{proof}

\subsection{Completion of the computation of $\beta$}

Recall that, given an arbitrary $F\in \mathcal{F}_{N_{i+1},k_{i+1}}$, $\beta (I_i,O_i,k_i^I, k_i^O ;I_{i+1},O_{i+1},k_{i+1}^I, k_{i+1}^O )$
counts, the number of pairs $(F^{\prime },\phi )$, $F^{\prime }\in \mathcal{F}_{N_{i},k_{i}}$, causing $F$ to arise.

As mentioned in the main text, we can construct all such pairs by choosing a quadruplet $%
	(a,b,c,d)$ of four non-negative integers with $a+c=k_i^I$ and  $b+d=k_i^O$,
	
	\begin{enumerate}
		\item [(i)] choosing $c$ old roots from $I_{i+1}$, and similarly, $d$ old roots
		from $O_{i+1}$,
		
		\item [(ii)] choosing $a$ old roots from $I_{i}\backslash I_{i+1}$ and similarly, $b
		$ old roots from $O_{i}\backslash O_{i+1}$,
		
		\item [(iii)] choosing a partition into cycles of $N_{i}\backslash N_{i+1}$, each
		cycle of which contains at least one old root from (ii),\footnote{Within round $i$ of TTC, one cannot have a cycle creating
			only with nodes that are not roots in the forest obtained at the beginning of round $i$. This is due to the simple fact that a
			forest is an acyclic graph. Thus, each cycle creating must contain at least one old root. Given that, by definition, these roots are eliminated
			from the set of available nodes in round $i+1$, these old roots that each cycle must contain must be from (ii).}
		
		\item [(iv)]  choosing a mapping of the $k_{i+1}^I+k_{i+1}^O$ new roots to $N_{i}\backslash
		N_{i+1}$.
	\end{enumerate}

Hence, setting $n=|I_i|$, $o=|O_i|$ and $m=|I_i|-|I_{i+1}|=|O_i|-|O_{i+1}|$, the number of such pairs is computed as
	\begin{align*}
	& \sum_{b+d=k_i^O } \sum_{a+c=k_i^I }{\binom{n-m}{c}}{\binom{o-m}{d}}{\binom{m}{a}}{%
		\binom{m}{b}}\left( \frac{a+b}{m}-\frac{ab}{m^{2}}\right) (m!)^{2}m^{k_{i+1}^I+k_{i+1}^O
	} \\
	=& (m!)^{2}m^{k_{i+1}^I+k_{i+1}^O }\times \left( \sum_{b+d=k_i^O } \sum_{a+c=k_i^I }{\binom{n-m}{c}}{%
		\binom{o-m}{d}}{\binom{m-1}{a-1}}{\binom{m}{b}}\right.  \\
	& \quad + \sum_{b+d=k_i^O } \sum_{a+c=k_i^I }{\binom{n-m}{c}}{\binom{o-m}{d}}{%
		\binom{m}{a}}{\binom{m-1}{b-1}} \\
	& \quad \left.- \sum_{b+d=k_i^O } \sum_{a+c=k_i^I }{\binom{n-m}{c}}{%
		\binom{o-m}{d}}{\binom{m-1}{a-1}}{\binom{m-1}{b-1}}\right)  \\
		=& (m!)^{2}m^{k_{i+1}^I+k_{i+1}^O }\times \left( {\binom{o}{k_i^O}}{%
			\binom{n-1}{k_i^I-1}}   +  {\binom{n}{k_i^I}}{\binom{o-1}{k_i^O-1}}-{\binom{n-1}{k_i^I-1}}{%
			\binom{o-1}{k_i^O-1}}\right)  \\
	=& \frac{(m!)^{2}m^{k_{i+1}^I+k_{i+1}^O}}{no}{\binom{n}{k_i^I}}{\binom{o}{k_i^O}}(nk_i^O+ok_i^I-k_i^Ik_i^O).
	\end{align*}%
	The first equality follows from  \Cref{trick}, along with the fact that
	there are $(m!)^{2}$ possible bipartite bijections between $n-m
	$ agents and $o-m$ objects, and the fact that there are $%
	m^{k_{i+1}^I}m^{k_{i+1}^{O}}$ ways in which new roots $%
	k_{i+1}^I$ agents and $k_{i+1}^O$ objects could have
	pointed to $2m$ cyclic vertices ($m$ on the individuals' side and $m$ on the
	objects' side), the third equality follows from Vandermonde's identity, and the last equality follows from simplifying terms.



\subsection{Computation of transition probability}\label{sec:computation transition}

Before going through the algebra, we need the following lemma characterizing the number of spanning rooted forests.

\begin{lemma}[\citet{jin-liu:04}]
\label{forest} 
For fixed specified root sets with $k$ roots in $I$ and $\ell$ roots in $O$, the number of spanning rooted forests is $f(n,o,k,\ell):=o^{n-k-1}n^{o-\ell-1}(\ell n+ko-k\ell)$. If only the numbers of roots are specified, this expression is multiplied by $\binom nk\binom o\ell$.
\end{lemma}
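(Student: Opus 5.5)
The count follows from the directed Matrix-Tree theorem applied to an auxiliary digraph in which all prescribed roots are merged into a single sink. Let $A\subseteq I$ with $|A|=k$ and $B\subseteq O$ with $|B|=\ell$ be the prescribed root sets. A spanning rooted forest with exactly these roots is the same thing as a choice of one outgoing edge for every non-root vertex: each agent in $I\setminus A$ points to some object in $O$, and each object in $O\setminus B$ points to some agent in $I$. The only requirement is that the resulting functional digraph has no directed cycle, so that every directed path ends at a vertex of $A\cup B$.

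Form a digraph $D$ by identifying all vertices of $A\cup B$ with a single new vertex $r$. Every non-root agent has $o$ outgoing arcs, one to each object; arcs into $B$ become arcs into $r$. Every non-root object has $n$ outgoing arcs, with arcs into $A$ becoming arcs into $r$. Forests with root set $A\cup B$ are then in bijection with spanning arborescences of $D$ oriented toward $r$. The bijection is well defined because each non-root keeps exactly one outgoing arc and acyclicity is preserved both ways. One small point needs care: when several arcs become parallel arcs into $r$, they must be kept as distinct arcs. Keeping them distinct is exactly what makes the bijection correct, since they correspond to pointing to different roots.

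By the directed Matrix-Tree theorem (out-degree Laplacian, row and column of $r$ deleted), the number of such arborescences is $\det M$, where
\[
M=\begin{pmatrix} o\,I_{n-k} & -J \\ -J^{\top} & n\,I_{o-\ell}\end{pmatrix},
\]
and $J$ is the $(n-k)\times(o-\ell)$ all-ones matrix. The diagonal blocks are the full out-degrees, $o$ and $n$. The off-diagonal blocks record the arcs between non-root agents and non-root objects.

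Taking a Schur complement gives
\[
\det M=o^{\,n-k}\,n^{\,o-\ell}\det\!\Big(I_{o-\ell}-\tfrac{1}{on}J^{\top}J\Big).
\]
Here $J^{\top}J=(n-k)J_{o-\ell}$, where $J_{o-\ell}$ is the $(o-\ell)\times(o-\ell)$ all-ones matrix. Its only nonzero eigenvalue is $(n-k)(o-\ell)$, so the determinant equals
\[
o^{\,n-k}\,n^{\,o-\ell}\Big(1-\tfrac{(n-k)(o-\ell)}{on}\Big)=o^{\,n-k-1}n^{\,o-\ell-1}\big(on-(n-k)(o-\ell)\big).
\]
Expanding, $on-(n-k)(o-\ell)=\ell n+ko-k\ell$, which is the stated formula. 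Degenerate cases, such as $k=n$ or $\ell=o$ where a block is empty, are checked directly.

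When only the numbers of roots are specified, the formula does not depend on which root sets are chosen. Summing over the $\binom nk\binom o\ell$ choices of $(A,B)$ therefore multiplies the count by that factor.

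The main obstacle is setting up the Matrix-Tree theorem correctly: using the out-degree version for in-arborescences, and keeping the merged-root multi-arcs in the diagonal degrees while letting them drop out of the reduced matrix. After that, the determinant is a routine rank-one update.

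As an alternative route, one could argue from a Lovász-type fact (as in the proof of \Cref{trick}) by induction on the number of non-root vertices, or cite the bipartite generalization of Cayley's formula directly.
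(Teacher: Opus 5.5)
Your proof is correct, and it takes a different route from the paper only because the paper gives no proof at all. The lemma is imported from Jin and Liu (2004) and used as a black box in computing $\Upsilon$ and $\Theta$.

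Your argument replaces that citation with a self-contained derivation: contract the prescribed root set $A\cup B$ to a single sink, apply the directed Matrix-Tree theorem, and evaluate the reduced Laplacian by a rank-one Schur complement. Every step checks out:
\begin{itemize}
\item Your functional-digraph description matches the paper's notion of a pointing forest. Your matrix $M$ is exactly the Laplacian of $K_{n,o}$ with the rows and columns of $A\cup B$ deleted. So you could equally invoke the undirected forest version of the Matrix-Tree theorem, orienting each tree toward its root.
\item The expression $o^{n-k}n^{o-\ell}\bigl(1-\frac{(n-k)(o-\ell)}{on}\bigr)$ expands to the stated formula. This includes the empty-block cases $k=n$ or $\ell=o$, as long as $n,o\ge 1$.
\end{itemize}

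The citation buys brevity. Your derivation buys transparency about why the count depends on $(A,B)$ only through $(k,\ell)$. That is exactly the symmetry the paper needs, both for the $\binom nk\binom o\ell$ factor and in the Markov argument.

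The merged-sink viewpoint also shortens a later step. In the transition-probability computation, the paper evaluates
\[
\sum_{\lambda^I,\lambda^O} m^{\lambda^I+\lambda^O}\binom{n-m}{\lambda^I}\binom{o-m}{\lambda^O}f(n-m,o-m,\lambda^I,\lambda^O)
\]
(with $f$ in its fixed-root form) by the binomial theorem. This sum is just the number of forests on all $n+o$ vertices whose root set is the $2m$ cyclic vertices. Each remaining vertex that points into the cleared set has $m$ choices of target there. So the sum is your determinant with $k=\ell=m$, which gives $m\,o^{n-m-1}n^{o-m-1}(n+o-m)$ in one line.
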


Now, we have
\begin{align*}
	\Upsilon =&  o^{k^I}n^{k^O}{\binom{n}{{k^I}}}{\binom{o}{%
			k^O}}f(n,o,k^I,k^O) \\
	=&  o^{k^I}n^{k^O}{\binom{n}{{k^I}}}{\binom{o}{%
			k^O}}o^{n-k^I-1}n^{o-k^O-1}(nk^O+ok^I-k^Ik^O) \\
	=&  {\binom{n}{k^I}}{\binom{o}{k^O}}%
	o^{n-1}n^{o-1}(nk^O+ok^I-k^Ik^O).
	\end{align*}%
where the second equality follows from  \Cref{forest}.

$\T$ is now computed as:
	\begin{align*}
	\Theta & ={\binom{n}{m}}{\binom{o}{m}}f(n-m,o-m,\lambda ^I,\lambda
	^O)\beta (n,o,k^I, k^O ;m,\lambda ^I,\lambda ^O) \\
	& ={\binom{n}{m}}{\binom{o}{m}}f(n-m,o-m,\lambda ^I,\lambda
	^O)\frac{(m!)^{2}m^{\lambda^I+\lambda^O}}{no}{\binom{n}{k^I}}{\binom{o}{k^O}}(nk^O+ok^I-k^Ik^O)\\
	& =f(n-m,o-m,\lambda ^I,\lambda ^O)\left( \frac{n!}{(n-m)!}\right) \left( \frac{o!}{(o-m)!}\right)
	\frac{m^{\lambda ^I+\lambda ^O}}{no}{\binom{n}{k^I}}{\binom{o}{k^O}}(nk^O+ok^I-k^Ik^O).
	\end{align*}%

Collecting the terms we obtain
	\begin{equation*}
	\mathbf{P}(n,o,k^I,k^O ;m,\lambda ^I,\lambda ^O)=\frac{\Theta}{\Upsilon}
	=\frac{1}{o^{n}n^{o}}\left( \frac{%
		n!}{(n-m)!}\right) \left( \frac{o!}{(o-m)!}\right) m^{\lambda ^I+\lambda
		^O}f(n-m,o-m,\lambda ^I,\lambda ^O).
	\end{equation*}%

Recall that the transition probability can be obtained by summing the expression over all possible $(\lambda^I,\lambda^O)$'s:

\begin{equation*}
	p_{n,o;m}:=\sum_{0\leq \lambda ^I\leq n-m,0\leq \lambda ^O\leq
		o-m}	\mathbf{P}(n,o,k^I,k^O ;m,\lambda ^I,\lambda ^O)\text{.}
	\end{equation*}

Hence, we obtain:
	\begin{align*}
	& \sum_{0\leq \lambda ^I\leq n-m}\sum_{0\leq \lambda ^O\leq
		o-m}m^{\lambda ^I}m^{\lambda ^O}f(n-m,o-m,\lambda ^I,\lambda ^O) \\
	=& \sum_{0\leq \lambda ^I\leq n-m}\sum_{0\leq \lambda ^O\leq
		o-m}m^{\lambda ^I}m^{\lambda ^O}{\binom{n-m}{\lambda ^I}}{\binom{o-m}{%
			\lambda ^O}}\times  \\
	& \qquad (o-m)^{n-m-\lambda ^I-1}(n-m)^{o-m-\lambda ^O-1}((n-m)\lambda
	^O+(o-m)\lambda ^I-\lambda ^I\lambda ^O) \\
	=& m\left( \sum_{0\leq \lambda ^I\leq n-m}{\binom{n-m}{\lambda ^I}}%
	m^{\lambda ^I}(o-m)^{n-m-\lambda ^I}\right) \left( \sum_{1\leq \lambda
		^O\leq o-m}{\binom{o-m-1}{\lambda ^O-1}}m^{\lambda
		^O-1}(n-m)^{o-m-\lambda ^O}\right)  \\
	+& m\left( \sum_{1\leq \lambda ^I\leq n-m}{\binom{n-m-1}{\lambda ^I-1}}%
	m^{\lambda ^I-1}(o-m)^{n-m-\lambda ^I}\right) \left( \sum_{0\leq \lambda
		^O\leq o-m}{\binom{o-m}{\lambda ^O}}m^{\lambda ^O}(n-m)^{o-m-\lambda
		^O}\right)  \\
	-& m^{2}\left( \sum_{1\leq \lambda ^I\leq n-m}{\binom{n-m-1}{\lambda ^I-1%
		}}m^{\lambda ^I-1}(o-m)^{n-m-\lambda ^I}\right) \left( \sum_{1\leq
		\lambda ^O\leq o-m}{\binom{o-m-1}{\lambda ^O-1}}m^{\lambda
		^O-1}(n-m)^{o-m-\lambda ^O}\right)  \\
	=& mo^{n-m}n^{o-m-1}+mo^{n-m-1}n^{o-m}-m^{2}o^{n-m-1}n^{o-m-1} \\
	=& mo^{n-m-1}n^{o-m-1}(n+o-m),
	\end{align*}%
	where the first equality follows from  \Cref{forest}, and the third
	follows from the Binomial Theorem.
	
	Multiplying the term $\frac{1}{o^{n}n^{o}}\left( \frac{n!}{(n-m)!}\right)
	\left( \frac{o!}{(o-m)!}\right) $, we get the formula stated in  \Cref{Markov}.

\section{Number of agents matched at each stage of TTC} \label{sec:number match at each stage}

Consider an arbitrary mapping, $g:I\rightarrow O$ and $h:O\rightarrow I$, defined over our finite sets $I$ and $O$. Note that such a mapping naturally  induces a bipartite digraph with vertices $I\cup O$ and directed edges with
the number of outgoing edges equal to the number of vertices, one for each vertex.
In this digraph, $%
i\in I$ points to $g(i)\in O$ while $o\in O$ points to $h(o)\in I$. Such a mapping will be called a bipartite mapping.  A \textbf{cycle} of a bipartite mapping is a cycle in the induced bipartite digraph, namely, vertices $(i_1, o_1, \ldots, i_{k-1}, o_{k-1}, i_k)$ whose nonterminal vertices are distinct and satisfy $g(i_j)=o_j$, $h(o_j)=i_{j+1}$ for $j=1,\ldots,k-1$, and $i_k=i_1$. A
\textbf{random bipartite mapping} selects the pair $(h,g)$ uniformly from $\mathcal{H}\times \mathcal{G}= I^{O}\times O^{I}$; the induced digraph is determined by this pair. Note
that a random bipartite mapping induces a random bipartite digraph
consisting of vertices $I\cup O$ and directed edges emanating from vertices, one for each vertex.
We say that a vertex in  
a digraph is \textbf{cyclic} if it is in a cycle of the digraph.

The following lemma states
the number of cyclic vertices in a random bipartite digraph induced by a
random bipartite mapping.

\begin{lemma}[\citet{jaworski;85}, Corollary 3]\label{jaw} The number $q$ of the cyclic vertices in a random bipartite
	digraph induced by a random bipartite mapping $g:I\rightarrow O$ and $%
	h:O\rightarrow I$ has an expected value of
	\begin{equation*}
	\mathbb{E}[q]:=2\sum_{i=1}^{\min\{o,n\}}\frac{(o)_{i}(n)_{i}}{o^{i}n^{i}},
	\end{equation*}%
	and a variance of
	\begin{equation*}
	8\sum_{i=1}^{\min\{o,n\}}\frac{(o)_{i}(n)_{i}}{o^{i}n^{i}}i-2\mathbb{E}[q]-\mathbb{E}^2[q]
	\end{equation*}
	where $(x)_{j}:=x(x-1)\cdots (x-j+1)$ and $(x)_0:=1$.
\end{lemma}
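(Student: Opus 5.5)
The statement is Jaworski's formula for the number $q$ of cyclic vertices of a random bipartite mapping $(g,h)$. I would prove it by linearity of expectation for the mean and by counting pairs of cyclic vertices for the variance.

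\textbf{Cycle counts.} For $1\le i\le\min\{o,n\}$, fix distinct agents $i_1,\dots,i_k$ and distinct objects $o_1,\dots,o_k$, with $k=i$. The probability that $g(i_j)=o_j$ and $h(o_j)=i_{j+1}$ for all $j$ (indices mod $k$) is $(on)^{-k}$, since these are $2k$ independent uniform coordinate constraints. The number of cyclic orderings of such a cycle through a \emph{fixed} agent is $(n-1)_{k-1}(o)_k$. Hence
\[
\Pr\{\text{a fixed agent lies on a cycle of length }2k\}=\frac{(n-1)_{k-1}(o)_k}{(on)^k}=\frac{(n)_k(o)_k}{n\,(on)^k}.
\]
These events are disjoint in $k$, because the functional digraph has out-degree one and so each vertex lies on at most one cycle. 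Summing over $k$ and over the $n$ agents, the expected number of cyclic agents is $\sum_k (o)_k(n)_k/(o^kn^k)$. Each cycle contains equally many agents and objects, so the expected number of cyclic objects is the same, and $\mathbb E[q]$ equals twice this sum.

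\textbf{Variance.} I would compute $\mathbb E[q^2]=\mathbb E[q]+\sum_{u\neq v}\Pr\{u,v \text{ both cyclic}\}$. The ordered pairs split into two types.
\begin{enumerate}
\item \emph{Same cycle.} A cycle of length $2k$ contributes $2k(2k-1)$ ordered pairs. The expected number of such cycles is $\frac1k(o)_k(n)_k/(on)^k$, so this type contributes $\sum_k 2(2k-1)(o)_k(n)_k/(on)^k$.
\item \emph{Distinct disjoint cycles.} Take two vertex-disjoint cycles, of lengths $2k$ and $2l$. The probability that both are present is $(on)^{-(k+l)}$. The number of choices of the two cycles is $\frac{1}{kl}(n)_{k+l}(o)_{k+l}$, and the cycles contribute $(2k)(2l)$ ordered pairs. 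This type therefore sums to $4\sum_{k,l}(n)_{k+l}(o)_{k+l}/(on)^{k+l}$. Grouping by $s=k+l$ gives $4\sum_s (s-1)(n)_s(o)_s/(on)^s$.
\end{enumerate}
Adding the two types and $\mathbb E[q]=2\sum_s (n)_s(o)_s/(on)^s$ yields
\[
\mathbb E[q^2]=\sum_s\bigl[2+2(2s-1)+4(s-1)\bigr]\frac{(o)_s(n)_s}{(on)^s}=8\sum_s s\,\frac{(o)_s(n)_s}{(on)^s}-2\,\mathbb E[q].
\]
Subtracting $\mathbb E^2[q]$ gives the stated variance.

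\textbf{Main obstacle.} The delicate step is getting the combinatorial multiplicities right in the disjoint-cycle term. This means dividing correctly by the cyclic rotations $k$ and $l$, using falling factorials $(n)_{k+l}$ and $(o)_{k+l}$ for disjoint vertex selections, and checking that the regrouped coefficients collapse exactly to $8s$. I would verify the result on the small cases $n=o=1$ and $n=o=2$.
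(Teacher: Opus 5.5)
Your derivation is correct. The counts you use are all right: $(n-1)_{k-1}(o)_k$ cycles of length $2k$ through a fixed agent, $(n)_k(o)_k/k$ candidate $2k$-cycles, and $(n)_{k+l}(o)_{k+l}/(kl)$ ordered pairs of vertex-disjoint cycles. The coefficients do collapse to $2+2(2s-1)+4(s-1)=8s-4$. This gives $\mathbb E[q^2]=8\sum_s s\,a_s-2\,\mathbb E[q]$ with $a_s=(o)_s(n)_s/(on)^s$. For $n=o=2$ it yields $\mathbb E[q]=5/2$ and variance $3/4$, which matches a direct enumeration of the $16$ mappings.

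The paper itself offers no proof; it imports both the exact formulas and their asymptotic forms from Jaworski (1985, Corollary 3). Your first- and second-moment count over candidate cycles is therefore a self-contained, elementary substitute for that citation. It rests on only two facts: a given family of vertex-disjoint cycles containing $k$ agents in total is present with probability $(on)^{-k}$, and distinct present cycles are automatically vertex-disjoint. The second fact is also what justifies restricting the second type to disjoint pairs. It likewise lets you sum probabilities over all cycles through a fixed agent, within each $k$ and not just across $k$ as you wrote; alternatively, you can bypass disjointness there by linearity of expectation.

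What the citation supplies that your route does not is the asymptotics the paper actually uses downstream. These are $\mathbb E[q]\sim\sqrt{2\pi no/(n+o)}$ in \Cref{lem:stopping} and the order-$no/(n+o)$ variance in the Online Appendix. From your exact formulas they would require a separate Laplace-type estimate of $\sum_s a_s$, for example via $a_s\approx\exp\{-s^2(n+o)/(2no)\}$.

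A cosmetic point: you reuse $i$ both as the summation index and as an agent label, so rename the index.
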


It is clear that at the beginning of the first round of TTC, if there are
$n$ agents and $o$ objects in the economy, the total number of cyclic
vertices has the same distribution as $q$. Since every cycle contains the
same number of agents and objects, the number $M_{n,o}$ of agents cleared
in that round is $q/2$. Appealing to \Cref{Markov}, the same distributional
statement applies to any round of TTC that begins with $n$ agents and $o$
objects remaining.

Let
\[
        a_i(n,o):=\frac{(o)_i(n)_i}{o^i n^i}.
\]
By Lemma~\ref{jaw},
\[
        \mathbb E[q]
        =
        2\sum_{i=1}^{\min\{o,n\}}a_i(n,o),
\]
and
\[
        \operatorname{Var}(q)
        =
        8\sum_{i=1}^{\min\{o,n\}}i\,a_i(n,o)
        -
        2\mathbb E[q]
        -
        \mathbb E[q]^2.
\]
Therefore,
\[
        \mathbb E[M_{n,o}]
        =
        \sum_{i=1}^{\min\{o,n\}}a_i(n,o),
\]
and
\[
        \operatorname{Var}(M_{n,o})
        =
        2\sum_{i=1}^{\min\{o,n\}}i\,a_i(n,o)
        -
        \mathbb E[M_{n,o}]
        -
        \mathbb E[M_{n,o}]^2.
\]

\citet{jaworski;85} also shows that, as $n$ and $o$ grow,
\[
        \mathbb E[q]
        =
        \sqrt{2\pi\frac{no}{n+o}}\,(1+o(1)),
\]
and
\[
        \operatorname{Var}(q)
        =
        (4-\pi)\frac{2no}{n+o}\,(1+o(1)).
\]
Hence
\[
        \mathbb E[M_{n,o}]
        =
        \sqrt{\frac{\pi no}{2(n+o)}}\,(1+o(1)),
\]
and
\[
        \operatorname{Var}(M_{n,o})
        =
        (4-\pi)\frac{no}{2(n+o)}\,(1+o(1)).
\]
In the balanced case $n=o$, this gives
\[
        \mathbb E[M_{n,n}]
        \sim
        \sqrt{\frac{\pi n}{4}},
        \qquad
        \operatorname{Var}(M_{n,n})
        \sim
        \left(1-\frac{\pi}{4}\right)n.
\]
Equivalently,
\[
        \frac{M_{n,o}}{\sqrt{\pi no/[2(n+o)]}}
\]
has asymptotic mean one and asymptotic variance $(4-\pi)/\pi$.

\section{Number of Rounds Required for TTC and Shapley-Scarf TTC to Conclude} \label{sec:Rounds}

\citet{frieze/pittel;95} analyze Shapley-Scarf TTC.  They obtain a similar Markov chain result for Shapley-Scarf TTC. Our result allows us to compare the two Markov chains. Specifically, we can order the two chains in terms of the likelihood ratio
order. To see this, let us recall the transition probabilities of the
Markov chain obtained by \citet{frieze/pittel;95}:%
\begin{equation*}
\hat{p}_{n;m}=\frac{n!}{n^{m}(n-m)!}\frac{m}{n}
\end{equation*}%
By  \Cref{Markov}, we obtain (assuming $n=o$):%
\begin{eqnarray*}
	p_{n;m} &:&=p_{n,n;m}=\left( \frac{m}{(n)^{2(m+1)}}\right) \left( \frac{n!}{%
		(n-m)!}\right) ^{2}(2n-m) \\
	&=&\left( \frac{n!}{n^{m}(n-m)!}\right) ^{2}\left( \frac{m(2n-m)}{n^{2}}%
	\right) .
\end{eqnarray*}

Let us compare the two distributions in terms of likelihood ratio order. Fix
$n\geq 1$ and any $m^{\prime }\geq m$. It is easy to check that%
\begin{equation*}
\frac{\hat{p}_{n,m^{\prime }}}{\hat{p}_{n,m}}=\frac{n^{m}(n-m)!}{%
	n^{m^{\prime }}(n-m^{\prime })!}\frac{m^{\prime }}{m}
\end{equation*}%
while%
\begin{equation*}
\frac{p_{n,m^{\prime }}}{p_{n,m}}=\left( \frac{n^{m}(n-m)!}{n^{m^{\prime
		}}(n-m^{\prime })!}\right) ^{2}\left(\frac{m^{\prime }}{m}\right)\left(\frac{2n-m^{\prime }}{%
		2n-m}\right)\text{.}
	\end{equation*}%
	Now, observe that%
	\begin{eqnarray*}
		\left( \frac{\hat{p}_{n,m'}}{\hat{p}_{n,m}}\right)^{-1}
		\frac{
			p_{n,m'}}{p_{n,m}} &=&\left(\frac{1}{n^{m'-m}}\right)
		\left( \frac{(n-m)!}{(n-m')!}\right) \frac{(2n-m^{\prime })}{(2n-m)}
		\\
		&=&\frac{(n-m)(n-m-1)...(n-m^{\prime }+1)}{n^{m^{\prime }-m}}\left(\frac{2n-m^{\prime }}{2n-m}\right)\leq 1.
	\end{eqnarray*}%
	This proves that for any $n$, the distribution $\hat{p}_{n,\cdot }$
	dominates $p_{n,\cdot }$ in terms of likelihood ratio order. {This means that, given that $n$ agents/objects are remaining at a given round, for any $m$, the probability that  $m$ or fewer agents are part of a cycle is smaller under Shapley-Scarf TTC  than under TTC. That is, more agents and objects are likely to be cleared under Shapley-Scarf TTC than under TTC in each round,  starting from the same number of agents/objects. This implies that TTC tends to last longer  than the Shapley-Scarf TTC: for each $t \geq 1$, the probability that TTC stops before Round $t$ is smaller than the probability that Shapley-Scarf TTC stops before Round $t$.  More precisely, the random round at which TTC stops first-order stochastically dominates the random round at which the Shapley-Scarf TTC stops.}

\section{Simulation Protocols and Robustness Figures}
\label{oa:simulation-details}

This section records the simulation protocol behind \Cref{fig}, \Cref{fig2}, \Cref{fig3}, and \Cref{fig_bis}.
  In each simulated market, we draw the
relevant preference and priority primitives, run TTC and RSD on the same
realized market, and compute the relevant justified-envy statistics.

For the baseline one-to-one simulations, each agent's preference ordering
over objects and each object's priority ordering over agents are drawn
independently and uniformly. For RSD, an additional serial order is drawn
uniformly. For each market size $n$, the reported statistic is averaged over
100 independent market draws.

For the correlated-preference simulations, utilities take the form
\[
        u_i(o)=u_o+\xi_{io},
\]
where $u_o$ and $\xi_{io}$ are drawn independently from $U[0,1]$.
Preferences are induced by the resulting utility ranking. Object priorities
are generated from independent priority scores.

For the many-to-one simulations, the set of object types is fixed as
specified in the figure note, and each type has the stated capacity. We implement TTC and RSD in the usual way (e.g., \cite{abdulkadiroglu/sonmez:03}) in the induced seat-level economy. The reported
statistics are then aggregated over agents and object types according to the
definitions in the main text.

The main simulation metric in  Panel (a) of \Cref{fig}, \Cref{fig2}, \Cref{fig3}, and Panel (a) of \Cref{fig_bis} is
\[
\frac{
\sum_{i,j}\mathbf 1\{i\text{ justifiably envies }j\}
}{
\sum_{i,j}\mathbf 1\{i\text{ envies }j\}
}.
\]
Panel (b) of  \Cref{fig_bis} instead reports the fraction of agents who
experience justified envy among agents who experience envy.




\end{document}